\documentclass[aps,pra,reprint,twocolumn,superscriptaddress,floatfix,nofootinbib,longbibliography]{revtex4-2}
\usepackage{graphicx,amsmath,amsfonts,amssymb,amsthm,xr}
\usepackage{epsfig,amsmath,amssymb,color,dsfont,upgreek,physics}
\usepackage{mathrsfs}
\usepackage{mathtools}
\usepackage{bbold}
\usepackage{float}

\usepackage[bookmarks=true,colorlinks,linkcolor=OrangeRed,urlcolor=NavyBlue,citecolor=RoyalBlue]{hyperref}
\usepackage[dvipsnames]{xcolor}
\usepackage{orcidlink}
\usepackage{pgfplots}

\usepackage{graphicx}
\usepackage{dcolumn}
\usepackage{bm}
\usepackage{color}

\usepackage{graphicx}

\usepackage[capitalise]{cleveref}
\newtheorem{theorem}{Theorem}
\newtheorem{lemma}[theorem]{Lemma}

\theoremstyle{plain}

\newcommand{\ie}{\emph{i.e.}}

\newcommand{\npf}{N_{\mathrm{PF}}}

\crefname{section}{Sec.}{Secs.}
\crefname{figure}{Fig.}{Figs.}
\crefname{appendix}{App.}{Apps.}
\crefname{equation}{Eq.}{Eqs.}
\crefname{definition}{Definition}{Defs.}
\crefname{theorem}{Theorem}{Thms.}
\crefname{corollary}{Corollary}{Cors.}
\crefname{remark}{Remark}{Remarks}
\crefname{proposition}{Prop.}{Props.}

\begin{document}
\title{Obtaining continuum physics from dynamical simulations of Hamiltonian lattice gauge theories
}

\author{Christopher F. Kane$^{\orcidlink{0000-0002-2020-8971}}$}
\email{cfkane24@umd.edu}
\affiliation{Maryland Center for Fundamental Physics and Department of Physics, University of Maryland, College Park, MD 20742, USA}
\affiliation{Joint Center for Quantum Information and Computer Science, University of Maryland, College Park,
Maryland 20742, USA.}

\author{Siddharth Hariprakash$^{\orcidlink{0009-0000-3524-4724}}$}
\email{siddharth_hari@berkeley.edu}
\affiliation{Physics Division, Lawrence Berkeley National Laboratory, Berkeley, California 94720, USA} 
\affiliation{Leinweber Institute for Theoretical Physics and Department of Physics, University of California, Berkeley, California 94720, USA}

\author{Christian W.~Bauer$^{\orcidlink{0000-0001-9820-5810}}$}
\email{cwbauer@lbl.gov}
\affiliation{Physics Division, Lawrence Berkeley National Laboratory, Berkeley, California 94720, USA}
\affiliation{Leinweber Institute for Theoretical Physics and Department of Physics, University of California, Berkeley, California 94720, USA}

\date{\today}

\begin{abstract}

Taking the continuum limit is essential for extracting physical observables from quantum simulations of lattice gauge theories. 
Achieving the correct continuum limit requires careful control of all systematic uncertainties, including those arising from approximate implementations of the time evolution operator. 
In this work, we review existing approaches based on renormalization techniques, and point out their limitations.
To overcome these limitations, we introduce a new general framework—the Statistically-Bounded Time Evolution (SBTE) protocol—for rigorously controlling the impact of approximate time evolution on the continuum limit. 
The central insight is that, since exact time evolution introduces no UV divergences, errors from approximate evolution can be treated as a source of systematic uncertainty that can be neglected if reduced below the working statistical uncertainty. 
We show that, using the SBTE protocol, which prescribes driving the approximate time evolution error below the working statistical uncertainty, leads to a simplified renormalization procedure. 
Furthermore, we show that, due to the existence of rigorous error bounds, one can guarantee \emph{a priori} that such errors are negligible and do not affect the continuum limit.
Ultimately, our protocol lays the foundation for performing systematic and fair comparisons between different simulation algorithms for lattice gauge theory simulations.

\end{abstract}

\maketitle

\section{Introduction}

The simulation of lattice gauge theories (LGTs) is a central challenge toward understanding strongly correlated quantum systems, with profound implications for high-energy physics, nuclear physics, and condensed matter systems.
Quantum computers hold the promise of enabling simulations of LGTs that are currently beyond the reach of classical simulations methods.
This advantage was first shown for scalar field theories in the pioneering work of Jordan, Lee and Preskill~\cite{Jordan:2012xnu}.
Since then, there has been remarkable progress towards simulating LGTs and lattice field theories more generally~\cite{Bauer:2022hpo,Bauer:2023qgm,DiMeglio:2023nsa}, including the study of U(1)~\cite{Banerjee:2012pg,Hauke:2013jga,Zohar:2013zla,Kuhn:2014rha,Kasper:2015cca,Zohar:2015hwa,Martinez:2016yna,Yang:2016hjn,Kokail:2018eiw,Klco:2018kyo,Lu:2018pjk,Kaplan:2018vnj,Mil:2019pbt,Davoudi:2019bhy,Surace:2019dtp,Haase:2020kaj,Luo:2019vmi,Shaw:2020udc,Yang:2020yer,Ott:2020ycj,Paulson:2020zjd,Nguyen:2021hyk,Zhou:2021kdl,Riechert:2021ink,Bauer:2021gek,Kane:2022ejm,Grabowska:2022uos,zhang2023observation,Farrell:2023fgd,Nagano:2023uaq, Gustafson:2024bww, Crane:2024tlj}, ${\rm SU}(2)$~\cite{Zohar:2012xf,Stannigel:2013zka,Mezzacapo:2015bra,Mathur:2015wba,Raychowdhury:2018osk,Raychowdhury:2019iki,Klco:2019evd,Dasgupta:2020itb,Davoudi:2020yln,Atas:2021ext,ARahman:2021ktn,Osborne:2022jxq,halimeh2022gauge,ARahman:2022tkr,zache2023quantum,Alexandru:2023qzd,DAndrea:2023qnr,Turro:2024pxu,Grabowska:2024emw,Burbano:2024uvn} 
and ${\rm SU}(3)$~\cite{Anishetty:2009nh,Alexandru:2019nsa,Ciavarella:2021nmj,Farrell:2022wyt,Farrell:2022vyh,Atas:2022dqm,Ciavarella:2021lel,hayata2023qdeformedformulationhamiltonian,Farrell:2024fit,Ciavarella:2024fzw, Gustafson:2024kym, Balaji:2025afl, Ciavarella:2025bsg} gauge theories, as well as work towards simulating light-front formulations~\cite{Kreshchuk:2020kcz,kreshchuk2022quantum, kreshchuk2023simulatingscatteringcompositeparticles}, quantum parton showers~\cite{Bauer:2019qxa,Bauer:2023ujy,Chigusa:2022act}, improved lattice Hamiltonians~\cite{Carena:2022kpg, Gustafson:2023aai, Ciavarella:2023mfc, Illa:2025dou, Illa:2025njz}, lattice formulations of nuclear effective field theories~\cite{Roggero:2019myu, Watson:2023oov}, understanding finite volume errors~\cite{Burbano:2025pef},
and much more~\cite{Tagliacozzo:2012df,Bazavov:2015kka,Jordan:2017lea,Gonzalez-Cuadra:2017lvz,Gorg:2018xyc,Lamm:2019bik,Zohar:2019ygc,Buser:2020cvn,Barata:2020jtq,Stryker:2021asy,Davoudi:2024osg, Mueller:2024mmk, Davoudi:2025tbi,Surace:2024bht, De:2024smi,Davoudi:2024wyv,Bennewitz:2024ixi, Mueller:2022xbg, Kurkcuoglu:2024cfv, Carena:2024dzu, Ciavarella:2024lsp, Ciavarella:2025zqf, Ingoldby:2025bdb, Gupta:2025xti,Bauer:2025nzf,Alvi:2022fkk}.
There has also been progress in algorithmic studies~\cite{Kan:2021xfc,Alexandru:2022son,tong2022provably,Davoudi:2022xmb,Kane:2023jdo,Hariprakash:2023tla,Rhodes:2024zbr,Du:2024ixj,Li:2024lrl,Kane:2024odt, Murairi:2024xpc, Assi:2024pdn,  Lamm:2024jnl, Gomes:2024tup, Hardy:2024ric, Simon:2025pbo, Decker:2025jts, Davoudi:2025rdv} and tensor network studies~\cite{Pichler:2015yqa,Banuls:2017ena,Banuls:2018jag,Banuls:2022vxp, Mathew:2025fim, Belyansky:2023rgh}.
Despite this progress, there is still progress to be made to fully understand what combination of Hamiltonian formulations, quantum algorithms, quantum computing platforms, etc., will result in the fastest path to quantum advantage.

One crucial ingredient  in any LGT calculation is taking the continuum limit, \ie, taking the lattice spacing $a \to 0$.
Doing so requires several steps, including tuning the bare parameters in the Hamiltonian as a function of the lattice spacing $a$, renormalizing the operators being studied, and determining the lattice spacing through a scale setting procedure.
Each of these steps in principle requires calculating a physical observable, which implies that the cost of each step could be comparable to the cost of state preparation and time evolution.
While strategies for reducing the quantum cost of renormalization through the use of classical Euclidean LGT simulations~\cite{Clemente:2022cka, Crippa:2024cqr, Carena:2021ltu, Funcke:2022opx, Gross:2025qae} and perturbation theory~\cite{Carena:2022hpz} have been proposed, the full cost of calculating continuum physics using quantum simulations of LGTs is not fully understood. 

One important question to ask is how error from approximate time evolution impacts the continuum limit.
This issue was studied in Ref.~\cite{Carena:2021ltu} in the context of simulations using product formulas (PFs).
Using the Baker–Campbell–Hausdorff formula, it was shown that the effective Hamiltonian one actually time evolves according to with PF-based time evolution is closely related to the transfer matrix in Euclidean LGTs.
Through this connection, approximate time evolution can be interpreted as exact evolution on a temporal lattice with spacing $\delta_t$, which is proportional to the Euclidean temporal spacing $a_0$. 
The associated error from finite $a_t$ can then be removed by extrapolating $\delta_t \to 0$ (or fixing the anisotropy $\xi \equiv a/\delta_t$ and taking $a \to 0$), following a procedure analogous to the continuum limit $a \to 0$.

While this is a valid approach, it is currently applicable only to PF-based simulations and does not extend to the broader class of post-Trotter simulation algorithms~\cite{Childs:2012gwh, Low:2016sck, Low:2016znh, PhysRevLett.123.070503, Li:2016vmf, Haah:2018ekc, Yuan:2018jdl, Motlagh:2023oqc, Zeng:2022pim, Watson:2024yqs, Watson:2024dvw, Chakraborty:2025sry}.
This implies that it is currently not possible to perform direct comparisons of the full cost of taking the continuum limit when using different time evolution algorithms.
This is particularly important given the recent efforts for determining which formulations and algorithms which result in the smallest gate counts~\cite{Rhodes:2024zbr, Kan:2021xfc, Ciavarella:2025bsg, Balaji:2025afl, Davoudi:2022xmb}.
In order to broaden the class of simulation algorithms at our disposal, including those not yet developed, an alternative approach is necessary. 

In this work, we develop an alternative approach to controlling the impact of approximate time evolution on the continuum limit.
The key insight is that, while approximate time evolution may be interpreted through the lens of renormalization, no additional UV divergences are introduced in the limit of exact time evolution.
This suggests that, rather than invoking the full machinery of renormalization, as in Ref.~\cite{Carena:2021ltu}, the error from approximate time evolution can simply be treated as a systematic uncertainty.
As long as this error remains negligible compared to the statistical precision of the simulation, its effect on the continuum limit is negligible and can be ignored.
In particular, we will argue that this applies to any unphysical behavior introduced from approximate time evolution, and that no problem analogous to the notorious fermion doubling problem is possible in the limit of exact time evolution.
We now state the central question we address in this work:

\vspace{1em} 
\emph{For a general time evolution algorithm, what is the additional computational cost required to ensure that the time evolution error has a negligible affect on the continuum limit relative to the working statistical precision?}
\vspace{1em} 

In this work, we provide a general framework that can be used to answer this question for any time evolution algorithm \emph{a priori}, that is, before performing any simulations.
We show that, as long as the systematic error incurred for each step in the renormalization procedure is below the working statistical precision, the need to treat the approximate time evolution error using renormalization is no longer necessary, which results in significant simplifications in taking the continuum limit.
We coin this proposal Statistically-Bounded Time Evolution (SBTE) Protocol.

To demonstrate our procedure, we apply our formalism to simulations to two common simulation algorithms, namely product formulas and those based on Quantum Signal Processing (QSP). 
We find that, although the additional cost required to reduce the time evolution error below the working statistical precision can be significant when using Trotter-based methods, it becomes negligible with QSP due to its provably optimal scaling with the approximation error. 
While this result is noteworthy, a comprehensive cost analysis of the entire renormalization procedure is still necessary to determine which approach yields the lowest overall quantum cost. 
Looking ahead, our framework can be applied to any algorithm with known rigorous upper bounds on the error, and thus serves as a foundation for comparing the total cost of calculating continuum physics from quantum simulations of LGTs using different time evolution algorithms.

In addition to this general formalism, we propose an alternative approach to renormalization, differing from that used in Ref.~\cite{Carena:2021ltu}, when employing product-formulas.
Our approach relies on the observation that simulations using PFs are equivalent to simulations with extra irrelevant operators in the Hamiltonian that disappear in the continuum limit.
Using this fact, we argue that, rather than tuning the bare parameters in the Hamiltonian as a function of both the lattice spacing $a$ and Trotter step-size $\delta t$, one can simply tune the bare parameters assuming $\delta t=0$ and only introduce systematic errors that scale as $\mathcal{O}(a^p)$, where $p$ is the order of the PF used.
This approach offers additional flexibility with the methods used to tune the bare parameters.

The rest of this work is organized as follows.
In Sec.~\ref{sec:high_level_cont_lim}, we provide a high-level review of the continuum limit and explain the main concept of our strategy.
Then, in Sec.~\ref{sec:renorm_review}, we provide a more detailed review of the steps necessary to renormalize the theory and properly take the continuum limit.
We start with a review aimed at non-experts of the main steps for LGT simulations assuming exact time evolution.
From there, we discuss the procedure in Ref.~\cite{Carena:2021ltu} for taking the simultaneous $a, \delta_t \to 0$ limit when using PFs for simulation, and describe our alternative renormalization approach.
We then show that, if one drives the time evolution error below the working statistical precision, it can be ignored and renormalization simplifies to the exact time evolution procedure.
In Sec.~\ref{sec:complexity}, we show how to bound the additional cost of driving the time evolution error below the shot noise, and then apply our methods to simulation via product formulas and Quantum Signal Processing (QSP).
We will show that, while this procedure requires non-negligible overhead when using PFs, the additional cost is entirely negligible when using QSP.
We present our conclusions in Sec.~\ref{sec:conclusion}.

\section{Obtaining the continuum limit \label{sec:high_level_cont_lim}}

Calculations in lattice field theories provide a numerical technique to calculate observables defined in a continuum quantum field theory. 
The lattice field theory can be viewed as an effective theory of the underlying continuum field theory one aims to simulate, where the lattice spacing and lattice size provide an ultraviolet (UV) and infrared cutoff (IR), respectively. 
Numerical results obtained from a lattice simulation therefore differ from the continuum results by systematic effects arising from the non-zero lattice spacing and finite lattice size.
In addition, one needs to control systematic uncertainties from any other approximation that one might make in the lattice calculation, as well as statistical uncertainties made in the numerical lattice calculation.

The most straightforward way to deal with the systematic uncertainties is to make them subdominant to the statistical uncertainties.
In particular, if a particular systematic effect can be bounded to be significantly below the statistical uncertainties, its effect can be neglected in the lattice calculation.
For example, since the effects from finite lattice spacing are known to vanish polynomially with the size of the lattice spacing, discretization errors in a calculation performed at sufficiently small lattice spacing $a$ are guaranteed to be below the statistical uncertainties. 

In reality, many systematic effects cannot be treated this way.
First, while one often knows the scaling of the systematic effects with certain parameters, the actual size of them can not be bounded reliably. 
Second, the computational cost for performing calculations for which certain systematic effects are negligible are often prohibitively expensive. 
For this reason, many systematic effects are dealt with by peforming extrapolations. 
For example, one typically performs calculations at various values of the lattice spacing $a$, all of which have non-neglible systematics. 
However,  one can extrapolate these results to $a = 0$ using the known polynomial scaling in $a$. 

An important property of quantum field theories is that they are UV divergent. 
This implies that physics from arbitrarily short distance effects contribute to a given observable, such that the integration over all possible energies contributing is divergent. 
This does not cause any problem with the predictability of such quantum field theories, since these divergences can be absorbed into the parameters of the theory, such as masses and coupling constants, rendering physical observables finite.
As will be discussed in more detail in~\cref{sec:renorm_review}, the lattice spacing regulates the divergences at short distances, causing the parameters of the effective lattice theory to depend on the lattice spacing used.
In other words, coupling constants and masses used in the lattice simulation depend on the lattice cutoff used
\begin{align}
g \to g(a)\,, \quad m \to m(a)\,, \quad  \ldots
\,,
\end{align}
and the value of these bare parameters need to be adjusted to yield physically meaningful results.

A systematic effect that arises in Hamiltonian lattice gauge theories is due to approximations that typically need to be made when obtaining the time-evolution operator from the Hamiltonian $H$ of the lattice gauge theory. 
This is due to the fact that it is rarely possible to exponentiate the Hamiltonian exactly to obtain the unitary time evolution operator $\exp(-i H t)$. 
The most common approximation scheme used is based on product formulas, with the simplest approximation to the time evolution operator of a Hamiltonian
\begin{align}
H = \sum_i H_i
\,,
\end{align}
given by
\begin{align}
\label{eq:product_formula_def}
e^{-i H t} = \left[e^{-i H \delta t} \right]^{\npf} = \left[\prod_i e^{-i H_i \delta t} \right]^{\npf} + {\cal O}\left(\delta t\right)^2
\,,
\end{align}
with
\begin{align}
\label{eq:delta_t_def}
\delta t \equiv t / \npf, \hspace{1mm} \mathrm{where} \hspace{1mm} \npf \in \mathbb{N}
\,.
\end{align}
The approximation above is a first-order ($p = 1$) product formula, and higher order product formulas with larger values of $p$ are possible, with the deviation from the exact time-evolution operator scaling as $(\delta t)^{p}$ for fixed $t$. 

Choosing a non-zero value of $\delta t$ can affect the UV behavior of the theory meaning that in general the parameters of the lattice theory depend on this parameter as well
\begin{align}
g \to g(a, \delta t)\,, \quad m \to m(a, \delta t)\,, \quad \ldots
\,.
\end{align}
This implies that additional calculations might need to be performed to determine the correct parameter values, adding to the computational cost. 
In the limit $\delta t \to 0$ the product formula approximation of the time evolution operator becomes exact, such that one has
\begin{align}
g(a, 0) = g(a)\,, \quad m(a, 0) = m(a)\,, \quad \ldots
\,,
\end{align}
In other words, taking the limit of exact time evolution does not introduce any UV divergences.
We will discuss the implications of this in Sec.~\ref{sec:syst_unc}.

One way to deal with this systematic uncertainty is therefore to choose a small enough value of $\delta t$ (large enough value of $N_{\rm PF}$) to ensure this error is below the systematic uncertainties and therefore negligible. 
One issue with this approach is that, while one knows the scaling of the error with $\delta t$, the absolute size depends on the details of the Hamiltonian being simulated and is difficult to tightly bound \emph{a priori}. 
Furthermore, the computational cost required to achieve an approximation to the exact time evolution operator with error $\epsilon$ grows as $\epsilon^{-1/p}\exp(p)$, where $\exp(p)$ comes from the fact that the implementation of a $p$'th order Suzuki-Trotter product formula requires exponential resources in $p$. This makes it difficult to reach a sufficiently small value of $\epsilon$.

Given these complications, an alternative procedure to deal with the systematic effect from finite values of $\delta t$ was proposed in Ref.~\cite{Carena:2021ltu}.
In their proposal one keeps the ratio of $\delta t$ and the lattice spacing fixed and takes the simultaneous limit $a \to 0$ and $\delta t \to 0$. 
This is similar to a Euclidean lattice calculation, where both space and Euclidean time are discretized in a space-time lattice, with a fixed ratio of the Euclidean time-like and space-like lattice spacing $a_0$ and $a$. 
While the additional dependence on $\delta t$ can change the value of the bare parameters that have to be chosen, the authors of Ref.~\cite{Carena:2021ltu} went on to show that one can determine the values of these parameters from classical calculations on Euclidean lattices through analytic continuation. 
This however introduces additional systematic uncertainties that need to be quantified.
We discuss this approach in more detail in~\cref{subsec:Euclidean}, and also introduce several variants to the approach proposed in Ref.~\cite{Carena:2021ltu}.

It is important to realize, however, that approximations to the time evolution operator beyond product formulas exist.
A notable example are methods that use the general idea of Quantum Signal Processing (QSP), which allows one to construct a wide class of polynomials of operators. 
These techniques give rise to an approximation to the time-evolution operator, with the computational complexity scaling only as $\log \frac{1}{\epsilon}$\cite{Low:2016sck, Low:2016znh,Motlagh:2023oqc,Gilyen:2018khw}. 

A worry of the community has been\footnote{We thank Henry Lamm for many useful discussions on this point.} whether or not deviations of QSP techniques from the exact time evolution operator can be captured by an effective Hamiltonian picture, as is the case when using PFs.
One possible issue is that a polynomial approximation to the time-evolution operator is generally not unitary (this issue is handled in QSP approaches through ancilla qubits and post-selection protocols), and it is therefore not obvious how to construct a Hermitian effective Hamiltonian describing this non-unitary evolution.
In the absence of an effective theory description, it might be difficult to understand the dependence of the bare parameters of the lattice theory on the approximation used. 
One might therefore worry that a lattice calculation obtained using QSP techniques can not easily be related to the continuum physics one aims to extract.

However, as we argue in this work, if the error from the approximation to time evolution is significantly below all other systematic and statistical uncertainties in the problem, it can not affect the results of the simulation in a meaningful way.
To the accuracy of the calculation, the bare parameters will therefore only have a dependence on the spatial lattice spacing $a$. 
This implies that there is no in-principle impediment to using QSP techniques, or any other time evolution algorithms, for approximating the time evolution of a quantum field theory.
Whether such an approach is favorable compared to the more standard techniques based on product formulas depends on details of the approaches chosen, and need to be investigated carefully.

In the following sections we will explain the two basic topics raised in this section.
In~\cref{sec:renorm_review} we discuss the renormalization of the parameters of the theory and how their numerical values depend on the lattice spacing and the approximation to the time-evolution operator. 
In~\cref{sec:complexity} we study the scaling of the computational cost with the deviation from the exact time-evolution operator for both product formula and QSP based approaches.

\section{Renormalization in Hamiltonian Lattice Gauge Theories \label{sec:renorm_review}}

In this section, we start with a high-level overview of renormalization in quantum field theories aimed at non-experts.
We then more formally discuss how to renormalize LGT calculations and take the continuum limit where time evolution is performed exactly.
From there, we discuss how the process changes when time evolution is performed approximately.

The Kogut-Susskind (KS) Hamiltonian of a pure gauge theory is given by
\begin{align}
    H_{\rm KS} &= \frac{1}{a} \left[ \frac{g_t^2}{2} \sum_{\ell} \left(\hat E_\ell\right)^2 
- \frac{1}{g_s^2} \sum_{p} \mathrm{Re} \, \mathrm{Tr}\, \hat P_p\right]\nonumber\\
 & \equiv H_E + H_B\,,
\end{align}
were hatted operators denote dimensionless operators, rescaled by appropriate powers of the only length-scale in the problem, namely the lattice spacing $a$.
Due to the breaking of Lorentz invariance the dimensionless coupling constants of the electric ($g_t$) and magnetic $(g_s)$ term are different, and the bare speed of light is defined as\footnote{Note that, while the incorrect definition of the speed of light $g_s/g_t$ was used in Ref.~\cite{Carena:2021ltu}, the methods presented are still valid.}
\begin{align}
    c \equiv \frac{g_t}{g_s}
    \,.
\end{align}
One can therefore write
\begin{align}
\label{eq:KS_Hamiltonian}
    H_{\rm KS} &= \frac{c}{a} \left[ \frac{g^2}{2} \sum_{\ell} \left(\hat E_\ell\right)^2 
- \frac{1}{g^2} \sum_{p} \mathrm{Re} \, \mathrm{Tr}\, \hat P_p\right]\nonumber\\
 & \equiv \frac{c}{a}\left[\hat H_E + \hat H_B \right]\nonumber\\
 & \equiv \frac{c}{a}\hat H_{\rm KS}
\,,
\end{align}
where we have defined
\begin{align}
    g = \sqrt{g_s g_t}
    \,.
\end{align}
Using this parameterization introduces a second dimensionful parameter into the theory, namely the time-scale 
\begin{align}
    a_t \equiv \frac{a}{c}
    \,.
\end{align}
Interactions with fermions give rise to additional terms that involve the masses $m_i$ of the fermions as well as a so-called hopping term describing the interactions between gauge bosons and fermions with coefficient $\kappa$.\footnote{To the best of our knowledge, the need for the additional bare parameter $\kappa$ multiplying the fermionic kinetic term in the Hamiltonian formulation has not been discussed in the literature. The need for this parameter is discussed in Appendix~\ref{app:scale_set_and_HKS}.}

It is important to remember that the KS Hamiltonian is only equal to the full continuum Hamiltonian up to corrections that vanish as the lattice spacing approaches zero.
The pure gauge Hamiltonian as written in~\cref{eq:KS_Hamiltonian} is correct up to corrections scaling as $a^2$, such that the continuum Yang-Mills (YM) Hamltonian can be written as
\begin{align}
\label{eq:YM_to_KS}
\hat H_{\rm YM} = \hat H_{\rm KS} + \sum_i c_i \hat O^{(3)}_i + \dots
\,,
\end{align}
where $\hat O^{(3)}_i$ denotes higher dimensional operators whose contributions to long-distance physics at distances $L$ scale as $a^3 / L^3$. 
Since the rescaled Hamiltonian $\hat{H}_{\rm KS}$ vanishes as $a / L$ as just discussed, any calculation performed with the KS Hamiltonian will only reproduce the desired continuum physics up to relative corrections suppressed by $a^2 / L^2$. 

As discussed in more detail in the next section, any dimenionful parameter in the Hamiltonian will also be replaced by the corresponding dimensionless parameter by multiplying by the appropriate factors of the lattice spacing and the speed of light.
The ruler with which temporal- and spatial-type quantities are measured with respect to are $a/c$ and $a$, respectively\footnote{The fact that spatial-type quantities are measured with respect to the lattice spacing is reflected in the fact that the lattice momentum operator would have an overall prefactor of $1/a$ instead of $c/a$ in the case of the Hamiltonian.}.
Examples are fermion masses in the Hamiltonian, the time one evolves the system by, momenta, or the physical distance between correlators.
Their dimenionless equivalents are given by
\begin{align}
\label{eq:hatted_def}
    \hat m = a_t \, m \,, \qquad \hat t = \frac{1}{a_t} t\,, \qquad \hat{p}=ap, \qquad \hat x = \frac{1}{a} x
    \,.
\end{align}

In the limit of vanishing lattice spacing $a \to 0$, the KS Hamiltonian has to reproduce the spectrum of the continuum theory. This implies that the eigenvalues of the Hamiltonian $H_{\rm KS}$ have to be finite in that limit, which further implies that the eigenvalues of the rescaled Hamiltonian $\hat{H}_{\rm KS}$ have to vanish as $a \to 0$.

\subsection{Renormalization for exact time evolution}
\label{subsec:renorm_exact_te}

Calculating physical observables using quantum field theory requires integrating over all possible energy and momentum values of intermediate states that contribute to a process.
These integrals, however, generally diverge as the integration range is taken to infinity.
One intuitive explanation for this is that such calculations assume that the theory is valid at arbitrarily high energy scales; because current theories are incomplete (in the sense that they are not the final ``theory of everything''), this is simply not true.
This, however, does not give rise to an ill-defined theory or a loss of predictive power. 
The physics at short distances is captured using a process known as \emph{renormalization}, in which the parameters of the effective theory (coupling strengths of interactions, masses of particles, etc.) are adjusted to reproduce the physics at long distances one aims to compute.

In general, renormalization requires two steps.
The first is to \emph{regulate} the theory by manually imposing some cutoff $\Lambda$ on the energies or momenta included in the theory (in a lattice theory $\Lambda \sim 1/a$). 
While this renders calculations finite, it naively makes physical observables depend on the cutoff $\Lambda$.
The key insight of renormalization is that this unphysical behavior can be removed by choosing the bare parameters in the theory to depend on the cutoff in precisely the correct way such that some physical observable is reproduced correctly and is independent of the cutoff.
In particular, the bare coupling, speed of light and masses become $\Lambda$ dependent $g \to g(\Lambda)$, $c \to c(\Lambda)$, $m_i \to m_i(\Lambda)$ and $\kappa \to \kappa(\Lambda)$.
While this leads to the bare parameters diverging (or going to zero) as $\Lambda \to \infty$, this is not a problem as they are not physical observables in the theory.

While the dependence of the parameters on $\Lambda$ can in some cases be computed perturbatively, in general this is not possible, and one has to determine this dependence by demanding that the effective theory reproduces a limited set of physical observables. 
After adjusting the bare parameters to absorb the divergences as $\Lambda \to \infty$ and to reproduce a small set of observables, the theory has been renormalized and all other predictions will be finite and equal to the physical values\footnote{If using perturbation theory, the predictions will be equal up to higher order corrections.} in the $\Lambda \to \infty$ limit.

Since computers (both classical and quantum) can only deal with dimensionless numbers, one numerically simulates a rescaled Hamiltonian $\hat H_{\rm KS}$ introduced in~\cref{eq:KS_Hamiltonian}, and from now on, all parameters of the lattice Hamiltonian are assumed to be dimensionless, and the parameter $a_t=a/c$ only appears as an overall constant.
This has two consequences that are intimately connected and which we will discuss more later. 
First, lattice calculations can only compute dimensionless quantities, and in order to obtain dimensionful quantities observed in nature, such as masses, one needs to multiply by the appropriate powers of $a$ in the end. 
Second, because one only chooses values of the rescaled parameters, the lattice spacing is not an explicit input in a simulation.
Instead, its value is only obtained after the fact by comparing calculations to dimensionful quantities observed in nature.
See Appendix~\ref{app:scale_set_and_HKS} for a more detailed discussion of this property of lattice field theories.

We distinguish two types of parameters in our theory. 
The first set are dimensionless, possibly rescaled parameters that determine the relative size of terms in the Hamiltonian. 
For the Kogut-Susskind Hamiltonian the only such parameter is the gauge coupling $g$, but when including fermions,  rescaled masses are present as well. 
We denote these parameters collectively as 
\begin{align}
    p_i \equiv \{g, \hat m_i, \hat\kappa \}
    \,.
\end{align}
where $\hat \kappa = \kappa / c$ as shown in Appendix~\ref{app:scale_set_and_HKS}.
The second set of parameters are those that set the scales in the problem, one scale for time $a_t$ and one for length $a$, or equivalently $a$ and the speed of light $c$.
A common choice is the lattice spacing $a$ and the speed of light $c$. 

As already mentioned, the presence of a lattice spacing introduces a cutoff $\Lambda_a \sim 1/a$, since wavelength shorter than the lattice spacing $a$, and time scales shorter than $a / c$ can not be resolved.
From the previous discussion it therefore follows that renormalization demands that the bare parameters in the Hamiltonian depend on the lattice spacing.
This gives a lattice gauge theory Hamiltonian $H(a)$ with all parameters being functions of the lattice spacing 
\begin{align}
p_i \to p_i(a)
\,.
\end{align}
As we will discuss, the speed of light will also depend on the lattice spacing $a$.

At first sight, one might be worried that this makes it impossible to use a lattice field theory to predict physical observables. 
After all, it was just discussed that the value of the lattice spacing $a$ can only be extracted at the end of a lattice simulation, and that the functional dependence $p_i(a)$ is non-perturbative and therefore not known analytically.
Given this, how does one know what values to choose for the bare parameters $p_i(a)$ used in the calculation?
The answer is that one needs to tune these  parameters to reproduce experimentally accessible quantities, such as masses in the spectrum of the full theory.
One typically views this as a two step process, as we will now discuss. 

In a first step, one determines the dimensionless parameters of the Hamiltonian. 
This is accomplished by finding parameters for which the numerical calculation reproduce experimentally known dimensionless quantities. 
A common choice for such dimensionless ratios are ratios of masses
\begin{align}
    \frac{\frac{a}{c}M_1(p_i)}{\frac{a}{c}M_2(p_i)} = \frac{M_1^{(\rm phys)}}{M_2^{(\rm phys)}}\,.
\end{align}
Note that for a pure gauge Hamiltonian, this step is not necessary as all choices of the gauge coupling are valid, with different choices corresponding to different lattice spacings as determined in the next step.
When fermions are included, however, the bare fermion masses have to be tuned by the above procedure. 
The parameter values are typically determined by iterating over a range of parameter values in the numerical calculation and using some fitting procedure for the final tuning.

Having reproduced dimensionless ratios, one determines the scales $a_t$ and $a$ by comparing the results obtained in the previous step to observed dimensionful quantities. 
For example, the scale $a_t$ is can be obtained by taking the ratio of a dimensionless mass computed numerically to its observed value
\begin{align}
    a_t = \frac{\hat M(p_i)}{M^{(\rm phys)}}
    \,.
\end{align}
The scale $a$, or equivalently the speed of light, is obtained by demanding that observables depending on spatial distances are reproduced correctly. 
For example, one can demand that the energy of a massive particle with given lattice momentum reproduces the energy predicted from the relativistic dispersion relation.

Thus, for each parameter set $p_i$ one finds corresponding values of $a$ and $a_t$, or equivalently $a$ and $c$.
One can then interpret these parameter sets as labeled by the lattice spacing found and write
\begin{align}
    \{p_i, c, a\} \to \{p_i(a), c(a)\}
    \,.
\end{align}
One therefore obtains a line in parameter space, where each point corresponds to a set of parameter values and lattice spacing.
This line is often called the renormalization trajectory.
Given that this trajectory corresponds to exact time evolution, we denote it by
\begin{equation}
\label{eq:traj0}
    {\rm traj}_{0}\, : \, \{p_i(a), c(a)\}\,.
\end{equation}
Due to the divergences in the underlying continuum theory, the parameter values corresponding to vanishing lattice spacing are typically either infinite or vanishing; because Lorentz invariance is restored in the continuum limit, the bare speed of light and the parameter $\kappa(a)$ obeys $\lim_{a \to 0} c(a) =\lim_{a \to 0} \kappa(a) = 1$.
Note that different renormalization trajectories associated with different observables used in scale setting are possible, with each giving the same result in the $a \to 0$ limit.
The dependence of the dimensionless coupling $g$ on the lattice spacing is logarithmic. 

Given this renormalization trajectory, one can now compute new observables of interest. 
After renormalization, a lattice calculation of low energy quantities differs from the continuum result by powers of the lattice spacing; in modern calculations the leading errors are typically $\mathcal{O}(a^2)$.
One can not perform a calculation directly at $a = 0$, since as discussed above the parameter values at that point are either infinite or vanishing, and would also require an infinite number of lattice sites to maintain a constant physical volume.
One therefore calculates the desired physical observables for various parameter choices along the renormalization trajectory, resulting in a value that depends on the corresponding lattice spacing.
To obtain the continuum result, one performs a final extrapolation to $a = 0$. 

Given that a major advantage of Hamiltonian lattice gauge theory simulated on quantum computers over traditional lattice gauge theory is the ability to calculate time-dependent observables, we now consider the calculation of expectation values of a time-dependent operator $\langle O(t) \rangle$\footnote{Note that our arguments also apply to the scenario where one is computing more general matrix elements of an operator between two different states $\bra{\psi_f} O(t) \ket{\psi_i}$.} through the calculation of the expectation value of an operator defined on the lattice $\langle O(t, a) \rangle$. 
The dependence on $a$ captures the fact that calculation is performed with the parameter values $\{p_i(a)\}$ of the renormalization trajectory corresponding to the value $a$. 
As before, the lattice only allows one to compute dimensionless quantities, such that for an operator of mass dimension $\gamma$ one computes the expectation value $\langle \hat O(t,a) \rangle =  \langle a^\gamma O(t,a)\rangle$. 

Unless protected by a symmetry in the theory, the operator $\hat O(t, a)$ also needs to be renormalized. 
This is done by introducing renormalization factors $Z(a)$.
For simplicity we assume multiplicative renormalization, but the arguments presented in this manuscript generalize to additive renormalization factors and operator mixing.
We determine  $Z(a)$ by demanding some other chosen matrix element of $Z(a)\hat O(t, a)$ is equal the associated physical value, \emph{i.e.},
\begin{equation}
    Z(a) \equiv \frac{\bra{\phi_1} O(t, a)\ket{\phi_2}}{[\bra{\phi_1}O(t)\ket{\phi_2}]^{({\rm phys})}} \,,
\end{equation}
for some states $\ket{\phi_1}$ and $\ket{\phi_2}$.
Once $Z(a)$ has been determined, the physical observable is given in the limit
\begin{equation}
    \langle O(t) \rangle = \lim_{a \to 0} Z(a) a^{-\gamma} \langle \hat O(t, a) \rangle\,.
\end{equation}
Note that $Z(a)$ is time-independent in this case, as the renormalized operator is given by $Z(a) O(t,a) = e^{i H(a) t} \left(Z(a) O(0,a) \right) e^{-i H(a) t}$. 

We now summarize the discussion given so far, by providing the steps required to take the continuum limit of a time-dependent observable. 
They are
\begin{enumerate}
    \item Determine the renormalization trajectory and therefore $g(a)$, $\hat m_i(a)$, $\hat{\kappa}(a)$, and $c(a)$
    \begin{enumerate}
        \item Calculate a set of known dimensionless observables for a given parameter set $p_i$
        \item Adjust the parameters to until calculations match the known dimensionless values
        \item Determine $a$ and $a_t$ for this parameter set by comparing against a dimensionful observables (recall that converting between $a$ and $a_t$ is done by a trivial rescaling with $c(a)$)
        \item Repeat steps (1a)-(1c) until the renormalization trajectory is obtained
    \end{enumerate}
    \item Compute the desired physical observable
    \begin{enumerate}
    \item Pick a value of $a$ by choosing a set of parameters $p_i(a)$ and $c(a)$
    \item If calculating a time-dependent observable, determine the appropriate value of $\hat t = c(a) t / a$
    \item Determine $Z(a)$ using the previously determined parameter values
        \item Calculate $\langle \hat O(\hat t,a) \rangle$ and rescale by powers of $a$ to convert to physical units
        \item Repeat steps (2a)-(2d) for several values of $a$ and extrapolate $Z(a) \langle O(\hat t,a) \rangle$ to zero lattice spacing
    \end{enumerate}
\end{enumerate}
From the discussion in this section, it should be clear that while all parameters and observables in the above steps have explicit dependence on $a$, this dependence on $a$ only labels the parameter values used in the given calculation. 

We conclude this section by noting that, for a given lattice geometry and Hamiltonian, one only needs to tune the bare parameters and scale set once; once the renormalization trajectory has been determined, the relative cost of these steps can be reduced by reusing it for several observables.

\subsection{Renormalization for  approximate time evolution via product formulas}
Product formulas are based on the Baker-Campbell Haussdorff (BCH) formula
\begin{align}
    e^{-i H_1 \delta_t}e^{-i H_2 \delta_t} = e^{-i (H_1 + H_2) \delta_t - \frac{\delta_t^2}{2}[H_1,H_2] + \ldots}
    \,,
\end{align}
and approximate the time evolution operator by products of individual exponentials.
For example, writing
\begin{align}
    H = H_1 + H_2
\end{align}
the first and second order product formulas will approximate the time evolution operator $U = \exp[-i H t]$ as
\begin{align}
\label{eq:BCHFormula}
    S_1(t) &= \left[S_1(\delta_t)\right]^{N_{\mathrm{PF}}} = \left[e^{-i  \delta_t H_1}e^{-i  \delta_tH_2}\right]^{N_{\mathrm{PF}}}\\
    S_2(t) &= \left[S_2(\delta_t)\right]^{N_{\mathrm{PF}}} = \left[e^{-i  \delta_t H_1/2}e^{-i \delta_t H_2}e^{-i  \delta_t H_1/2}\right]^{N_{\mathrm{PF}}}\nonumber
    \,,
\end{align}
where $\delta_t$ is given by \cref{eq:delta_t_def}.
Note that time evolution using product formulas requires two independent parameters ${N_{\mathrm{PF}}}$ and $\delta_t$, which need to combine to produce the desired physical time of interest
\begin{align}
    {N_{\mathrm{PF}}} \delta_t = t
    \,.
\end{align}

One way to use product formulas is to choose a value of ${N_{\mathrm{PF}}}$ that is large enough such that the systematic uncertainties to time evolution becomes negligible. 
This will be discussed in detail and generalized to other approximations to time evolution in~\cref{sec:syst_unc}, and requires to find a value of ${N_{\mathrm{PF}}}$ for which the the resulting systematic uncertainties are guaranteed to be below the statistical uncertainties made in the numerical calculation. 
In the rest of this section, we discuss how to handle the systematic uncertainties through renormalization, such that they vanish as the continuum limit is taken. 
Much of our discussion follows Ref.~\cite{Carena:2021ltu}, and we point out when we find results that differ from that work.

Using the BCH formula,~\cref{eq:BCHFormula}, time evolution via product formulas correspond to time evolution for a time step $\delta_t$ according to the effective Hamiltonians
\begin{align}
    S_k(\delta_t) = e^{-i \delta_t H_{\rm eff}^{(k)}}
    \,,
\end{align}
with 
\begin{align}
   H^{(1)}_{\rm eff}(\delta_t) &=  H_1 + H_2 - i \frac{\delta_t}{2} \left[H_1, H_2\right] + \ldots\\
   H^{(2)}_{\rm eff}(\delta_t) &=  H_1 + H_2 + \frac{\delta_t^2}{24} \Big(\left[H_1, [H_1,H_2]\right]\\
   &\hspace{1cm} +2 \left[H_2, [H_1,H_2]\right]\Big) + \ldots\nonumber
   \,.
\end{align}
We call the terms involving commutators of $H_i$ BCH terms, since they arise from the higher order terms in the BCH formula.
Rescaling all parameters by the appropriate factors of the lattice spacing $a$, including the appropriate factors of the speed of light
\begin{align}
\label{eq:hatdef}
    \hat H_i = \frac{a}{c} H_i \,, \qquad \hat{\delta}_t = \frac{\hat t}{{N_{\mathrm{PF}}}}
    \,,
\end{align}
where $\hat t$ is defined in~\cref{eq:hatted_def}. 
One therefore finds
\begin{align}
\label{eq:H1eff}
   \hat H^{(1)}_{\rm eff}(\delta_t) &=  \hat H_1 + \hat H_2 - i \frac{\hat \delta_t}{2} \left[\hat H_1, \hat H_2\right]+ \ldots\\
\label{eq:H2eff}
   \hat H^{(2)}_{\rm eff}(\delta_t) &=  \hat H_1 + \hat H_2 + \frac{\hat\delta_t^2}{24}\Big(\left[\hat H_1, [\hat H_1,\hat H_2]\right]\\
   &\hspace{1cm} +2 \left[\hat H_2, [\hat H_1,\hat H_2]\right]\Big) + \ldots\nonumber
   \,.
\end{align}
For $k$'th order product formulas the leading BCH terms contain $k$ powers of $\hat \delta_t$ and commutators involving $k+1$ factors of $\hat H_i$.

When taking the continuum limit, one needs to decide how to scale the number of Trotter steps as $a$ is taken to zero.
One obvious choice is to take the limit in such a way that $\hat \delta_t$ stays constant. 
Since the in the limit $a \to 0$ one also has $a_t \to 0$ (since the speed of light goes to a constant), this requires the number of Trotter steps to increase as $1/a$ and ensures that the resolution of the Trotter time step scales with the resolution introduced by $a_t$.
For this choice of $\hat \delta_t$ scaling 
one finds, by analyzing~\cref{eq:H1eff,eq:H2eff}, that the BCH terms are suppressed by $k$ powers of $a$ relative to the leading order term when using a $k^{\mathrm{th}}$ order product formula. 
This implies that, if one uses a lattice Hamiltonian with discretization errors scaling as $\mathcal{O}(a^\gamma)$ for some positive integer $\gamma$, one must use a PF with $k\geq \gamma$ to avoid introducing larger lattice spacing errors.
For the rest of this paper we will use second order Trotter approximations to time evolution.

As in the previous section, renormalization of the system corresponding to the effective Hamiltonian is required to calculate any physical quantities.
We begin by discussing two ways to perform this renormalization, and will show that both of these are equivalent to the order one is working.
In the first approach already discussed, one chooses a fixed value for the parameter $\hat \delta_t$ and then performs the same steps as before to determine the parameters $p_i$, as well as $a$ and $c$. 
This will result in a renormalization trajectory that depends on the choice of parameter $\hat \delta_t$, which we will denote as
\begin{align}
\label{eq:traj1}
    {\rm traj}_1:\, \{p_i(a, \hat \delta_t), c(a, \hat \delta_t)\}
    \,.
\end{align}
The second approach would be to treat $\hat \delta_t$ as an additional parameter in the Hamiltonian to be fixed by renormalization. 
This would require knowledge of one additional experimental observable in the continuum, and would result in the trajectory
\begin{align}
\label{eq:traj2}
    {\rm traj}_2:\, \{p_i(a), c(a), \hat \delta_t(a)\}
    \,.
\end{align}
Note that the values $p_i$ and $c$ corresponding to a given lattice spacing will differ from those obtained using exact time evolution to ${\cal O}(a^2)$; this is true for either trajectory.

Both of these approaches give rise to differences in observed quantities that are ${\cal O}(a^2)$. 
This is because, as already discussed, the parameter $\hat \delta_t$ modifies higher-dimensional operators that contribute at relative order $a^2$ to any observable at long distances. 
These operators therefore contribute at the same order as other higher dimensional operators that are neglected in the Kogut-Susskind Hamiltonian, \emph{i.e.}, the $\hat{O}_i^{(3)}$ operators in \cref{eq:YM_to_KS}.
This implies that including or not including the BCH contributions only changes numerical simulations at a given lattice spacing only up to $\mathcal{O}(a^2)$ corrections, at which level other effects not included contribute as well.

In fact, these considerations allow for additional options, which are simpler than the first two, at least conceptually. 
In a first option, one simply uses the renormalization trajectory corresponding to $\hat \delta_t = 0$ in Eq.~\eqref{eq:traj0}, which is the renormalization trajectory corresponding to exact evolution already discussed in Sec.~\ref{subsec:renorm_exact_te}.
While the additional $\hat \delta_t$ dependence implies that the renormalization conditions are not satisfied exactly, they are correct within the ${\cal O}(a^2)$ systematic uncertainties already present in the numerical calculations.
A second alternative, which is even simpler than before, would be to neglect the renormalization of the speed of light and $\kappa$ altogether, since their differences from their continuum value $c = \kappa = 1$ are themselves of ${\cal O}(a^2)$. 

Which of these approaches are should be taken depends on the  relative size of the prefactor of the neglected $\mathcal{O}(a^2)$ terms. 
The speed of light is tuned in most lattice simulations with anisotropic lattices because it that it is doing so is straightforward and reduces discretization errors from higher order terms.
We therefore expect that using the trajectory in Eq.~\eqref{eq:traj2} will give rise to the smallest overall corrections.

\subsection{Using Euclidean simulations on anisotropic lattices to perform renormalization}
\label{subsec:Euclidean}

A different approach to renormalization is to use Euclidean simulations performed on classical computers to obtain the renormalization trajectory.
We will discuss this approach, which was first proposed in Ref.~\cite{Carena:2021ltu}, in the rest of this section.
The starting point is the fact that the Euclidean action used in classical lattice Monte-Carlo calculations and the Hamiltonian in Hamiltonian lattice simulations can be related to one another using the transfer matrix formalism. 

This formalism, first proposed in Ref.~\cite{Creutz:1983njd}, is usually used to show that one can obtain the Kogut-Susskind Hamiltonian from the Wilson action by taking the limit of vanishing Euclidean time-like lattice spacing.
The starting point is to define a transfer matrix operator $T$, such that the partition function $Z$ can be written as 
\begin{align}
    Z \equiv \int {\rm D}U \, e^{-S_E} = \Tr T^N 
    \,,
\end{align}
where the Euclidean Wilson action is given by
\begin{align}
\label{eq:Wilson}
    S_E = -\beta_s \sum_s \Re \Tr P_s -\beta_s \sum_t \Re \Tr P_t
    \,,
\end{align}
where $P_s$ ($P_t$) denotes the plaquette operators defined on space-like (time-like) plaquettes. 
The parameters $\beta_i$ are defined as
\begin{align}
    \beta_s = \frac{a_0}{g_s^2 a}\,, \qquad \beta_t = \frac{a}{g_t^2 a_0}
    \,,
\end{align}
where $a_0$ is the lattice spacing in the Euclidean time direction.

The transfer matrix is defined as the matrix element of the transfer operator $T$
\begin{align}
    T_{UU'} \equiv \langle U | T | U' \rangle\,.
\end{align}
Defining the gauge configurations $U_i$ as those that correspond to the fixed time-slice $i$ on the periodic Euclidean lattice (such that $U_0 = U_N$), one can therefore write
\begin{align}
    T = \int {\rm d}U_0\ldots {\rm d}U_{N-1} \, T_{U_0 U_1} \, T_{U_1 U_2}\ldots T_{U_{N-1} U_0}
    \,.
\end{align}
Demanding that the matrices $T_{UU'}$ are Hermitian one finds that the operator
\begin{align}
\label{eq:Tdef}
    T = 
    \exp\left[-a_0 \frac{H_B}{2}\right]
    \exp\left[-a_0 H_E\right]
    \exp\left[-a_0 \frac{H_B}{2}\right]
    \,,
\end{align}
reproduces the partition function corresponding to the Wilson action in~\cref{eq:Wilson} in the limit $a_0 \to 0$.

Since the transfer operator provides the relation between different time slices on the Euclidean lattice, it is related to the Hamiltonian of the system which governs the differential time evolution of the system. 
In particular, by taking the $a_0 \to 0$ limit one obtains 
\begin{align}
    T = \exp[-a_0 H_{\rm KS}]
    \,,
\end{align}
up to higher order corrections in $a_0$, producing the result that~\cref{eq:Tdef} reproduces the Kogut Susskind Hamiltonian in this limit\footnote{Since this transfer operator only reproduces the Wilson action in the $a_0 \to 0$ limit, this transfer operator is not unique, implying one has freedom in defining which lattice Hamiltonian to simulate. For example, one could choose a transfer operator that uses more products of the various exponentials, such that it resembles a higher order representation of the Baker Campbell Hausdorff formula. 
One could also use an operator that do not contain the operators $\hat H_B$ and $\hat H_E$ in the exponentials, but rather operators that include higher order corrections in $a_0$.  We are grateful to Wanqiang Liu for discussions on this topic.}.

It was pointed out in Ref.~\cite{Hoshina:2020gdy,Kanwar:2021tkd,Carena:2021ltu} that the operator $T$ in~\cref{eq:Tdef} (associated with the heat-kernel action~\cite{Menotti:1981ry}) is equal exactly to the approximation of the time evolution operator of the pure gauge Kogut-Susskind Hamiltonian when using a second order product formula with the same breakup into $H_E$ and $H_E$ as in~\cref{eq:Tdef}.
In other words, one finds
\begin{align}
    T = S_2\left(-i a_0 \frac{c}{a}\right)
    \,,
\end{align}
without any higher order corrections in $a_0$ or $\hat \delta_t$ so long as $H_1 = H_B$ and $H_2 = H_E$ in \cref{eq:BCHFormula}.
This implies that calculations performed in a Hamiltonian lattice gauge theory with Kogut-Susskind Hamiltonian and this second order Trotter evolution is exactly equal to the analytical continuation of a Euclidean path integral calculation performed with the heat kernel action.
It is important to keep in mind that this works only for this particular second order Trotter formula. 
Using other Trotter expansions, such as first order, second order with a different breakup of the Hamiltonian or a different order of terms, of working higher-order product-formulas, however, is a difficult task, as one must start from the desired higher-order transfer matrix and work backwards to determine the associated Euclidean lattice action.

Another issue with this approach is that the exact relationship between the classical Euclidean action and a Trotterized time evolution operator in the Hamiltonian formulation is spoiled when including fermions, and to our knowledge no classical Euclidean action is known that reproduces a Trotterized version of a corresponding Hamiltonian without introducing ${\cal O}(a_0)$ corrections.
This implies that the approach of using classical calculations to obtain the renormalization trajectory required for Hamiltonian simulations no longer works to the desired accuracy in the presence of fermions.

To summarize, the discussion above shows that one can use classical simulation methods to determine the renormalization trajectory~\cite{Carena:2021ltu, Carena:2022hpz} for pure Yang-Mills simulations.
In this approach, avoiding a systematic mismatch at $\mathcal{O}(a^2)$ requires performing classical calculations with the heat kernel action, which is considerably more difficult to use than the Wilson action, and no Euclidan action is currently known that allows to extend this approach to include fermions, as is required to simulate the strong interaction.

In the next section we discuss an alternative approach, which does not suffer from the shortcomings of the renormalization approach, and works generically for any lattice Hamiltonian. 
This approach takes the ordered limit $\lim_{a\to 0} \lim_{\delta_t\to 0}$ in a way that incurs negligible systematic uncertainties from approximate time evolution. 
The key insight is that, as all numerical computations performed, whether for tuning of parameters or to compute the final observables, have associated statistical uncertainties, and as long as the systematic uncertainties from the approximate time evolution can be guaranteed to be below these statistical uncertainties, they can be neglected. 

\section{Statistically-Bounded Time Evolution Protocol}
\label{sec:syst_unc}

In the previous section we have outlined a general procedure to
numerically compute observables in Hamiltonian lattice gauge theories, including the requirements to renormalize the bare parameters of the Hamiltonian. 
In particular, we discussed how to perform the two necessary steps when an approximation to the time evolution operator is being used, as is typically always required. 
In this section we will discuss how to deal with the systematic uncertainties that arise from this approximation to the time evolution operator, in particular how to ensure that these systematic uncertainties have no effect on the continuum limit. 
We refer to this general protocol as the Statistically-Bounded Time Evolution (SBTE) protocol.
An important aspect of our discussion will be the computational cost of the SBTE protocol.

From the previous discussion on renormalization when approximating time evolution using product formulas, it should be clear that all required numerical calculations are performed with a fixed set of parameters in the Hamiltonian. 
In the first step, these calculations are used to determine the renormalization trajectory, while in the second step calculations are performed for parameters that lie on this trajectory.
It is important to realize that any numerical calculation will come with an associated calculational uncertainty, unrelated to any systematic uncertainty due to underlying theoretical assumptions. 
Such calculational uncertainty can arise from noisy gates on quantum hardware, but even in the absence of such noise, calculations have statistical uncertainties, either from using Monte-Carlo integrations on classical computers or from measuring expectation values of operators from repeated circuits on quantum computers. 
This implies that calculating the expectation value of any observable is only possible to a given uncertainty
\begin{align}
    \langle \hat O \rangle_{\rm calc} = \langle \hat O \rangle \pm \sigma_{\hat O} \,.
\end{align}

Now consider calculating this expectation value using some approximation to the time evolution operator, where the approximation is controlled by some parameter $\delta$, with exact time evolution recovered in the limit $\delta \to 0$. 
For approximations using product formulas the parameter $\delta = \delta_t$ is the bare Trotter step size used, but other approximations are possible as well. 
We will later discuss the case of quantum signal processing (QSP), for which the analogous parameter characterizes the accuracy of the Jacobi-Anger expansion of the time evolution operator in terms of 
Chebyshev polynomials up to some finite order. 
For both of these approaches, bounds exist that guarantee that the accuracy of the approximation 
\begin{equation}
    \label{eq:approx_obs}
    \left |\langle \hat O(\delta)\rangle - \langle \hat O\rangle\right| = \epsilon_\delta
    \,.
\end{equation}
is below a certain value. 
For product formulas this is achieved by keeping the number of Trotter steps $N_{\rm PF}$ to be above a certain value, while for QSP it requires the number of Chebyshev polynomials $k$ used in the polynomial approximation to exceed a certain value.

As already mentioned in~\cref{sec:high_level_cont_lim} no divergences appear in the limit $\delta \to 0$, even though bare parameters can depend on the value of $\delta$. 
This implies that choosing a sufficiently large value of $\delta$ guarantees that $\epsilon_\delta \ll \sigma_{\hat{O}}$, ensuring that the resulting systematic uncertainties are negligible compared to $\sigma_{\hat{O}}$.
Therefore all steps of the previous procedure go through as if one had used exact time evolution.

As will be discussed in~\cref{subsec:pf}, ensuring a small enough error with product formulas can lead to prohibitively expensive gate costs for two reasons.
The first is that, while rigorous error bounds exist when using PFs, they are generally loose, see, \emph{e.g.}, Ref.~\cite{PhysRevLett.123.050503}.
The second is the polynomial scaling $\mathcal{O}(\epsilon_\delta^{-1/p})$ of the gate cost.
These facts explain the desire to deal with this uncertainty using the renormalization procedure discussed, which could result in smaller gate costs at the expense of a more complicated renormalization procedure.

For QSP, the scaling of computational resources with their accuracy is however much more favorable in comparison to product formulas, being proportional to $\log (1/\epsilon_\delta)$, and is in fact provably optimal in both total simulation time and error.

\subsection{Algorithmic Complexity}
\label{sec:complexity}

In this section we compute the algorithmic cost required to control the approximation accuracy $\epsilon_{\delta}$ with respect to the statistical uncertainty $\sigma_{\hat O}$ in order to ensure the renormalization steps outlined in \cref{subsec:renorm_exact_te} can be carried out as if one were performing exact time evolution. In particular, for a specific choice of time evolution algorithm, we ask how to choose the algorithmic parameter $\delta$ that controls the approximation error $\epsilon_{\delta}$ incurred by the algorithm to ensure that $\epsilon_{\delta} \leq \sigma_{\hat O}/\beta$ where $\beta \in \mathbb{R}^+$ such that $\beta \geq 1$. In this work, we do so for two common (and conceptually simple) time evolution algorithm i.e. Product Formula (PF) as well as for Quantum Signal Processing (QSP) based algorithms. 

Let $H$ be the Hamiltonian governing the dynamics of the theory we are interested in simulating. Let $\exp(-iHt)$ and $U_{\mathrm{sim}}(t)$ be exact and approximate time evolution operators, where the latter is the operator implemented by the algorithm of choice. We define the operator $\Delta_{\mathrm{sim}}$ as follows:
\begin{align}
    \label{eq:delta_sim}
    \Delta_{\mathrm{sim}} := \mathrm{e}^{-iHt} - U_{\mathrm{sim}}(t)
    \,,
\end{align}
such that the spectral norm $\|\Delta_{\mathrm{sim}}\|$ is the exponentiation error incurred by the time evolution algorithm. The following lemma provides a necessary condition on the quantity $\|\Delta_{\mathrm{sim}}\|$ required while computing operator expectation values of the form $\langle {\hat O}(t, a) \rangle$:
\begin{lemma}
    \label{lemma:alg_err}
    For any choice of time evolution algorithm and any $\beta \in \mathbb{R}^+$ such that $\beta \geq 1$, choosing algorithmic parameters that ensure
    \begin{align}
        \label{eq:delta_sim_ub}
        \|\Delta_{\mathrm{sim}}\| \leq \frac{\sigma_{\hat O}}{2\beta\|{\hat O}(0,a)\|}
        \,,
    \end{align}
    will result in $\epsilon_{\delta} \leq \sigma_{\hat O}/\beta$.
\end{lemma}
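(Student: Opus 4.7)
The plan is to bound $\epsilon_\delta$ directly in terms of the spectral norm of $\Delta_{\mathrm{sim}}$, by translating the operator-level approximation error into a bound on expectation values through a standard triangle-inequality / Cauchy–Schwarz argument. Since $\epsilon_\delta$ is defined in \cref{eq:approx_obs} as the absolute difference between expectation values computed with exact and approximate time evolution, the natural move is to work in the Schrödinger picture and compare the two evolved states $\ket{\psi(t)} = e^{-iHt}\ket{\psi(0)}$ and $\ket{\psi_{\mathrm{sim}}(t)} = U_{\mathrm{sim}}(t)\ket{\psi(0)}$.

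First I would define the state-level error $\ket{\Delta} \equiv \ket{\psi(t)} - \ket{\psi_{\mathrm{sim}}(t)} = \Delta_{\mathrm{sim}}\ket{\psi(0)}$, which immediately satisfies $\|\ket{\Delta}\| \leq \|\Delta_{\mathrm{sim}}\|$. Next I would write
\begin{align}
\langle\hat O\rangle - \langle\hat O\rangle_{\mathrm{sim}}
 &= \bra{\psi(t)}\hat O(0,a)\ket{\psi(t)} - \bra{\psi_{\mathrm{sim}}(t)}\hat O(0,a)\ket{\psi_{\mathrm{sim}}(t)} \nonumber\\
 &= \bra{\Delta}\hat O(0,a)\ket{\psi(t)} + \bra{\psi_{\mathrm{sim}}(t)}\hat O(0,a)\ket{\Delta},
\end{align}
by adding and subtracting the mixed term $\bra{\psi_{\mathrm{sim}}(t)}\hat O(0,a)\ket{\psi(t)}$. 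Applying the Cauchy–Schwarz inequality to each of the two terms and using $\|\ket{\psi(t)}\| = \|\ket{\psi_{\mathrm{sim}}(t)}\| \leq 1$ (the latter holds for the unitary block of the algorithm; post-selection accounts for the rest) yields
\begin{equation}
\epsilon_\delta = |\langle\hat O\rangle - \langle\hat O\rangle_{\mathrm{sim}}| \leq 2\,\|\hat O(0,a)\|\,\|\Delta_{\mathrm{sim}}\|.
\end{equation}
Substituting the hypothesis of the lemma, $\|\Delta_{\mathrm{sim}}\| \leq \sigma_{\hat O}/(2\beta\|\hat O(0,a)\|)$, then gives $\epsilon_\delta \leq \sigma_{\hat O}/\beta$, as required.

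The proof is essentially two lines of operator-norm manipulation, so the main conceptual point to justify — rather than a technical obstacle — is the implicit assumption that $U_{\mathrm{sim}}(t)$ can be treated as a contraction on the physical subspace. For product formulas this is automatic because $U_{\mathrm{sim}}$ is unitary; for QSP, where $U_{\mathrm{sim}}$ arises after projection onto an ancilla flag, one must either interpret $\|\Delta_{\mathrm{sim}}\|$ as the norm of the difference between $e^{-iHt}$ and the block of the dilated circuit corresponding to successful post-selection, or absorb the small post-selection failure probability into the algorithmic error budget. Either interpretation leaves the argument above intact, so the only substantive work is ensuring a consistent definition of $U_{\mathrm{sim}}$ in \cref{eq:delta_sim}. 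With that in place, the bound is operator-agnostic and applies to any algorithm for which a rigorous upper bound on $\|\Delta_{\mathrm{sim}}\|$ is available — exactly the generality needed for the SBTE protocol.
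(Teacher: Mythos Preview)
Your proof is correct and follows essentially the same triangle-inequality strategy as the paper, but you carry it out in the Schr\"odinger picture (comparing states $\ket{\psi(t)}$ and $\ket{\psi_{\mathrm{sim}}(t)}$) whereas the paper works in the Heisenberg picture, expanding $e^{iHt}\hat O(0,a)e^{-iHt} - U_{\mathrm{sim}}^{\dagger}\hat O(0,a)U_{\mathrm{sim}}$ by substituting $U_{\mathrm{sim}} = e^{-iHt} - \Delta_{\mathrm{sim}}$ and applying the triangle inequality to the resulting cross terms. Your state-level splitting has the minor advantage that it yields the clean bound $\epsilon_\delta \leq 2\|\hat O(0,a)\|\,\|\Delta_{\mathrm{sim}}\|$ directly, whereas the paper's operator expansion also produces a $\Delta_{\mathrm{sim}}^\dagger \hat O(0,a)\Delta_{\mathrm{sim}}$ term of order $\|\Delta_{\mathrm{sim}}\|^2$ that is then discarded. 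Your discussion of the contraction property of $U_{\mathrm{sim}}$ for QSP is a point the paper does not make explicit; it is the right caveat to flag, and as you note it is absorbed into the definition of $\|\Delta_{\mathrm{sim}}\|$ in \cref{eq:delta_sim}.
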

\begin{proof}
From \cref{eq:approx_obs}, we have that
    \begin{align}
    \epsilon_{\delta} 
        &= \| \mathrm{e}^{iHt}\hat{O}(0,a)\mathrm{e}^{-iHt} 
            - U_{\mathrm{sim}}(t)\hat{O}(0,a)U_{\mathrm{sim}}^{-1}(t) \| \\
        &= \| \mathrm{e}^{iHt}\hat{O}(0,a)\mathrm{e}^{-iHt} \notag \\
        &\quad - \left(\mathrm{e}^{iHt} - \Delta_{\mathrm{sim}}^{\dagger}\right)
                \hat{O}(0,a)
                \left(\mathrm{e}^{-iHt} - \Delta_{\mathrm{sim}}\right) \| \\
        &= \| \Delta_{\mathrm{sim}}^{\dagger} \hat{O}(0,a)) \mathrm{e}^{-iHt}
            + \mathrm{e}^{iHt} \hat{O}(0,a) \Delta_{\mathrm{sim}} \notag \\
        &\quad - \Delta_{\mathrm{sim}}^{\dagger} \hat{O}(0,a) \Delta_{\mathrm{sim}} \|
        \,.
    \end{align}
    We can then apply the triangle inequality to write:
    \begin{align}
        \label{eq:epsilon_sim_ub}
        \epsilon_{\mathrm{\delta}} &\leq 2\|\Delta_{\mathrm{sim}}\| \cdot \|\hat{O}(0,a)\| + \mathcal{O}(\|\Delta_{\rm sim}\|)^2
        \,,
    \end{align}
    where we ignore the quadratic correction term and upper bound the RHS of \cref{eq:epsilon_sim_ub} with $\sigma_{\hat O}/\beta$ to arrive at \cref{eq:delta_sim_ub}.
\end{proof}

Equipped with this Lemma, we can now consider specific instances of time evolution algorithms and analyze the algorithmic gate complexity required to appropriately bound the correction term $\Delta_{\mathrm{sim}}$.
While we consider the cost to do so using rigorous error bounds in the next section, another approach is to perform the simulation at several values of decreasing $\delta$ and showing that the result does not change within the working statistical precision.
Both methods are included in the SBTE protocol, and we discuss the second approach in more detail in Sec.~\ref{ssec:comparison}.

\subsubsection{SBTE protocol with Product Formulas}
\label{subsec:pf}

We begin this section by briefly reviewing the theory of PFs. For a more careful treatment of this theory, see for example Refs.~\cite{doi:10.1126/sciadv.aau8342, Childs:2019hts,Hariprakash:2023tla, Suchsland:2023cmb}.

For any Hamiltonian $H$, the following decomposition into mutually non-commuting operators exists:
\begin{align}
    H = \sum_{\ell = 1}^{L}H_{\ell}
    \,,
\end{align}
where we assume that there exists an efficient implementation for each operator of the form $\mathrm{e}^{-iP_{\ell}t}$, where $t \in \mathbb{R}$. The most common example of such a decomposition is the Pauli decomposition, \emph{i.e.}, each $P_{\ell}$ is the product of a  Pauli operator and an associated coefficient. A PF approximates the full exponential $\mathrm{e}^{-iHt}$ as a product of exponentials, each involving only a single $P_{\ell}$. In particular, a $p^{\mathrm{th}}$ order PF $S_{p}(t)$ (where $p \in \mathbb{N}$) is one that satisfies the following:
\begin{align}
    \label{eq:pth_order}
    S_p(t) = \mathrm{e}^{-iHt} + \mathcal{O}(t^{p+1})
    \,.
\end{align}
The most common class of PFs arise from the Lie / Suzuki-Trotter construction defined recursively as follows:
\begin{align}
    S_1(t) &= \prod_{\ell = 1}^{L} \mathrm{e}^{-iP_{\ell}t} \\
    S_2(t) &= \prod_{\ell = 1}^{L}\mathrm{e}^{-iP_{\ell}t/2}\prod_{\ell = 1}^{L}\mathrm{e}^{-iP_{L-\ell+1}t/2} \\
    S_{2v}(t) &= S_{2v-2}(u_vt)^2S_{2v-2}((1-4u_v)t)S_{2v-2}(u_vt)^2\,,
\end{align}
where $u_v := 1/(4-4^{1/(2v-1)})$. As given for example in Ref.~\cite{Childs:2019hts}, the number of products over $\ell$ (and hence the total number of exponentials appearing in a PF formula) grows exponentially with the order $p$. Due to this, for practical applications it is common to choose ``low'' values of $v$ and hence we will assume the most common choice of $p = 2$ for the remainder of this work. 

The correction terms appearing in \cref{eq:pth_order} scale polynomially with $t$ and hence PFs provide a good approximation to the full time evolution operator for small $t$. For large $t$, we can split the evolution into smaller steps and apply a PF approximation to each individual step as follows:
\begin{align}
    S_2(t) = \left(S_2(t/\npf)\right)^{\npf}\,,
\end{align}
where $\npf\in\mathbb{N}$ is known as the \emph{Trotter number}. 
In general, for a fixed value of the order $p$, increasing $\npf$ polynomially decreases the exponentiation error (see Ref.~\cite{Childs:2019hts}) given by:
\begin{align}
    \|\Delta_{\mathrm{sim}}\| = \|\left(S_p(t/\npf)\right)^{\npf} - \mathrm{e}^{-iHt}\|
    \,.
\end{align}

We thus have that the total number of Pauli rotations, and hence elementary gate complexity which we denote by $\chi$, required to implement a $2^{\mathrm{nd}}$-order PF scales linearly with the total number of terms present in the PF formula:
\begin{align}
    \label{eq:chi}
    \chi = \mathcal{O}\left(\npf L\right)
    \,.
\end{align}
For the rest of this work we will assume that $L = \mathcal{O}(1)$ and hence \cref{eq:chi} reduces to 
\begin{align}
    \label{eq:pf_complexity}
    \chi = \mathcal{O}(\npf)
    \,.
\end{align}
Thus, under our assumptions, the parameter $\npf$ fully determines the asymptotic scaling of both the exponentiation error $\|\Delta_{\mathrm{sim}}\|$ and the algorithmic gate complexity $\chi$. During an actual simulation procedure the question arises as to how to actually choose an \emph{explicit} value of $\npf$ such that $\epsilon_{\delta} \leq \sigma_{\hat O}/\beta$. The following theorem provides a valid lower bound on the required Trotter number $\npf$:

\begin{theorem}
    \label{thm:pf_E}
     For any $\beta \in \mathbb{R}^+$, to ensure $\epsilon_{\delta} \leq \sigma_{\hat O} / \beta$ when using a $2^{\mathrm{nd}}$ order PF, it suffices to choose the Trotter number $\npf$ as follows:
    \begin{align}
        \label{eq:r_lb}
        \npf = \sqrt{\frac{2\beta\|\hat{O}(0,a)\|ft^3}{\sigma_{\mathrm{O}}}}
        \,,
    \end{align}
    where $f \in \mathbb{C}$ defined as follows:
    \begin{align}
        f := \frac{1}{12}\sum_{\ell_1 = 1}^{L} \left|\left| \left[ \sum_{\ell_3 = \ell_1}^{L} H_{\ell_3}, \left[ \sum_{\ell_2 = \ell_1+1}^{L} H_{\ell_2}, H_{\ell_1} \right] \right] \right|\right| \\
        + \frac{1}{24} \sum_{\ell_1 = 1}^{L} \left|\left| \left[ H_{\ell_1}, \sum_{\ell_2 = \ell_1+1}^{L} H_{\ell_2} \right] \right|\right|
        \,.
    \end{align}
 \end{theorem}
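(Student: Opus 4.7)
The plan is to chain two ingredients: (i) Lemma~\ref{lemma:alg_err}, which has already converted the statistical requirement $\epsilon_\delta \leq \sigma_{\hat O}/\beta$ into the operator-norm condition $\|\Delta_{\mathrm{sim}}\| \leq \sigma_{\hat O}/(2\beta\|\hat O(0,a)\|)$, and (ii) a rigorous upper bound on the full-time 2nd order Trotter error expressed in terms of the nested commutators appearing in $f$. The theorem then follows by inverting the resulting inequality for $\npf$.

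First I would invoke a single-step error bound for the symmetric 2nd order product formula. Writing $\delta_t = t/\npf$ and Taylor-expanding $S_2(\delta_t)\, e^{iH\delta_t}$ about $\delta_t = 0$, or equivalently applying the commutator-scaling framework of Childs, Su, Tran, Wiebe and Zhu, one obtains
\begin{align*}
\|S_2(\delta_t) - e^{-iH\delta_t}\| \leq f\, \delta_t^3,
\end{align*}
where $f$ is precisely the combination of nested commutators stated in the theorem. I would then promote this to a bound on the full evolution by means of the telescoping identity $U^N - V^N = \sum_{k=0}^{N-1} U^k (U-V) V^{N-1-k}$; together with unitarity of both $S_2(\delta_t)$ and $e^{-iH\delta_t}$, this yields
\begin{align*}
\|\Delta_{\mathrm{sim}}\| = \|S_2(\delta_t)^{\npf} - e^{-iHt}\| \leq \npf f \delta_t^3 = \frac{f t^3}{\npf^2}.
\end{align*}

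Finally I would combine this with the condition delivered by Lemma~\ref{lemma:alg_err}, namely
\begin{align*}
\frac{f t^3}{\npf^2} \;\leq\; \frac{\sigma_{\hat O}}{2\beta\|\hat O(0,a)\|},
\end{align*}
and solve for $\npf$. Rearranging immediately reproduces the lower bound on $\npf$ stated in Eq.~\eqref{eq:r_lb}.

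The main obstacle is the rigorous derivation of the per-step bound with the specific numerical prefactors $1/12$ and $1/24$ and the particular ordered-sum structure $\sum_{\ell_1}[\sum_{\ell_3\geq \ell_1}H_{\ell_3},\cdot]$ that appears in $f$. This requires carefully collecting all $\mathcal{O}(\delta_t^3)$ contributions generated by the symmetric Suzuki product, keeping track of the ordering of factors, and using an integral-remainder form of the BCH expansion rather than an informal series truncation, so that the discarded tail is genuinely subleading in $\delta_t$ rather than merely formally small. Once this commutator-level bound is in hand, the telescoping and the final algebraic inversion are routine.
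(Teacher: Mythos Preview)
Your proposal is correct and matches the paper's proof essentially line for line: the paper invokes Proposition~16 of Childs, Su, Tran, Wiebe and Zhu for the single-step bound, applies the triangle inequality $\npf$ times (your telescoping identity) to obtain $\|\Delta_{\mathrm{sim}}\| \leq f t^3/\npf^2$, and then combines this with Lemma~\ref{lemma:alg_err} exactly as you describe. Your ``main obstacle'' is handled in the paper simply by citing that proposition rather than rederiving the $1/12$ and $1/24$ prefactors.
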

\begin{proof}
    We use Proposition 16 from Ref.~\cite{Childs:2019hts} with $t \rightarrow t/\npf$ and use the triangle inequality $\npf$-many times to arrive at the following upper bound for $\|\Delta_{\mathrm{sim}}\|$:
    \begin{align}
        \label{eq:trotter_error_bound}
        \|\Delta_{\mathrm{sim}}\| \leq \frac{ft^3}{\npf^2}
        \,.
    \end{align}
    We can then bound the RHS of this inequality with the upper bound from \cref{lemma:alg_err} to arrive at the lower bound in \cref{eq:r_lb}.
\end{proof}

Thus, \cref{thm:pf_E} provides a sufficient choice for the Trotter number $\npf$ required to drive the error stemming from approximating the true time evolution operator with a PF formula below the statistical error $\sigma_{\hat O}$ by a factor of $\beta$. 
This allows one to use all the machinery described in \cref{subsec:renorm_exact_te} and proceed as if the time evolution is exact. 

\subsubsection{SBTE protocol with Quantum Signal Processing}
\label{subsec:qsp}
As we did for product formulas, we start this section with a brief review of the theory of QSP. 

QSP provides an alternative algorithmic framework to performing time evolution with the benefit of providing optimal dependence on the simulation time $t$ and exponentiation error $\|\Delta_{\mathrm{sim}}\|$. 
We begin this section by reviewing how to performing Hamiltonian simulation within this framework, and then provide an analytical bound analogous to \cref{thm:pf_E} that provides a sufficient choice for the key algorithmic parameter that controls both $\|\Delta_{\mathrm{sim}}\|$ and the elementary gate complexity $\chi$. 
We note that while there are multiple variants on the core algorithmic routine, we follow the Generalized QSP framework presented in Ref.~\cite{Motlagh:2023oqc}.

The theory of QSP allows us to implement bounded finite degree polynomial transformations of unitary operators. 
In particular, let $U$ be a unitary operation and let $P(U)$ be a degree-$d$ polynomial such that
\begin{align}
    P(x)^2 \leq 1 \hspace{2mm} \forall x \in \{y\in\mathbb{C} : |y| = 1\}.
\end{align}
From Corollary 5 of Ref.~\cite{Motlagh:2023oqc}, we are then guaranteed the existence of a set of $(d+1)$-many pairs of phases $\mathcal{A} = \{(\theta_1,\phi_1),(\theta_2,\phi_2),\dots,(\theta_d,\phi_d),(\theta_{d+1},\phi_{d+1})\}$ (where $(\theta_i,\phi_i) \in \mathbb{R}^2$) and a scalar $\lambda\in\mathbb{R}$ such that the following is true:
\begin{align}
\label{eq:be_P}
 U_{\mathcal{A}}(C_{U,0}) &:= \left(\prod_{i = 2}^{d+1}R(\theta_i,\phi_i,0)C_{U,0}\right)R(\theta_1,\phi_1,\lambda) \\ &= \begin{pmatrix}
     P(U) & * \\
     * & *
 \end{pmatrix},
\end{align}
where each $*$ represents a sub-block with entries chosen to make $U_{\mathcal{A}}(C_{U,0})$ unitary, $C_{U,0}$ is the zero-controlled application of $U$, i.e.
\begin{align}
    C_{U,0} := \begin{pmatrix}
        U & 0 \\
        0 & I
    \end{pmatrix},
\end{align}
and $R(\theta_i,\phi_i,\lambda)$ represents the following $SU(2)$ rotation:
\begin{align}
    R(\theta_i, \phi_i, \lambda) = 
\begin{pmatrix}
e^{i(\lambda+\phi_i)} \cos(\theta_i) & e^{i\phi_i} \sin(\theta_i) \\
e^{i\lambda} \sin(\theta_i) & -\cos(\theta_i)
\end{pmatrix}
\otimes I.
\end{align}
In other words, the unitary $U_{\mathcal{A}}(C_{U,0})$ encodes the polynomial $P(U)$ in its top left block. 
Thus, the goal of QSP is to find the appropriate set of phases $\mathcal{A}$ and scalar $\lambda$ such that the resulting QSP unitary encodes a desired polynomial. 
As shown in Ref.~\cite{Motlagh:2023oqc}, the set $\mathcal{A}$ can be determined efficiently and accurately using an optimization algorithm for $d$ up to $2^{24}$.

Our approach to apply QSP to the task of time evolution involves the following steps:
\begin{enumerate}
    \item Construct a unitary that provides query access to the eigenvalues of the Hamiltonian $H$.
    \item Determine the set $\mathcal{A}$ necessary to implement a polynomial transformation of these eigenvalues such that the resulting polynomial sufficiently approximates the true time evolution operator
\end{enumerate}

We accomplish the first of these tasks as follows. Given a Hamiltonian $H$, the \emph{block-encoding} of $H$ (denoted by $U_H$) is a unitary that encodes $H$ in (typically) its top left block:
 \begin{align}
     \label{eq:be}
     U_H = \begin{pmatrix}
         H / \alpha_H & * \\
         * & *
     \end{pmatrix},
 \end{align}
where $\alpha_H \in \mathbb{R}^+$ is defined such that $\|H/\alpha_H\| \leq 1$ and $*$ refers to a block of entries that make $U_H$ unitary. 
The size of the block $*$ corresponds to the number of ancilla qubits used to construct $U_H$. 
Note that \cref{eq:be_P} provides another example of a block-encoding, specifically of the polynomial $P(U)$. 
The most general construction of $U_H$ can be performed by the linear combination of unitaries (LCU) method~\cite{Childs:2012gwh}. 
However, this can prove to be computationally challenging. 
In many instances, including Hamiltonian LGTs, the structure of $H$ can lend itself to more efficient block-encoding constructions \cite{Kane:2024odt,Hariprakash:2023tla,Rhodes:2024zbr,Simon:2025pbo}.

Given $U_H$, we then use it as input to the \emph{qubitization} procedure presented in Ref.~\cite{Low:2016znh} to construct a unitary $W_H$ (known as the walk operator or iterate) designed such that the eigenphases of $W_H$ encode the eigenvalues of the rescaled Hamiltonian $H/\alpha_H$ and thus provide the desired query access to the eigenvalues of $H$. 
Thus, by setting $U \rightarrow W_H$ in \cref{eq:be_P} we can construct polynomial transformations of $W_H$ and hence also of $H/\alpha_H$. 
In particular, we seek a polynomial transformation that satisfies:
\begin{align}
     P(W_H) = \exp(-iHt) = \exp\left(-i\left(\frac{H}{\alpha_H}\right)\alpha_H t\right).
\end{align}
If we let $x\in\mathrm{spec}(H/\alpha_H)$, one can use the Jacobi-Anger expansion \cite{Low:2016sck,Gilyen:2018khw} to obtain the following polynomial decomposition
\begin{align}
    \mathrm{e}^{-i \alpha_H x t} &= J_0(\alpha_H\, t) + 2\sum_{k \hspace{1mm} \mathrm{even}, > 0}^{\infty} \left(-i\right)^{k/2}J_k(\alpha_H\,t)T_k(x) \hspace{1mm} \nonumber \\ &\qquad + 2i\sum_{k \hspace{1mm} \mathrm{odd}, > 0}^{\infty}\left(-i\right)^{(k-1)/2}J_k(\alpha_H\,t)T_k(x)
    \,,
\end{align}
where $J_k$ is the $k^{\mathrm{th}}$ order Bessel function of the first kind and $T_k$ is the $k^\mathrm{th}$ Chebyshev polynomial of the first kind. 
If we construct a finite degree approximation to this polynomial decomposition, we can then encode an approximation to the true time evolution operator in the top left block of a unitary of the form \cref{eq:be_P}. 
To accomplish this, we consider the following truncated decomposition:
\begin{align}
    \label{eq:qsp_kmax}
    \mathrm{e}^{-i \alpha_H x t} &= J_0(\alpha_H\, t) + 2\sum_{k \hspace{1mm} \mathrm{even}, > 0}^{N_{\mathrm{QSP}}} \left(-i\right)^{k/2}J_k(\alpha_H\, t)T_k(x) \hspace{1mm} \nonumber \\ &\qquad+2i\sum_{k \hspace{1mm} \mathrm{odd}, > 0}^{N_{\mathrm{QSP}}}\left(-i\right)^{(k-1)/2}J_k(\alpha_H\,t)T_k(x)
    \,.
\end{align}

Thus, $N_{\mathrm{QSP}}$ determines the degree of the polynomial used to approximate the true time evolution operator, and from \cref{eq:be_P} we see that the elementary gate complexity $\chi$ scales linearly with $N_{\mathrm{QSP}}$:
\begin{align}
    \label{eq:qsp_complexity}
    \chi = \mathcal{O}(N_{\mathrm{QSP}}).
\end{align}
Thus, similar to the parameter $N_{\mathrm{PF}}$ from \cref{subsec:pf}, $N_{\mathrm{QSP}}$ is the key algorithmic parameter for QSP that determines both the exponentiation error $\|\Delta_{\mathrm{sim}}\|$ (as shown for instance in Ref.~\cite{Low:2016sck}, $\|\Delta_{\rm sim}\|$ decreases \emph{exponentially} as we increase $N_{\mathrm{QSP}}$) as well as $\chi$. In the language of \cref{sec:syst_unc}, we identify $\delta = 1/N_{\mathrm{QSP}}$. The following theorem provides an explicit choice for value of $N_{\mathrm{QSP}}$ required to satisfy the condition $\epsilon_{\delta} \leq \sigma_{\hat O}/\beta$ and hence perform an actual simulation: 
\begin{theorem}
    \label{thm:qsp_E}
    For any $\beta \in \mathbb{R}^+$, to ensure $\epsilon_{\delta} \leq \sigma_{\hat O}/\beta$ when using the Generalized QSP algorithmic routine, it suffices to choose the truncation parameter $N_{\mathrm{QSP}}$ shown in \cref{eq:qsp_kmax} as follows:
    \begin{align}
        N_{\mathrm{QSP}} \geq \frac{e\alpha_Ht}{2} + \log\left(2\beta\|\hat{O}(0,a)\|\right) + \log\left(\frac{1}{\sigma_{\mathrm{O}}}\right).
    \end{align}
\end{theorem}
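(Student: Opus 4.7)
The plan is to mirror the structure of the proof of \cref{thm:pf_E}: first invoke \cref{lemma:alg_err} to reduce the observable-level statement to a bound on $\|\Delta_{\mathrm{sim}}\|$, and then use a standard tail bound for the Jacobi--Anger expansion to translate the required operator-norm bound into an explicit lower bound on $N_{\mathrm{QSP}}$. Concretely, by \cref{lemma:alg_err} it suffices to choose $N_{\mathrm{QSP}}$ large enough that
\begin{align}
\|\Delta_{\mathrm{sim}}\| \;\leq\; \frac{\sigma_{\hat O}}{2\beta\,\|\hat O(0,a)\|}.
\end{align}

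Since QSP block-encodes the truncated Jacobi--Anger approximation of $e^{-i\alpha_H x t}$ shown in \cref{eq:qsp_kmax}, the operator $\Delta_{\mathrm{sim}}$ is (up to the block-encoding isometry) the discarded tail of the series, so its spectral norm is bounded by
\begin{align}
\|\Delta_{\mathrm{sim}}\| \;\leq\; 4\sum_{k>N_{\mathrm{QSP}}}\bigl|J_k(\alpha_H t)\bigr|,
\end{align}
using $\|T_k(x)\|\le 1$ on the spectrum of $H/\alpha_H$ and the triangle inequality on the even/odd partial sums. The next step is to control this Bessel-function tail. Using the elementary estimate $|J_k(z)|\le (|z|/2)^k / k!$ together with Stirling's inequality $k!\ge (k/e)^k$, one obtains $|J_k(z)|\le (e|z|/(2k))^k$, so the summand decays super-exponentially as soon as $k$ exceeds $e\alpha_H t/2$. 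Summing a geometric tail beyond that threshold gives a bound of the schematic form
\begin{align}
\|\Delta_{\mathrm{sim}}\| \;\lesssim\; \exp\!\Bigl(-\bigl(N_{\mathrm{QSP}}-\tfrac{e\alpha_H t}{2}\bigr)\Bigr),
\end{align}
which is the standard QSP truncation bound used in Ref.~\cite{Low:2016sck}. Finally I would solve this exponential inequality for $N_{\mathrm{QSP}}$: setting the right-hand side equal to $\sigma_{\hat O}/(2\beta\|\hat O(0,a)\|)$ and taking logarithms yields
\begin{align}
N_{\mathrm{QSP}} \;\geq\; \frac{e\alpha_H t}{2} + \log\!\bigl(2\beta\|\hat O(0,a)\|\bigr) + \log\!\bigl(1/\sigma_{\hat O}\bigr),
\end{align}
which matches the claimed bound.

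The routine parts are the reduction via \cref{lemma:alg_err} and the algebra of inverting the exponential. The main obstacle, and the only step requiring some care, is nailing down the precise multiplicative prefactor in the Bessel tail bound so that no additional $O(1)$ terms appear beyond the two logarithms; in practice this means being careful about (i) the factor of $4$ from splitting into even and odd partial sums of \cref{eq:qsp_kmax}, (ii) the Stirling constant, and (iii) the geometric-series factor from summing $k>N_{\mathrm{QSP}}$. Any slack in these constants can be absorbed into the $\log\bigl(2\beta\|\hat O(0,a)\|\bigr)$ term, so the asymptotic form of the bound is robust, and this is what makes the final sufficient condition clean.
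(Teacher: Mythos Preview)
Your proposal is correct and follows essentially the same approach as the paper: combine \cref{lemma:alg_err} with the standard Jacobi--Anger tail bound for QSP, then invert to obtain the lower bound on $N_{\mathrm{QSP}}$. The paper simply cites the bound $N_{\mathrm{QSP}}\geq \tfrac{e\alpha_H t}{2}+\log(1/\|\Delta_{\mathrm{sim}}\|)$ from Refs.~\cite{Berry:2015hst,Low:2016znh,Gilyen:2018khw} rather than rederiving it, whereas you sketch the Bessel/Stirling argument explicitly, but the logic is the same.
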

\begin{proof}
    From Refs.~\cite{Berry:2015hst,Low:2016znh,Gilyen:2018khw}, we have that
    \begin{align}
        \label{eq:qsp_error_bound}
        N_{\mathrm{QSP}} \geq \frac{e\alpha_Ht}{2} + \log\left(\frac{1}{\|\Delta_{\mathrm{sim}}\|}\right).
    \end{align}
    for which the proof is based on the super-exponentially decaying upper bound on the norms of Bessel functions with respect to the order. We can then use \cref{lemma:alg_err} to arrive at the desired result.
\end{proof}

\subsection{Comparisons}
\label{ssec:comparison}
The most striking differences between the two algorithmic kernels we consider for Hamiltonian simulation are associated with how the rigorous error bounds scale with key parameters $N_{\mathrm{PF}}$ and $N_{\mathrm{QSP}}$ functionally depend on $t$, $\sigma_{\hat O}$, $\beta$, and $\|\hat{O}(0,a)\|$. 
From \cref{thm:pf_E,thm:qsp_E} we see that $N_{\mathrm{QSP}}$ asymptotically scales more favorably than $N_{\mathrm{PF}}$ with respect to all these quantities. 
In particular from \cref{thm:pf_E} we see that $N_{\mathrm{PF}}$ scales polynomially with $1/\sigma_{\hat O}$ whereas from \cref{thm:qsp_E} we see that $N_{\mathrm{QSP}}$ scales \emph{logarithmically} with $1/\sigma_{\hat O}$. 
Moreover, while $N_{\mathrm{PF}}$ scales polynomially with the product $\beta\|\hat{O}(0,a)\|$, $N_{\mathrm{QSP}}$ again scales \emph{logarithmically} with the product $\beta\|\hat{O}(0,a)\|$. 
Furthermore, the additional logarithmic overhead when using QSP is \emph{additive} relative to the cost of simulating time evolution to error $\sigma_{\hat O}$. 
Because the scale factor must satisfy $\alpha_H \geq \|H\|$, it generally grows with volume, which implies that the overhead is entirely negligible for any physically interesting operator.

To better understand the additional overhead for both methods, we consider two physically relevant classes of operators $\hat{O}(0,a)$.
In particular, consider the overhead for local operators, \emph{e.g.}, the local quark condensate, and operators that involve summing over the entire lattice, \emph{e.g.}, the hadronic tensor. 
In the case of local operators, the operator norm $\|\hat O(0,a)\|$ is independent of the volume and may introduce only a small overhead when using PFs.
Simulating non-local operators whose norm $\|\hat O(0,a)\|$ grows with the volume, however, will result in a significant cost increase for simulations via PFs.
While it is possible that this cost scaling could be improved with more detailed error analysis, if one instead uses QSP, the overhead can be guaranteed to be entirely negligible \emph{a priori}.

Additionally, as we show for the case of a $\phi^4$ Hamiltonian in \cref{app:comparing_bounds}, the QSP error bound used to prove \cref{thm:qsp_E} is generally known to be tighter than the analogous PF error bound. 
Thus, employing the SBTE protocol we propose in this work, the QSP error bound would then enable us to choose smaller values of $N_{\mathrm{QSP}}$ than the values of $N_{\mathrm{PF}}$ implied by the PF bound relative to the ideal values of these quantities (i.e. the smallest values of $N_{\mathrm{QSP}},N_{\mathrm{PF}}$ for which the desired error bound is satisfied). 

While employing rigorous error bounds to determine the value of $\delta$ that guarantees $\epsilon_\delta \leq \sigma_{\hat O}/\beta$ has the advantage of only needing to perform a single simulation, an alternative approach is to simply perform the calculation at smaller values of $\delta$ until the value $\langle \hat{O}(\delta)\rangle$ no longer changes within error bars.
Such an approach could result in savings if the error bounds are known to be loose, as is the case with PFs. 

From this discussion it should be clear that, for a given problem, it is not immediately obvious which technique would be better suited for performing the SBTE protocol proposed in this work and result in an overall lower gate complexity. 
The answer to this question requires knowledge of the prefactors appearing in \cref{eq:pf_complexity,eq:qsp_complexity} as well as the quantities $f$ (\cref{thm:pf_E}) and $\alpha_H$ (\cref{eq:be}). 
The major contribution to the QSP prefactor is associated with the task of constructing the block-encoding for a given Hamiltonian (which also determines $\alpha_H$). 
This task is highly problem dependent and hence we expect the more efficient algorithmic kernel to also change depending on the specific Hamiltonian as well as the specific values of $\|\Delta_{\mathrm{sim}}\|,t,$ and the size of the lattice. 
The work in Refs.~\cite{Hariprakash:2023tla,Kane:2024odt} presents, for example, values of $\|\Delta_{\mathrm{sim}}\|$ and $t$ for which the \emph{combined} task of block-encoding and time evolution via QSP proves to be more efficient than time evolution by a PF for the case of the $\phi^4$ Hamiltonian at specific lattice sizes. 
Thus, for a specific problem/Hamiltonian one can take advantage of any structure inherent to the problem to construct efficient block-encodings and either analytically or numerically (by extrapolation) determine the parameter ranges corresponding to the either QSP or PF being the more efficient algorithmic choice.

\section{Discussion and Conclusion  \label{sec:conclusion}}

In this work, we have presented an approach, called the SBTE protocol, to taking the continuum limit in dynamical simulations of Hamiltonian lattice field theories. 
Our framework is conceptually simpler than previous methods based on a renormalization procedure, which were shown to allow classical calculations using the heat-kernel action to obtain the renormalization trajectory for time evolution in pure Yang-Mills gauge theory. 
One advantage of this new protocol is that it works for any approximation to the time evolution operator, but more important is that it works for any Hamiltonian, in particular for the strong interaction including fermions, which is one of the most important use cases for Hamiltonian simulations. 
The SBTE protocol is based on the idea that as long as the systematic error from approximate time evolution is negligible compared to the statistical uncertainty of the simulation, its effect on the continuum limit can be safely ignored.
By exploiting rigorous upper bounds on the error incurred from approximate time evolution, one can \emph{a priori} determine the quantum gate cost required to achieve this.

To demonstrate our approach, we considered simulation via product formulas and methods based on Quantum Signal Processing.
Using existing error bounds, we showed that, while driving the time evolution error below the statistical threshold can result in significant overhead when using PFs due to the sub-optimal scaling $\mathcal{O}(t^{1+1/p}\epsilon^{-1/p})$, this additional cost is negligible when using QSP based methods due to its provably optimal scaling $\mathcal{O}(\alpha_Ht+\log \frac{1}{\epsilon})$, in particular due to the \emph{additive} $\log \frac{1}{\epsilon}$ scaling with error; for algorithms with multiplicative error scaling of the form $\sim t \log \frac{1}{\epsilon}$, the overhead will be smaller than for PFs, but still non-negligible.

Rather than using error bounds in the SBTE protocol, we discussed an alternative approach where one simply performs the calculation with smaller values of $\delta$ until the value $\langle \hat O(\delta) \rangle$ no longer changes within error bars.
This approach is most likely to be beneficial when the error bounds are known to be loose, as is the case with product-formulas.
In general, the computational cost in the SBTE protocol is simply given by the computational cost of ensuring the systematic uncertainties introduced by an approximation to the time-evolution operator are below the statistical ones.

In addition to developing this general alternative approach, we provided a detailed discussion of the proposal for using classical resources~\cite{Carena:2021ltu} to help reduce the quantum cost of determining the renormalization trajectory.
In this discussion we highlighted that for pure gauge theories these methods  require classical simulations with complicated actions.
Once fermions are included, no classical actions are known for this approach to work.
It is also important to keep in mind that, because performing time evolution requires quantum computers with similar capabilities as would be needed for tuning bare parameters and scale setting, such methods do not allow one to get away with using smaller or noisier quantum devices.
These approaches can, however, trade total run-time on quantum computers for run-time on classical computers.

Our findings suggest several directions for future research. 
One promising avenue is to develop an explicit effective Hamiltonian description for QSP-based time evolution, which could provide deeper insight into the renormalization process and potentially inspire new strategies similar to the PF-based approach. 
Additionally, a comprehensive end-to-end renormalization procedure—including parameter tuning, scale setting, and operator renormalization—remains to be established for general simulation algorithms. 
There are many different renormalization schemes one can imagine using, and detailed comparisons of the computational cost are necessary to understand which approach is best suited for a given problem.

Ultimately, our method lays the groundwork for systematic and fair comparisons between different quantum simulation algorithms for lattice field theory simulations, enabling one to assess the total quantum resources required to obtain continuum physics. 
This, in turn, will be crucial for guiding the development of quantum algorithms and hardware toward achieving quantum advantage in this challenging domain.

\section*{Acknowledgements}

We would like to thank Clement Charles, Anthony Ciavarella, Dorota Grabowska and Chung-Chun Hsieh, Neel Modi, Henry Lamm and Wanqiang Liu for many useful discussions. 
In particular Henry Lamm and Wanqiang Liu provided many insights into Ref.~\cite{Carena:2021ltu}.
The work of C.F.K. was supported by the U.S. Department of Energy, Office of Science, Accelerated Research in Quantum Computing, Fundamental Algorithmic Research toward Quantum Utility (FAR-Qu). C.F.K. was further supported in part by the Department of Physics, Maryland Center for Fundamental Physics, and the College of Computer, Mathematical, and Natural Sciences at the University of Maryland, College Park, as well as the U.S. Department of Energy, Office of Science, Office of Advanced Scientific Computing Research, Department of Energy Computational Science Graduate Fellowship under Award Number DE-SC0020347.
C.W.B and S.H. are supported by the U.S.~Department of Energy, Office of Science under contract DE-AC02-05CH11231, primarily through Quantum Information Science Enabled Discovery (QuantISED) for High Energy Physics (KA2401032). C.W.B also acknoledges support from the U.S. Department of Energy, Office of Science, National Quantum Information Science Research Centers, Quantum Systems Accelerator. 

\bibliographystyle{apsrev4-1}
\bibliography{references}

\begin{thebibliography}{165}%
\makeatletter
\providecommand \@ifxundefined [1]{%
 \@ifx{#1\undefined}
}%
\providecommand \@ifnum [1]{%
 \ifnum #1\expandafter \@firstoftwo
 \else \expandafter \@secondoftwo
 \fi
}%
\providecommand \@ifx [1]{%
 \ifx #1\expandafter \@firstoftwo
 \else \expandafter \@secondoftwo
 \fi
}%
\providecommand \natexlab [1]{#1}%
\providecommand \enquote  [1]{``#1''}%
\providecommand \bibnamefont  [1]{#1}%
\providecommand \bibfnamefont [1]{#1}%
\providecommand \citenamefont [1]{#1}%
\providecommand \href@noop [0]{\@secondoftwo}%
\providecommand \href [0]{\begingroup \@sanitize@url \@href}%
\providecommand \@href[1]{\@@startlink{#1}\@@href}%
\providecommand \@@href[1]{\endgroup#1\@@endlink}%
\providecommand \@sanitize@url [0]{\catcode `\\12\catcode `\$12\catcode `\&12\catcode `\#12\catcode `\^12\catcode `\_12\catcode `\%12\relax}%
\providecommand \@@startlink[1]{}%
\providecommand \@@endlink[0]{}%
\providecommand \url  [0]{\begingroup\@sanitize@url \@url }%
\providecommand \@url [1]{\endgroup\@href {#1}{\urlprefix }}%
\providecommand \urlprefix  [0]{URL }%
\providecommand \Eprint [0]{\href }%
\providecommand \doibase [0]{http://dx.doi.org/}%
\providecommand \selectlanguage [0]{\@gobble}%
\providecommand \bibinfo  [0]{\@secondoftwo}%
\providecommand \bibfield  [0]{\@secondoftwo}%
\providecommand \translation [1]{[#1]}%
\providecommand \BibitemOpen [0]{}%
\providecommand \bibitemStop [0]{}%
\providecommand \bibitemNoStop [0]{.\EOS\space}%
\providecommand \EOS [0]{\spacefactor3000\relax}%
\providecommand \BibitemShut  [1]{\csname bibitem#1\endcsname}%
\let\auto@bib@innerbib\@empty
\bibitem [{\citenamefont {Jordan}\ \emph {et~al.}(2012)\citenamefont {Jordan}, \citenamefont {Lee},\ and\ \citenamefont {Preskill}}]{Jordan:2012xnu}%
  \BibitemOpen
  \bibfield  {author} {\bibinfo {author} {\bibfnamefont {S.~P.}\ \bibnamefont {Jordan}}, \bibinfo {author} {\bibfnamefont {K.~S.~M.}\ \bibnamefont {Lee}}, \ and\ \bibinfo {author} {\bibfnamefont {J.}~\bibnamefont {Preskill}},\ }\href {\doibase 10.1126/science.1217069} {\bibfield  {journal} {\bibinfo  {journal} {Science}\ }\textbf {\bibinfo {volume} {336}},\ \bibinfo {pages} {1130} (\bibinfo {year} {2012})},\ \Eprint {http://arxiv.org/abs/1111.3633} {arXiv:1111.3633 [quant-ph]} \BibitemShut {NoStop}%
\bibitem [{\citenamefont {Bauer}\ \emph {et~al.}(2023{\natexlab{a}})\citenamefont {Bauer} \emph {et~al.}}]{Bauer:2022hpo}%
  \BibitemOpen
  \bibfield  {author} {\bibinfo {author} {\bibfnamefont {C.~W.}\ \bibnamefont {Bauer}} \emph {et~al.},\ }\href {\doibase 10.1103/PRXQuantum.4.027001} {\bibfield  {journal} {\bibinfo  {journal} {PRX Quantum}\ }\textbf {\bibinfo {volume} {4}},\ \bibinfo {pages} {027001} (\bibinfo {year} {2023}{\natexlab{a}})},\ \Eprint {http://arxiv.org/abs/2204.03381} {arXiv:2204.03381 [quant-ph]} \BibitemShut {NoStop}%
\bibitem [{\citenamefont {Bauer}\ \emph {et~al.}(2023{\natexlab{b}})\citenamefont {Bauer}, \citenamefont {Davoudi}, \citenamefont {Klco},\ and\ \citenamefont {Savage}}]{Bauer:2023qgm}%
  \BibitemOpen
  \bibfield  {author} {\bibinfo {author} {\bibfnamefont {C.~W.}\ \bibnamefont {Bauer}}, \bibinfo {author} {\bibfnamefont {Z.}~\bibnamefont {Davoudi}}, \bibinfo {author} {\bibfnamefont {N.}~\bibnamefont {Klco}}, \ and\ \bibinfo {author} {\bibfnamefont {M.~J.}\ \bibnamefont {Savage}},\ }\href {\doibase 10.1038/s42254-023-00599-8} {\bibfield  {journal} {\bibinfo  {journal} {Nature Rev. Phys.}\ }\textbf {\bibinfo {volume} {5}},\ \bibinfo {pages} {420} (\bibinfo {year} {2023}{\natexlab{b}})},\ \Eprint {http://arxiv.org/abs/2404.06298} {arXiv:2404.06298 [hep-ph]} \BibitemShut {NoStop}%
\bibitem [{\citenamefont {Di~Meglio}\ \emph {et~al.}(2024)\citenamefont {Di~Meglio} \emph {et~al.}}]{DiMeglio:2023nsa}%
  \BibitemOpen
  \bibfield  {author} {\bibinfo {author} {\bibfnamefont {A.}~\bibnamefont {Di~Meglio}} \emph {et~al.},\ }\href {\doibase 10.1103/PRXQuantum.5.037001} {\bibfield  {journal} {\bibinfo  {journal} {PRX Quantum}\ }\textbf {\bibinfo {volume} {5}},\ \bibinfo {pages} {037001} (\bibinfo {year} {2024})},\ \Eprint {http://arxiv.org/abs/2307.03236} {arXiv:2307.03236 [quant-ph]} \BibitemShut {NoStop}%
\bibitem [{\citenamefont {Banerjee}\ \emph {et~al.}(2012)\citenamefont {Banerjee}, \citenamefont {Dalmonte}, \citenamefont {Muller}, \citenamefont {Rico}, \citenamefont {Stebler}, \citenamefont {Wiese},\ and\ \citenamefont {Zoller}}]{Banerjee:2012pg}%
  \BibitemOpen
  \bibfield  {author} {\bibinfo {author} {\bibfnamefont {D.}~\bibnamefont {Banerjee}}, \bibinfo {author} {\bibfnamefont {M.}~\bibnamefont {Dalmonte}}, \bibinfo {author} {\bibfnamefont {M.}~\bibnamefont {Muller}}, \bibinfo {author} {\bibfnamefont {E.}~\bibnamefont {Rico}}, \bibinfo {author} {\bibfnamefont {P.}~\bibnamefont {Stebler}}, \bibinfo {author} {\bibfnamefont {U.~J.}\ \bibnamefont {Wiese}}, \ and\ \bibinfo {author} {\bibfnamefont {P.}~\bibnamefont {Zoller}},\ }\href {\doibase 10.1103/PhysRevLett.109.175302} {\bibfield  {journal} {\bibinfo  {journal} {Phys. Rev. Lett.}\ }\textbf {\bibinfo {volume} {109}},\ \bibinfo {pages} {175302} (\bibinfo {year} {2012})},\ \Eprint {http://arxiv.org/abs/1205.6366} {arXiv:1205.6366 [cond-mat.quant-gas]} \BibitemShut {NoStop}%
\bibitem [{\citenamefont {Hauke}\ \emph {et~al.}(2013)\citenamefont {Hauke}, \citenamefont {Marcos}, \citenamefont {Dalmonte},\ and\ \citenamefont {Zoller}}]{Hauke:2013jga}%
  \BibitemOpen
  \bibfield  {author} {\bibinfo {author} {\bibfnamefont {P.}~\bibnamefont {Hauke}}, \bibinfo {author} {\bibfnamefont {D.}~\bibnamefont {Marcos}}, \bibinfo {author} {\bibfnamefont {M.}~\bibnamefont {Dalmonte}}, \ and\ \bibinfo {author} {\bibfnamefont {P.}~\bibnamefont {Zoller}},\ }\href {\doibase 10.1103/PhysRevX.3.041018} {\bibfield  {journal} {\bibinfo  {journal} {Phys. Rev. X}\ }\textbf {\bibinfo {volume} {3}},\ \bibinfo {pages} {041018} (\bibinfo {year} {2013})},\ \Eprint {http://arxiv.org/abs/1306.2162} {arXiv:1306.2162 [cond-mat.quant-gas]} \BibitemShut {NoStop}%
\bibitem [{\citenamefont {Zohar}\ \emph {et~al.}(2013{\natexlab{a}})\citenamefont {Zohar}, \citenamefont {Cirac},\ and\ \citenamefont {Reznik}}]{Zohar:2013zla}%
  \BibitemOpen
  \bibfield  {author} {\bibinfo {author} {\bibfnamefont {E.}~\bibnamefont {Zohar}}, \bibinfo {author} {\bibfnamefont {J.~I.}\ \bibnamefont {Cirac}}, \ and\ \bibinfo {author} {\bibfnamefont {B.}~\bibnamefont {Reznik}},\ }\href {\doibase 10.1103/PhysRevA.88.023617} {\bibfield  {journal} {\bibinfo  {journal} {Phys. Rev. A}\ }\textbf {\bibinfo {volume} {88}},\ \bibinfo {pages} {023617} (\bibinfo {year} {2013}{\natexlab{a}})},\ \Eprint {http://arxiv.org/abs/1303.5040} {arXiv:1303.5040 [quant-ph]} \BibitemShut {NoStop}%
\bibitem [{\citenamefont {K\"uhn}\ \emph {et~al.}(2014)\citenamefont {K\"uhn}, \citenamefont {Cirac},\ and\ \citenamefont {Ba\~nuls}}]{Kuhn:2014rha}%
  \BibitemOpen
  \bibfield  {author} {\bibinfo {author} {\bibfnamefont {S.}~\bibnamefont {K\"uhn}}, \bibinfo {author} {\bibfnamefont {J.~I.}\ \bibnamefont {Cirac}}, \ and\ \bibinfo {author} {\bibfnamefont {M.-C.}\ \bibnamefont {Ba\~nuls}},\ }\href {\doibase 10.1103/PhysRevA.90.042305} {\bibfield  {journal} {\bibinfo  {journal} {Phys. Rev. A}\ }\textbf {\bibinfo {volume} {90}},\ \bibinfo {pages} {042305} (\bibinfo {year} {2014})},\ \Eprint {http://arxiv.org/abs/1407.4995} {arXiv:1407.4995 [quant-ph]} \BibitemShut {NoStop}%
\bibitem [{\citenamefont {Kasper}\ \emph {et~al.}(2016)\citenamefont {Kasper}, \citenamefont {Hebenstreit}, \citenamefont {Oberthaler},\ and\ \citenamefont {Berges}}]{Kasper:2015cca}%
  \BibitemOpen
  \bibfield  {author} {\bibinfo {author} {\bibfnamefont {V.}~\bibnamefont {Kasper}}, \bibinfo {author} {\bibfnamefont {F.}~\bibnamefont {Hebenstreit}}, \bibinfo {author} {\bibfnamefont {M.}~\bibnamefont {Oberthaler}}, \ and\ \bibinfo {author} {\bibfnamefont {J.}~\bibnamefont {Berges}},\ }\href {\doibase 10.1016/j.physletb.2016.07.036} {\bibfield  {journal} {\bibinfo  {journal} {Phys. Lett. B}\ }\textbf {\bibinfo {volume} {760}},\ \bibinfo {pages} {742} (\bibinfo {year} {2016})},\ \Eprint {http://arxiv.org/abs/1506.01238} {arXiv:1506.01238 [cond-mat.quant-gas]} \BibitemShut {NoStop}%
\bibitem [{\citenamefont {Zohar}\ \emph {et~al.}(2016)\citenamefont {Zohar}, \citenamefont {Cirac},\ and\ \citenamefont {Reznik}}]{Zohar:2015hwa}%
  \BibitemOpen
  \bibfield  {author} {\bibinfo {author} {\bibfnamefont {E.}~\bibnamefont {Zohar}}, \bibinfo {author} {\bibfnamefont {J.~I.}\ \bibnamefont {Cirac}}, \ and\ \bibinfo {author} {\bibfnamefont {B.}~\bibnamefont {Reznik}},\ }\href {\doibase 10.1088/0034-4885/79/1/014401} {\bibfield  {journal} {\bibinfo  {journal} {Rept. Prog. Phys.}\ }\textbf {\bibinfo {volume} {79}},\ \bibinfo {pages} {014401} (\bibinfo {year} {2016})},\ \Eprint {http://arxiv.org/abs/1503.02312} {arXiv:1503.02312 [quant-ph]} \BibitemShut {NoStop}%
\bibitem [{\citenamefont {{Martinez}}\ \emph {et~al.}(2016)\citenamefont {{Martinez}}, \citenamefont {{Muschik}}, \citenamefont {{Schindler}}, \citenamefont {{Nigg}}, \citenamefont {{Erhard}}, \citenamefont {{Heyl}}, \citenamefont {{Hauke}}, \citenamefont {{Dalmonte}}, \citenamefont {{Monz}}, \citenamefont {{Zoller}},\ and\ \citenamefont {{Blatt}}}]{Martinez:2016yna}%
  \BibitemOpen
  \bibfield  {author} {\bibinfo {author} {\bibfnamefont {E.~A.}\ \bibnamefont {{Martinez}}}, \bibinfo {author} {\bibfnamefont {C.~A.}\ \bibnamefont {{Muschik}}}, \bibinfo {author} {\bibfnamefont {P.}~\bibnamefont {{Schindler}}}, \bibinfo {author} {\bibfnamefont {D.}~\bibnamefont {{Nigg}}}, \bibinfo {author} {\bibfnamefont {A.}~\bibnamefont {{Erhard}}}, \bibinfo {author} {\bibfnamefont {M.}~\bibnamefont {{Heyl}}}, \bibinfo {author} {\bibfnamefont {P.}~\bibnamefont {{Hauke}}}, \bibinfo {author} {\bibfnamefont {M.}~\bibnamefont {{Dalmonte}}}, \bibinfo {author} {\bibfnamefont {T.}~\bibnamefont {{Monz}}}, \bibinfo {author} {\bibfnamefont {P.}~\bibnamefont {{Zoller}}}, \ and\ \bibinfo {author} {\bibfnamefont {R.}~\bibnamefont {{Blatt}}},\ }\href {\doibase 10.1038/nature18318} {\bibfield  {journal} {\bibinfo  {journal} {Nature}\ }\textbf {\bibinfo {volume} {534}},\ \bibinfo {pages} {516} (\bibinfo {year} {2016})},\ \Eprint {http://arxiv.org/abs/1605.04570} {arXiv:1605.04570 [quant-ph]} \BibitemShut {NoStop}%
\bibitem [{\citenamefont {Yang}\ \emph {et~al.}(2016)\citenamefont {Yang}, \citenamefont {Giri}, \citenamefont {Johanning}, \citenamefont {Wunderlich}, \citenamefont {Zoller},\ and\ \citenamefont {Hauke}}]{Yang:2016hjn}%
  \BibitemOpen
  \bibfield  {author} {\bibinfo {author} {\bibfnamefont {D.}~\bibnamefont {Yang}}, \bibinfo {author} {\bibfnamefont {G.~S.}\ \bibnamefont {Giri}}, \bibinfo {author} {\bibfnamefont {M.}~\bibnamefont {Johanning}}, \bibinfo {author} {\bibfnamefont {C.}~\bibnamefont {Wunderlich}}, \bibinfo {author} {\bibfnamefont {P.}~\bibnamefont {Zoller}}, \ and\ \bibinfo {author} {\bibfnamefont {P.}~\bibnamefont {Hauke}},\ }\href {\doibase 10.1103/PhysRevA.94.052321} {\bibfield  {journal} {\bibinfo  {journal} {Phys. Rev. A}\ }\textbf {\bibinfo {volume} {94}},\ \bibinfo {pages} {052321} (\bibinfo {year} {2016})},\ \Eprint {http://arxiv.org/abs/1604.03124} {arXiv:1604.03124 [quant-ph]} \BibitemShut {NoStop}%
\bibitem [{\citenamefont {Kokail}\ \emph {et~al.}(2019)\citenamefont {Kokail} \emph {et~al.}}]{Kokail:2018eiw}%
  \BibitemOpen
  \bibfield  {author} {\bibinfo {author} {\bibfnamefont {C.}~\bibnamefont {Kokail}} \emph {et~al.},\ }\href {\doibase 10.1038/s41586-019-1177-4} {\bibfield  {journal} {\bibinfo  {journal} {Nature}\ }\textbf {\bibinfo {volume} {569}},\ \bibinfo {pages} {355} (\bibinfo {year} {2019})},\ \Eprint {http://arxiv.org/abs/1810.03421} {arXiv:1810.03421 [quant-ph]} \BibitemShut {NoStop}%
\bibitem [{\citenamefont {Klco}\ \emph {et~al.}(2018)\citenamefont {Klco}, \citenamefont {Dumitrescu}, \citenamefont {McCaskey}, \citenamefont {Morris}, \citenamefont {Pooser}, \citenamefont {Sanz}, \citenamefont {Solano}, \citenamefont {Lougovski},\ and\ \citenamefont {Savage}}]{Klco:2018kyo}%
  \BibitemOpen
  \bibfield  {author} {\bibinfo {author} {\bibfnamefont {N.}~\bibnamefont {Klco}}, \bibinfo {author} {\bibfnamefont {E.~F.}\ \bibnamefont {Dumitrescu}}, \bibinfo {author} {\bibfnamefont {A.~J.}\ \bibnamefont {McCaskey}}, \bibinfo {author} {\bibfnamefont {T.~D.}\ \bibnamefont {Morris}}, \bibinfo {author} {\bibfnamefont {R.~C.}\ \bibnamefont {Pooser}}, \bibinfo {author} {\bibfnamefont {M.}~\bibnamefont {Sanz}}, \bibinfo {author} {\bibfnamefont {E.}~\bibnamefont {Solano}}, \bibinfo {author} {\bibfnamefont {P.}~\bibnamefont {Lougovski}}, \ and\ \bibinfo {author} {\bibfnamefont {M.~J.}\ \bibnamefont {Savage}},\ }\href {\doibase 10.1103/PhysRevA.98.032331} {\bibfield  {journal} {\bibinfo  {journal} {Phys. Rev. A}\ }\textbf {\bibinfo {volume} {98}},\ \bibinfo {pages} {032331} (\bibinfo {year} {2018})},\ \Eprint {http://arxiv.org/abs/1803.03326} {arXiv:1803.03326 [quant-ph]} \BibitemShut {NoStop}%
\bibitem [{\citenamefont {Lu}\ \emph {et~al.}(2019)\citenamefont {Lu} \emph {et~al.}}]{Lu:2018pjk}%
  \BibitemOpen
  \bibfield  {author} {\bibinfo {author} {\bibfnamefont {H.-H.}\ \bibnamefont {Lu}} \emph {et~al.},\ }\href {\doibase 10.1103/PhysRevA.100.012320} {\bibfield  {journal} {\bibinfo  {journal} {Phys. Rev. A}\ }\textbf {\bibinfo {volume} {100}},\ \bibinfo {pages} {012320} (\bibinfo {year} {2019})},\ \Eprint {http://arxiv.org/abs/1810.03959} {arXiv:1810.03959 [quant-ph]} \BibitemShut {NoStop}%
\bibitem [{\citenamefont {Kaplan}\ and\ \citenamefont {Stryker}(2020)}]{Kaplan:2018vnj}%
  \BibitemOpen
  \bibfield  {author} {\bibinfo {author} {\bibfnamefont {D.~B.}\ \bibnamefont {Kaplan}}\ and\ \bibinfo {author} {\bibfnamefont {J.~R.}\ \bibnamefont {Stryker}},\ }\href {\doibase 10.1103/PhysRevD.102.094515} {\bibfield  {journal} {\bibinfo  {journal} {Phys. Rev. D}\ }\textbf {\bibinfo {volume} {102}},\ \bibinfo {pages} {094515} (\bibinfo {year} {2020})},\ \Eprint {http://arxiv.org/abs/1806.08797} {arXiv:1806.08797 [hep-lat]} \BibitemShut {NoStop}%
\bibitem [{\citenamefont {Mil}\ \emph {et~al.}(2020)\citenamefont {Mil}, \citenamefont {Zache}, \citenamefont {Hegde}, \citenamefont {Xia}, \citenamefont {Bhatt}, \citenamefont {Oberthaler}, \citenamefont {Hauke}, \citenamefont {Berges},\ and\ \citenamefont {Jendrzejewski}}]{Mil:2019pbt}%
  \BibitemOpen
  \bibfield  {author} {\bibinfo {author} {\bibfnamefont {A.}~\bibnamefont {Mil}}, \bibinfo {author} {\bibfnamefont {T.~V.}\ \bibnamefont {Zache}}, \bibinfo {author} {\bibfnamefont {A.}~\bibnamefont {Hegde}}, \bibinfo {author} {\bibfnamefont {A.}~\bibnamefont {Xia}}, \bibinfo {author} {\bibfnamefont {R.~P.}\ \bibnamefont {Bhatt}}, \bibinfo {author} {\bibfnamefont {M.~K.}\ \bibnamefont {Oberthaler}}, \bibinfo {author} {\bibfnamefont {P.}~\bibnamefont {Hauke}}, \bibinfo {author} {\bibfnamefont {J.}~\bibnamefont {Berges}}, \ and\ \bibinfo {author} {\bibfnamefont {F.}~\bibnamefont {Jendrzejewski}},\ }\href {\doibase 10.1126/science.aaz5312} {\bibfield  {journal} {\bibinfo  {journal} {Science}\ }\textbf {\bibinfo {volume} {367}},\ \bibinfo {pages} {1128} (\bibinfo {year} {2020})},\ \Eprint {http://arxiv.org/abs/1909.07641} {arXiv:1909.07641 [cond-mat.quant-gas]} \BibitemShut {NoStop}%
\bibitem [{\citenamefont {Davoudi}\ \emph {et~al.}(2020)\citenamefont {Davoudi}, \citenamefont {Hafezi}, \citenamefont {Monroe}, \citenamefont {Pagano}, \citenamefont {Seif},\ and\ \citenamefont {Shaw}}]{Davoudi:2019bhy}%
  \BibitemOpen
  \bibfield  {author} {\bibinfo {author} {\bibfnamefont {Z.}~\bibnamefont {Davoudi}}, \bibinfo {author} {\bibfnamefont {M.}~\bibnamefont {Hafezi}}, \bibinfo {author} {\bibfnamefont {C.}~\bibnamefont {Monroe}}, \bibinfo {author} {\bibfnamefont {G.}~\bibnamefont {Pagano}}, \bibinfo {author} {\bibfnamefont {A.}~\bibnamefont {Seif}}, \ and\ \bibinfo {author} {\bibfnamefont {A.}~\bibnamefont {Shaw}},\ }\href {\doibase 10.1103/PhysRevResearch.2.023015} {\bibfield  {journal} {\bibinfo  {journal} {Phys. Rev. Res.}\ }\textbf {\bibinfo {volume} {2}},\ \bibinfo {pages} {023015} (\bibinfo {year} {2020})},\ \Eprint {http://arxiv.org/abs/1908.03210} {arXiv:1908.03210 [quant-ph]} \BibitemShut {NoStop}%
\bibitem [{\citenamefont {Surace}\ \emph {et~al.}(2020)\citenamefont {Surace}, \citenamefont {Mazza}, \citenamefont {Giudici}, \citenamefont {Lerose}, \citenamefont {Gambassi},\ and\ \citenamefont {Dalmonte}}]{Surace:2019dtp}%
  \BibitemOpen
  \bibfield  {author} {\bibinfo {author} {\bibfnamefont {F.~M.}\ \bibnamefont {Surace}}, \bibinfo {author} {\bibfnamefont {P.~P.}\ \bibnamefont {Mazza}}, \bibinfo {author} {\bibfnamefont {G.}~\bibnamefont {Giudici}}, \bibinfo {author} {\bibfnamefont {A.}~\bibnamefont {Lerose}}, \bibinfo {author} {\bibfnamefont {A.}~\bibnamefont {Gambassi}}, \ and\ \bibinfo {author} {\bibfnamefont {M.}~\bibnamefont {Dalmonte}},\ }\href {\doibase 10.1103/PhysRevX.10.021041} {\bibfield  {journal} {\bibinfo  {journal} {Phys. Rev. X}\ }\textbf {\bibinfo {volume} {10}},\ \bibinfo {pages} {021041} (\bibinfo {year} {2020})},\ \Eprint {http://arxiv.org/abs/1902.09551} {arXiv:1902.09551 [cond-mat.quant-gas]} \BibitemShut {NoStop}%
\bibitem [{\citenamefont {Haase}\ \emph {et~al.}(2021)\citenamefont {Haase}, \citenamefont {Dellantonio}, \citenamefont {Celi}, \citenamefont {Paulson}, \citenamefont {Kan}, \citenamefont {Jansen},\ and\ \citenamefont {Muschik}}]{Haase:2020kaj}%
  \BibitemOpen
  \bibfield  {author} {\bibinfo {author} {\bibfnamefont {J.~F.}\ \bibnamefont {Haase}}, \bibinfo {author} {\bibfnamefont {L.}~\bibnamefont {Dellantonio}}, \bibinfo {author} {\bibfnamefont {A.}~\bibnamefont {Celi}}, \bibinfo {author} {\bibfnamefont {D.}~\bibnamefont {Paulson}}, \bibinfo {author} {\bibfnamefont {A.}~\bibnamefont {Kan}}, \bibinfo {author} {\bibfnamefont {K.}~\bibnamefont {Jansen}}, \ and\ \bibinfo {author} {\bibfnamefont {C.~A.}\ \bibnamefont {Muschik}},\ }\href {\doibase 10.22331/q-2021-02-04-393} {\bibfield  {journal} {\bibinfo  {journal} {Quantum}\ }\textbf {\bibinfo {volume} {5}},\ \bibinfo {pages} {393} (\bibinfo {year} {2021})},\ \Eprint {http://arxiv.org/abs/2006.14160} {arXiv:2006.14160 [quant-ph]} \BibitemShut {NoStop}%
\bibitem [{\citenamefont {Luo}\ \emph {et~al.}(2020)\citenamefont {Luo}, \citenamefont {Shen}, \citenamefont {Highman}, \citenamefont {Clark}, \citenamefont {DeMarco}, \citenamefont {El-Khadra},\ and\ \citenamefont {Gadway}}]{Luo:2019vmi}%
  \BibitemOpen
  \bibfield  {author} {\bibinfo {author} {\bibfnamefont {D.}~\bibnamefont {Luo}}, \bibinfo {author} {\bibfnamefont {J.}~\bibnamefont {Shen}}, \bibinfo {author} {\bibfnamefont {M.}~\bibnamefont {Highman}}, \bibinfo {author} {\bibfnamefont {B.~K.}\ \bibnamefont {Clark}}, \bibinfo {author} {\bibfnamefont {B.}~\bibnamefont {DeMarco}}, \bibinfo {author} {\bibfnamefont {A.~X.}\ \bibnamefont {El-Khadra}}, \ and\ \bibinfo {author} {\bibfnamefont {B.}~\bibnamefont {Gadway}},\ }\href {\doibase 10.1103/PhysRevA.102.032617} {\bibfield  {journal} {\bibinfo  {journal} {Phys. Rev. A}\ }\textbf {\bibinfo {volume} {102}},\ \bibinfo {pages} {032617} (\bibinfo {year} {2020})},\ \Eprint {http://arxiv.org/abs/1912.11488} {arXiv:1912.11488 [quant-ph]} \BibitemShut {NoStop}%
\bibitem [{\citenamefont {Shaw}\ \emph {et~al.}(2020)\citenamefont {Shaw}, \citenamefont {Lougovski}, \citenamefont {Stryker},\ and\ \citenamefont {Wiebe}}]{Shaw:2020udc}%
  \BibitemOpen
  \bibfield  {author} {\bibinfo {author} {\bibfnamefont {A.~F.}\ \bibnamefont {Shaw}}, \bibinfo {author} {\bibfnamefont {P.}~\bibnamefont {Lougovski}}, \bibinfo {author} {\bibfnamefont {J.~R.}\ \bibnamefont {Stryker}}, \ and\ \bibinfo {author} {\bibfnamefont {N.}~\bibnamefont {Wiebe}},\ }\href {\doibase 10.22331/q-2020-08-10-306} {\bibfield  {journal} {\bibinfo  {journal} {Quantum}\ }\textbf {\bibinfo {volume} {4}},\ \bibinfo {pages} {306} (\bibinfo {year} {2020})},\ \Eprint {http://arxiv.org/abs/2002.11146} {arXiv:2002.11146 [quant-ph]} \BibitemShut {NoStop}%
\bibitem [{\citenamefont {Yang}\ \emph {et~al.}(2020)\citenamefont {Yang}, \citenamefont {Sun}, \citenamefont {Ott}, \citenamefont {Wang}, \citenamefont {Zache}, \citenamefont {Halimeh}, \citenamefont {Yuan}, \citenamefont {Hauke},\ and\ \citenamefont {Pan}}]{Yang:2020yer}%
  \BibitemOpen
  \bibfield  {author} {\bibinfo {author} {\bibfnamefont {B.}~\bibnamefont {Yang}}, \bibinfo {author} {\bibfnamefont {H.}~\bibnamefont {Sun}}, \bibinfo {author} {\bibfnamefont {R.}~\bibnamefont {Ott}}, \bibinfo {author} {\bibfnamefont {H.-Y.}\ \bibnamefont {Wang}}, \bibinfo {author} {\bibfnamefont {T.~V.}\ \bibnamefont {Zache}}, \bibinfo {author} {\bibfnamefont {J.~C.}\ \bibnamefont {Halimeh}}, \bibinfo {author} {\bibfnamefont {Z.-S.}\ \bibnamefont {Yuan}}, \bibinfo {author} {\bibfnamefont {P.}~\bibnamefont {Hauke}}, \ and\ \bibinfo {author} {\bibfnamefont {J.-W.}\ \bibnamefont {Pan}},\ }\href {\doibase 10.1038/s41586-020-2910-8} {\bibfield  {journal} {\bibinfo  {journal} {Nature}\ }\textbf {\bibinfo {volume} {587}},\ \bibinfo {pages} {392} (\bibinfo {year} {2020})},\ \Eprint {http://arxiv.org/abs/2003.08945} {arXiv:2003.08945 [cond-mat.quant-gas]} \BibitemShut {NoStop}%
\bibitem [{\citenamefont {Ott}\ \emph {et~al.}(2021)\citenamefont {Ott}, \citenamefont {Zache}, \citenamefont {Jendrzejewski},\ and\ \citenamefont {Berges}}]{Ott:2020ycj}%
  \BibitemOpen
  \bibfield  {author} {\bibinfo {author} {\bibfnamefont {R.}~\bibnamefont {Ott}}, \bibinfo {author} {\bibfnamefont {T.~V.}\ \bibnamefont {Zache}}, \bibinfo {author} {\bibfnamefont {F.}~\bibnamefont {Jendrzejewski}}, \ and\ \bibinfo {author} {\bibfnamefont {J.}~\bibnamefont {Berges}},\ }\href {\doibase 10.1103/PhysRevLett.127.130504} {\bibfield  {journal} {\bibinfo  {journal} {Phys. Rev. Lett.}\ }\textbf {\bibinfo {volume} {127}},\ \bibinfo {pages} {130504} (\bibinfo {year} {2021})},\ \Eprint {http://arxiv.org/abs/2012.10432} {arXiv:2012.10432 [cond-mat.quant-gas]} \BibitemShut {NoStop}%
\bibitem [{\citenamefont {Paulson}\ \emph {et~al.}(2021)\citenamefont {Paulson} \emph {et~al.}}]{Paulson:2020zjd}%
  \BibitemOpen
  \bibfield  {author} {\bibinfo {author} {\bibfnamefont {D.}~\bibnamefont {Paulson}} \emph {et~al.},\ }\href {\doibase 10.1103/PRXQuantum.2.030334} {\bibfield  {journal} {\bibinfo  {journal} {PRX Quantum}\ }\textbf {\bibinfo {volume} {2}},\ \bibinfo {pages} {030334} (\bibinfo {year} {2021})},\ \Eprint {http://arxiv.org/abs/2008.09252} {arXiv:2008.09252 [quant-ph]} \BibitemShut {NoStop}%
\bibitem [{\citenamefont {Nguyen}\ \emph {et~al.}(2022)\citenamefont {Nguyen}, \citenamefont {Tran}, \citenamefont {Zhu}, \citenamefont {Green}, \citenamefont {Alderete}, \citenamefont {Davoudi},\ and\ \citenamefont {Linke}}]{Nguyen:2021hyk}%
  \BibitemOpen
  \bibfield  {author} {\bibinfo {author} {\bibfnamefont {N.~H.}\ \bibnamefont {Nguyen}}, \bibinfo {author} {\bibfnamefont {M.~C.}\ \bibnamefont {Tran}}, \bibinfo {author} {\bibfnamefont {Y.}~\bibnamefont {Zhu}}, \bibinfo {author} {\bibfnamefont {A.~M.}\ \bibnamefont {Green}}, \bibinfo {author} {\bibfnamefont {C.~H.}\ \bibnamefont {Alderete}}, \bibinfo {author} {\bibfnamefont {Z.}~\bibnamefont {Davoudi}}, \ and\ \bibinfo {author} {\bibfnamefont {N.~M.}\ \bibnamefont {Linke}},\ }\href {\doibase 10.1103/PRXQuantum.3.020324} {\bibfield  {journal} {\bibinfo  {journal} {PRX Quantum}\ }\textbf {\bibinfo {volume} {3}},\ \bibinfo {pages} {020324} (\bibinfo {year} {2022})},\ \Eprint {http://arxiv.org/abs/2112.14262} {arXiv:2112.14262 [quant-ph]} \BibitemShut {NoStop}%
\bibitem [{\citenamefont {Zhou}\ \emph {et~al.}(2022)\citenamefont {Zhou}, \citenamefont {Su}, \citenamefont {Halimeh}, \citenamefont {Ott}, \citenamefont {Sun}, \citenamefont {Hauke}, \citenamefont {Yang}, \citenamefont {Yuan}, \citenamefont {Berges},\ and\ \citenamefont {Pan}}]{Zhou:2021kdl}%
  \BibitemOpen
  \bibfield  {author} {\bibinfo {author} {\bibfnamefont {Z.-Y.}\ \bibnamefont {Zhou}}, \bibinfo {author} {\bibfnamefont {G.-X.}\ \bibnamefont {Su}}, \bibinfo {author} {\bibfnamefont {J.~C.}\ \bibnamefont {Halimeh}}, \bibinfo {author} {\bibfnamefont {R.}~\bibnamefont {Ott}}, \bibinfo {author} {\bibfnamefont {H.}~\bibnamefont {Sun}}, \bibinfo {author} {\bibfnamefont {P.}~\bibnamefont {Hauke}}, \bibinfo {author} {\bibfnamefont {B.}~\bibnamefont {Yang}}, \bibinfo {author} {\bibfnamefont {Z.-S.}\ \bibnamefont {Yuan}}, \bibinfo {author} {\bibfnamefont {J.}~\bibnamefont {Berges}}, \ and\ \bibinfo {author} {\bibfnamefont {J.-W.}\ \bibnamefont {Pan}},\ }\href {\doibase 10.1126/science.abl6277} {\bibfield  {journal} {\bibinfo  {journal} {Science}\ }\textbf {\bibinfo {volume} {377}},\ \bibinfo {pages} {311} (\bibinfo {year} {2022})},\ \Eprint {http://arxiv.org/abs/2107.13563} {arXiv:2107.13563 [cond-mat.quant-gas]} \BibitemShut {NoStop}%
\bibitem [{\citenamefont {Riechert}\ \emph {et~al.}(2022)\citenamefont {Riechert}, \citenamefont {Halimeh}, \citenamefont {Kasper}, \citenamefont {Bretheau}, \citenamefont {Zohar}, \citenamefont {Hauke},\ and\ \citenamefont {Jendrzejewski}}]{Riechert:2021ink}%
  \BibitemOpen
  \bibfield  {author} {\bibinfo {author} {\bibfnamefont {H.}~\bibnamefont {Riechert}}, \bibinfo {author} {\bibfnamefont {J.~C.}\ \bibnamefont {Halimeh}}, \bibinfo {author} {\bibfnamefont {V.}~\bibnamefont {Kasper}}, \bibinfo {author} {\bibfnamefont {L.}~\bibnamefont {Bretheau}}, \bibinfo {author} {\bibfnamefont {E.}~\bibnamefont {Zohar}}, \bibinfo {author} {\bibfnamefont {P.}~\bibnamefont {Hauke}}, \ and\ \bibinfo {author} {\bibfnamefont {F.}~\bibnamefont {Jendrzejewski}},\ }\href {\doibase 10.1103/PhysRevB.105.205141} {\bibfield  {journal} {\bibinfo  {journal} {Phys. Rev. B}\ }\textbf {\bibinfo {volume} {105}},\ \bibinfo {pages} {205141} (\bibinfo {year} {2022})},\ \Eprint {http://arxiv.org/abs/2108.01086} {arXiv:2108.01086 [cond-mat.mes-hall]} \BibitemShut {NoStop}%
\bibitem [{\citenamefont {Bauer}\ and\ \citenamefont {Grabowska}(2023)}]{Bauer:2021gek}%
  \BibitemOpen
  \bibfield  {author} {\bibinfo {author} {\bibfnamefont {C.~W.}\ \bibnamefont {Bauer}}\ and\ \bibinfo {author} {\bibfnamefont {D.~M.}\ \bibnamefont {Grabowska}},\ }\href {\doibase 10.1103/PhysRevD.107.L031503} {\bibfield  {journal} {\bibinfo  {journal} {Phys. Rev. D}\ }\textbf {\bibinfo {volume} {107}},\ \bibinfo {pages} {L031503} (\bibinfo {year} {2023})},\ \Eprint {http://arxiv.org/abs/2111.08015} {arXiv:2111.08015 [hep-ph]} \BibitemShut {NoStop}%
\bibitem [{\citenamefont {Kane}\ \emph {et~al.}(2022)\citenamefont {Kane}, \citenamefont {Grabowska}, \citenamefont {Nachman},\ and\ \citenamefont {Bauer}}]{Kane:2022ejm}%
  \BibitemOpen
  \bibfield  {author} {\bibinfo {author} {\bibfnamefont {C.}~\bibnamefont {Kane}}, \bibinfo {author} {\bibfnamefont {D.~M.}\ \bibnamefont {Grabowska}}, \bibinfo {author} {\bibfnamefont {B.}~\bibnamefont {Nachman}}, \ and\ \bibinfo {author} {\bibfnamefont {C.~W.}\ \bibnamefont {Bauer}},\ }\href@noop {} {\  (\bibinfo {year} {2022})},\ \Eprint {http://arxiv.org/abs/2211.10497} {arXiv:2211.10497 [quant-ph]} \BibitemShut {NoStop}%
\bibitem [{\citenamefont {Grabowska}\ \emph {et~al.}(2022)\citenamefont {Grabowska}, \citenamefont {Kane}, \citenamefont {Nachman},\ and\ \citenamefont {Bauer}}]{Grabowska:2022uos}%
  \BibitemOpen
  \bibfield  {author} {\bibinfo {author} {\bibfnamefont {D.~M.}\ \bibnamefont {Grabowska}}, \bibinfo {author} {\bibfnamefont {C.}~\bibnamefont {Kane}}, \bibinfo {author} {\bibfnamefont {B.}~\bibnamefont {Nachman}}, \ and\ \bibinfo {author} {\bibfnamefont {C.~W.}\ \bibnamefont {Bauer}},\ }\href@noop {} {\  (\bibinfo {year} {2022})},\ \Eprint {http://arxiv.org/abs/2208.03333} {arXiv:2208.03333 [quant-ph]} \BibitemShut {NoStop}%
\bibitem [{\citenamefont {Zhang}\ \emph {et~al.}(2023)\citenamefont {Zhang}, \citenamefont {Liu}, \citenamefont {Cheng}, \citenamefont {He}, \citenamefont {Wang}, \citenamefont {Wang}, \citenamefont {Zhu}, \citenamefont {Su}, \citenamefont {Zhou}, \citenamefont {Zheng}, \citenamefont {Sun}, \citenamefont {Yang}, \citenamefont {Hauke}, \citenamefont {Zheng}, \citenamefont {Halimeh}, \citenamefont {Yuan},\ and\ \citenamefont {Pan}}]{zhang2023observation}%
  \BibitemOpen
  \bibfield  {author} {\bibinfo {author} {\bibfnamefont {W.-Y.}\ \bibnamefont {Zhang}}, \bibinfo {author} {\bibfnamefont {Y.}~\bibnamefont {Liu}}, \bibinfo {author} {\bibfnamefont {Y.}~\bibnamefont {Cheng}}, \bibinfo {author} {\bibfnamefont {M.-G.}\ \bibnamefont {He}}, \bibinfo {author} {\bibfnamefont {H.-Y.}\ \bibnamefont {Wang}}, \bibinfo {author} {\bibfnamefont {T.-Y.}\ \bibnamefont {Wang}}, \bibinfo {author} {\bibfnamefont {Z.-H.}\ \bibnamefont {Zhu}}, \bibinfo {author} {\bibfnamefont {G.-X.}\ \bibnamefont {Su}}, \bibinfo {author} {\bibfnamefont {Z.-Y.}\ \bibnamefont {Zhou}}, \bibinfo {author} {\bibfnamefont {Y.-G.}\ \bibnamefont {Zheng}}, \bibinfo {author} {\bibfnamefont {H.}~\bibnamefont {Sun}}, \bibinfo {author} {\bibfnamefont {B.}~\bibnamefont {Yang}}, \bibinfo {author} {\bibfnamefont {P.}~\bibnamefont {Hauke}}, \bibinfo {author} {\bibfnamefont {W.}~\bibnamefont {Zheng}}, \bibinfo {author} {\bibfnamefont {J.~C.}\ \bibnamefont {Halimeh}}, \bibinfo {author} {\bibfnamefont {Z.-S.}\ \bibnamefont {Yuan}},
  \ and\ \bibinfo {author} {\bibfnamefont {J.-W.}\ \bibnamefont {Pan}},\ }\href@noop {} {\enquote {\bibinfo {title} {Observation of microscopic confinement dynamics by a tunable topological $\theta$-angle},}\ } (\bibinfo {year} {2023}),\ \Eprint {http://arxiv.org/abs/2306.11794} {arXiv:2306.11794 [cond-mat.quant-gas]} \BibitemShut {NoStop}%
\bibitem [{\citenamefont {Farrell}\ \emph {et~al.}(2024{\natexlab{a}})\citenamefont {Farrell}, \citenamefont {Illa}, \citenamefont {Ciavarella},\ and\ \citenamefont {Savage}}]{Farrell:2023fgd}%
  \BibitemOpen
  \bibfield  {author} {\bibinfo {author} {\bibfnamefont {R.~C.}\ \bibnamefont {Farrell}}, \bibinfo {author} {\bibfnamefont {M.}~\bibnamefont {Illa}}, \bibinfo {author} {\bibfnamefont {A.~N.}\ \bibnamefont {Ciavarella}}, \ and\ \bibinfo {author} {\bibfnamefont {M.~J.}\ \bibnamefont {Savage}},\ }\href {\doibase 10.1103/PRXQuantum.5.020315} {\bibfield  {journal} {\bibinfo  {journal} {PRX Quantum}\ }\textbf {\bibinfo {volume} {5}},\ \bibinfo {pages} {020315} (\bibinfo {year} {2024}{\natexlab{a}})},\ \Eprint {http://arxiv.org/abs/2308.04481} {arXiv:2308.04481 [quant-ph]} \BibitemShut {NoStop}%
\bibitem [{\citenamefont {Nagano}\ \emph {et~al.}(2023)\citenamefont {Nagano}, \citenamefont {Bapat},\ and\ \citenamefont {Bauer}}]{Nagano:2023uaq}%
  \BibitemOpen
  \bibfield  {author} {\bibinfo {author} {\bibfnamefont {L.}~\bibnamefont {Nagano}}, \bibinfo {author} {\bibfnamefont {A.}~\bibnamefont {Bapat}}, \ and\ \bibinfo {author} {\bibfnamefont {C.~W.}\ \bibnamefont {Bauer}},\ }\href {\doibase 10.1103/PhysRevD.108.034501} {\bibfield  {journal} {\bibinfo  {journal} {Phys. Rev. D}\ }\textbf {\bibinfo {volume} {108}},\ \bibinfo {pages} {034501} (\bibinfo {year} {2023})},\ \Eprint {http://arxiv.org/abs/2302.10933} {arXiv:2302.10933 [hep-ph]} \BibitemShut {NoStop}%
\bibitem [{\citenamefont {Gustafson}\ \emph {et~al.}(2025)\citenamefont {Gustafson} \emph {et~al.}}]{Gustafson:2024bww}%
  \BibitemOpen
  \bibfield  {author} {\bibinfo {author} {\bibfnamefont {E.}~\bibnamefont {Gustafson}} \emph {et~al.},\ }\href {\doibase 10.1103/PhysRevApplied.23.064002} {\bibfield  {journal} {\bibinfo  {journal} {Phys. Rev. Applied}\ }\textbf {\bibinfo {volume} {23}},\ \bibinfo {pages} {064002} (\bibinfo {year} {2025})},\ \Eprint {http://arxiv.org/abs/2408.12641} {arXiv:2408.12641 [quant-ph]} \BibitemShut {NoStop}%
\bibitem [{\citenamefont {Crane}\ \emph {et~al.}(2024)\citenamefont {Crane} \emph {et~al.}}]{Crane:2024tlj}%
  \BibitemOpen
  \bibfield  {author} {\bibinfo {author} {\bibfnamefont {E.}~\bibnamefont {Crane}} \emph {et~al.},\ }\href@noop {} {\  (\bibinfo {year} {2024})},\ \Eprint {http://arxiv.org/abs/2409.03747} {arXiv:2409.03747 [quant-ph]} \BibitemShut {NoStop}%
\bibitem [{\citenamefont {Zohar}\ \emph {et~al.}(2013{\natexlab{b}})\citenamefont {Zohar}, \citenamefont {Cirac},\ and\ \citenamefont {Reznik}}]{Zohar:2012xf}%
  \BibitemOpen
  \bibfield  {author} {\bibinfo {author} {\bibfnamefont {E.}~\bibnamefont {Zohar}}, \bibinfo {author} {\bibfnamefont {J.~I.}\ \bibnamefont {Cirac}}, \ and\ \bibinfo {author} {\bibfnamefont {B.}~\bibnamefont {Reznik}},\ }\href {\doibase 10.1103/PhysRevLett.110.125304} {\bibfield  {journal} {\bibinfo  {journal} {Phys. Rev. Lett.}\ }\textbf {\bibinfo {volume} {110}},\ \bibinfo {pages} {125304} (\bibinfo {year} {2013}{\natexlab{b}})},\ \Eprint {http://arxiv.org/abs/1211.2241} {arXiv:1211.2241 [quant-ph]} \BibitemShut {NoStop}%
\bibitem [{\citenamefont {Stannigel}\ \emph {et~al.}(2014)\citenamefont {Stannigel}, \citenamefont {Hauke}, \citenamefont {Marcos}, \citenamefont {Hafezi}, \citenamefont {Diehl}, \citenamefont {Dalmonte},\ and\ \citenamefont {Zoller}}]{Stannigel:2013zka}%
  \BibitemOpen
  \bibfield  {author} {\bibinfo {author} {\bibfnamefont {K.}~\bibnamefont {Stannigel}}, \bibinfo {author} {\bibfnamefont {P.}~\bibnamefont {Hauke}}, \bibinfo {author} {\bibfnamefont {D.}~\bibnamefont {Marcos}}, \bibinfo {author} {\bibfnamefont {M.}~\bibnamefont {Hafezi}}, \bibinfo {author} {\bibfnamefont {S.}~\bibnamefont {Diehl}}, \bibinfo {author} {\bibfnamefont {M.}~\bibnamefont {Dalmonte}}, \ and\ \bibinfo {author} {\bibfnamefont {P.}~\bibnamefont {Zoller}},\ }\href {\doibase 10.1103/PhysRevLett.112.120406} {\bibfield  {journal} {\bibinfo  {journal} {Phys. Rev. Lett.}\ }\textbf {\bibinfo {volume} {112}},\ \bibinfo {pages} {120406} (\bibinfo {year} {2014})},\ \Eprint {http://arxiv.org/abs/1308.0528} {arXiv:1308.0528 [quant-ph]} \BibitemShut {NoStop}%
\bibitem [{\citenamefont {Mezzacapo}\ \emph {et~al.}(2015)\citenamefont {Mezzacapo}, \citenamefont {Rico}, \citenamefont {Sabin}, \citenamefont {Egusquiza}, \citenamefont {Lamata},\ and\ \citenamefont {Solano}}]{Mezzacapo:2015bra}%
  \BibitemOpen
  \bibfield  {author} {\bibinfo {author} {\bibfnamefont {A.}~\bibnamefont {Mezzacapo}}, \bibinfo {author} {\bibfnamefont {E.}~\bibnamefont {Rico}}, \bibinfo {author} {\bibfnamefont {C.}~\bibnamefont {Sabin}}, \bibinfo {author} {\bibfnamefont {I.~L.}\ \bibnamefont {Egusquiza}}, \bibinfo {author} {\bibfnamefont {L.}~\bibnamefont {Lamata}}, \ and\ \bibinfo {author} {\bibfnamefont {E.}~\bibnamefont {Solano}},\ }\href {\doibase 10.1103/PhysRevLett.115.240502} {\bibfield  {journal} {\bibinfo  {journal} {Phys. Rev. Lett.}\ }\textbf {\bibinfo {volume} {115}},\ \bibinfo {pages} {240502} (\bibinfo {year} {2015})},\ \Eprint {http://arxiv.org/abs/1505.04720} {arXiv:1505.04720 [quant-ph]} \BibitemShut {NoStop}%
\bibitem [{\citenamefont {Mathur}\ and\ \citenamefont {Sreeraj}(2015)}]{Mathur:2015wba}%
  \BibitemOpen
  \bibfield  {author} {\bibinfo {author} {\bibfnamefont {M.}~\bibnamefont {Mathur}}\ and\ \bibinfo {author} {\bibfnamefont {T.~P.}\ \bibnamefont {Sreeraj}},\ }\href {\doibase 10.1103/PhysRevD.92.125018} {\bibfield  {journal} {\bibinfo  {journal} {Phys. Rev. D}\ }\textbf {\bibinfo {volume} {92}},\ \bibinfo {pages} {125018} (\bibinfo {year} {2015})},\ \Eprint {http://arxiv.org/abs/1509.04033} {arXiv:1509.04033 [hep-lat]} \BibitemShut {NoStop}%
\bibitem [{\citenamefont {Raychowdhury}\ and\ \citenamefont {Stryker}(2020{\natexlab{a}})}]{Raychowdhury:2018osk}%
  \BibitemOpen
  \bibfield  {author} {\bibinfo {author} {\bibfnamefont {I.}~\bibnamefont {Raychowdhury}}\ and\ \bibinfo {author} {\bibfnamefont {J.~R.}\ \bibnamefont {Stryker}},\ }\href {\doibase 10.1103/PhysRevResearch.2.033039} {\bibfield  {journal} {\bibinfo  {journal} {Phys. Rev. Res.}\ }\textbf {\bibinfo {volume} {2}},\ \bibinfo {pages} {033039} (\bibinfo {year} {2020}{\natexlab{a}})},\ \Eprint {http://arxiv.org/abs/1812.07554} {arXiv:1812.07554 [hep-lat]} \BibitemShut {NoStop}%
\bibitem [{\citenamefont {Raychowdhury}\ and\ \citenamefont {Stryker}(2020{\natexlab{b}})}]{Raychowdhury:2019iki}%
  \BibitemOpen
  \bibfield  {author} {\bibinfo {author} {\bibfnamefont {I.}~\bibnamefont {Raychowdhury}}\ and\ \bibinfo {author} {\bibfnamefont {J.~R.}\ \bibnamefont {Stryker}},\ }\href {\doibase 10.1103/PhysRevD.101.114502} {\bibfield  {journal} {\bibinfo  {journal} {Phys. Rev. D}\ }\textbf {\bibinfo {volume} {101}},\ \bibinfo {pages} {114502} (\bibinfo {year} {2020}{\natexlab{b}})},\ \Eprint {http://arxiv.org/abs/1912.06133} {arXiv:1912.06133 [hep-lat]} \BibitemShut {NoStop}%
\bibitem [{\citenamefont {Klco}\ \emph {et~al.}(2020)\citenamefont {Klco}, \citenamefont {Stryker},\ and\ \citenamefont {Savage}}]{Klco:2019evd}%
  \BibitemOpen
  \bibfield  {author} {\bibinfo {author} {\bibfnamefont {N.}~\bibnamefont {Klco}}, \bibinfo {author} {\bibfnamefont {J.~R.}\ \bibnamefont {Stryker}}, \ and\ \bibinfo {author} {\bibfnamefont {M.~J.}\ \bibnamefont {Savage}},\ }\href {\doibase 10.1103/PhysRevD.101.074512} {\bibfield  {journal} {\bibinfo  {journal} {Phys. Rev. D}\ }\textbf {\bibinfo {volume} {101}},\ \bibinfo {pages} {074512} (\bibinfo {year} {2020})},\ \Eprint {http://arxiv.org/abs/1908.06935} {arXiv:1908.06935 [quant-ph]} \BibitemShut {NoStop}%
\bibitem [{\citenamefont {Dasgupta}\ and\ \citenamefont {Raychowdhury}(2022)}]{Dasgupta:2020itb}%
  \BibitemOpen
  \bibfield  {author} {\bibinfo {author} {\bibfnamefont {R.}~\bibnamefont {Dasgupta}}\ and\ \bibinfo {author} {\bibfnamefont {I.}~\bibnamefont {Raychowdhury}},\ }\href {\doibase 10.1103/PhysRevA.105.023322} {\bibfield  {journal} {\bibinfo  {journal} {Phys. Rev. A}\ }\textbf {\bibinfo {volume} {105}},\ \bibinfo {pages} {023322} (\bibinfo {year} {2022})},\ \Eprint {http://arxiv.org/abs/2009.13969} {arXiv:2009.13969 [hep-lat]} \BibitemShut {NoStop}%
\bibitem [{\citenamefont {Davoudi}\ \emph {et~al.}(2021)\citenamefont {Davoudi}, \citenamefont {Raychowdhury},\ and\ \citenamefont {Shaw}}]{Davoudi:2020yln}%
  \BibitemOpen
  \bibfield  {author} {\bibinfo {author} {\bibfnamefont {Z.}~\bibnamefont {Davoudi}}, \bibinfo {author} {\bibfnamefont {I.}~\bibnamefont {Raychowdhury}}, \ and\ \bibinfo {author} {\bibfnamefont {A.}~\bibnamefont {Shaw}},\ }\href {\doibase 10.1103/PhysRevD.104.074505} {\bibfield  {journal} {\bibinfo  {journal} {Phys. Rev. D}\ }\textbf {\bibinfo {volume} {104}},\ \bibinfo {pages} {074505} (\bibinfo {year} {2021})},\ \Eprint {http://arxiv.org/abs/2009.11802} {arXiv:2009.11802 [hep-lat]} \BibitemShut {NoStop}%
\bibitem [{\citenamefont {Atas}\ \emph {et~al.}(2021)\citenamefont {Atas}, \citenamefont {Zhang}, \citenamefont {Lewis}, \citenamefont {Jahanpour}, \citenamefont {Haase},\ and\ \citenamefont {Muschik}}]{Atas:2021ext}%
  \BibitemOpen
  \bibfield  {author} {\bibinfo {author} {\bibfnamefont {Y.~Y.}\ \bibnamefont {Atas}}, \bibinfo {author} {\bibfnamefont {J.}~\bibnamefont {Zhang}}, \bibinfo {author} {\bibfnamefont {R.}~\bibnamefont {Lewis}}, \bibinfo {author} {\bibfnamefont {A.}~\bibnamefont {Jahanpour}}, \bibinfo {author} {\bibfnamefont {J.~F.}\ \bibnamefont {Haase}}, \ and\ \bibinfo {author} {\bibfnamefont {C.~A.}\ \bibnamefont {Muschik}},\ }\href {\doibase 10.1038/s41467-021-26825-4} {\bibfield  {journal} {\bibinfo  {journal} {Nature Commun.}\ }\textbf {\bibinfo {volume} {12}},\ \bibinfo {pages} {6499} (\bibinfo {year} {2021})},\ \Eprint {http://arxiv.org/abs/2102.08920} {arXiv:2102.08920 [quant-ph]} \BibitemShut {NoStop}%
\bibitem [{\citenamefont {A~Rahman}\ \emph {et~al.}(2021)\citenamefont {A~Rahman}, \citenamefont {Lewis}, \citenamefont {Mendicelli},\ and\ \citenamefont {Powell}}]{ARahman:2021ktn}%
  \BibitemOpen
  \bibfield  {author} {\bibinfo {author} {\bibfnamefont {S.}~\bibnamefont {A~Rahman}}, \bibinfo {author} {\bibfnamefont {R.}~\bibnamefont {Lewis}}, \bibinfo {author} {\bibfnamefont {E.}~\bibnamefont {Mendicelli}}, \ and\ \bibinfo {author} {\bibfnamefont {S.}~\bibnamefont {Powell}},\ }\href {\doibase 10.1103/PhysRevD.104.034501} {\bibfield  {journal} {\bibinfo  {journal} {Phys. Rev. D}\ }\textbf {\bibinfo {volume} {104}},\ \bibinfo {pages} {034501} (\bibinfo {year} {2021})},\ \Eprint {http://arxiv.org/abs/2103.08661} {arXiv:2103.08661 [hep-lat]} \BibitemShut {NoStop}%
\bibitem [{\citenamefont {Osborne}\ \emph {et~al.}(2022)\citenamefont {Osborne}, \citenamefont {McCulloch}, \citenamefont {Yang}, \citenamefont {Hauke},\ and\ \citenamefont {Halimeh}}]{Osborne:2022jxq}%
  \BibitemOpen
  \bibfield  {author} {\bibinfo {author} {\bibfnamefont {J.}~\bibnamefont {Osborne}}, \bibinfo {author} {\bibfnamefont {I.~P.}\ \bibnamefont {McCulloch}}, \bibinfo {author} {\bibfnamefont {B.}~\bibnamefont {Yang}}, \bibinfo {author} {\bibfnamefont {P.}~\bibnamefont {Hauke}}, \ and\ \bibinfo {author} {\bibfnamefont {J.~C.}\ \bibnamefont {Halimeh}},\ }\href@noop {} {\  (\bibinfo {year} {2022})},\ \Eprint {http://arxiv.org/abs/2211.01380} {arXiv:2211.01380 [cond-mat.quant-gas]} \BibitemShut {NoStop}%
\bibitem [{\citenamefont {Halimeh}\ \emph {et~al.}(2022)\citenamefont {Halimeh}, \citenamefont {Lang},\ and\ \citenamefont {Hauke}}]{halimeh2022gauge}%
  \BibitemOpen
  \bibfield  {author} {\bibinfo {author} {\bibfnamefont {J.~C.}\ \bibnamefont {Halimeh}}, \bibinfo {author} {\bibfnamefont {H.}~\bibnamefont {Lang}}, \ and\ \bibinfo {author} {\bibfnamefont {P.}~\bibnamefont {Hauke}},\ }\href {\doibase 10.1088/1367-2630/ac5564} {\bibfield  {journal} {\bibinfo  {journal} {New Journal of Physics}\ }\textbf {\bibinfo {volume} {24}},\ \bibinfo {pages} {033015} (\bibinfo {year} {2022})}\BibitemShut {NoStop}%
\bibitem [{\citenamefont {A~Rahman}\ \emph {et~al.}(2022)\citenamefont {A~Rahman}, \citenamefont {Lewis}, \citenamefont {Mendicelli},\ and\ \citenamefont {Powell}}]{ARahman:2022tkr}%
  \BibitemOpen
  \bibfield  {author} {\bibinfo {author} {\bibfnamefont {S.}~\bibnamefont {A~Rahman}}, \bibinfo {author} {\bibfnamefont {R.}~\bibnamefont {Lewis}}, \bibinfo {author} {\bibfnamefont {E.}~\bibnamefont {Mendicelli}}, \ and\ \bibinfo {author} {\bibfnamefont {S.}~\bibnamefont {Powell}},\ }\href {\doibase 10.1103/PhysRevD.106.074502} {\bibfield  {journal} {\bibinfo  {journal} {Phys. Rev. D}\ }\textbf {\bibinfo {volume} {106}},\ \bibinfo {pages} {074502} (\bibinfo {year} {2022})},\ \Eprint {http://arxiv.org/abs/2205.09247} {arXiv:2205.09247 [hep-lat]} \BibitemShut {NoStop}%
\bibitem [{\citenamefont {Zache}\ \emph {et~al.}(2023)\citenamefont {Zache}, \citenamefont {Gonz\'alez-Cuadra},\ and\ \citenamefont {Zoller}}]{zache2023quantum}%
  \BibitemOpen
  \bibfield  {author} {\bibinfo {author} {\bibfnamefont {T.~V.}\ \bibnamefont {Zache}}, \bibinfo {author} {\bibfnamefont {D.}~\bibnamefont {Gonz\'alez-Cuadra}}, \ and\ \bibinfo {author} {\bibfnamefont {P.}~\bibnamefont {Zoller}},\ }\href {\doibase 10.1103/PhysRevLett.131.171902} {\bibfield  {journal} {\bibinfo  {journal} {Phys. Rev. Lett.}\ }\textbf {\bibinfo {volume} {131}},\ \bibinfo {pages} {171902} (\bibinfo {year} {2023})}\BibitemShut {NoStop}%
\bibitem [{\citenamefont {Alexandru}\ \emph {et~al.}(2024)\citenamefont {Alexandru}, \citenamefont {Bedaque}, \citenamefont {Carosso}, \citenamefont {Cervia}, \citenamefont {Murairi},\ and\ \citenamefont {Sheng}}]{Alexandru:2023qzd}%
  \BibitemOpen
  \bibfield  {author} {\bibinfo {author} {\bibfnamefont {A.}~\bibnamefont {Alexandru}}, \bibinfo {author} {\bibfnamefont {P.~F.}\ \bibnamefont {Bedaque}}, \bibinfo {author} {\bibfnamefont {A.}~\bibnamefont {Carosso}}, \bibinfo {author} {\bibfnamefont {M.~J.}\ \bibnamefont {Cervia}}, \bibinfo {author} {\bibfnamefont {E.~M.}\ \bibnamefont {Murairi}}, \ and\ \bibinfo {author} {\bibfnamefont {A.}~\bibnamefont {Sheng}},\ }\href {\doibase 10.1103/PhysRevD.109.094502} {\bibfield  {journal} {\bibinfo  {journal} {Phys. Rev. D}\ }\textbf {\bibinfo {volume} {109}},\ \bibinfo {pages} {094502} (\bibinfo {year} {2024})},\ \Eprint {http://arxiv.org/abs/2308.05253} {arXiv:2308.05253 [hep-lat]} \BibitemShut {NoStop}%
\bibitem [{\citenamefont {D'Andrea}\ \emph {et~al.}(2024)\citenamefont {D'Andrea}, \citenamefont {Bauer}, \citenamefont {Grabowska},\ and\ \citenamefont {Freytsis}}]{DAndrea:2023qnr}%
  \BibitemOpen
  \bibfield  {author} {\bibinfo {author} {\bibfnamefont {I.}~\bibnamefont {D'Andrea}}, \bibinfo {author} {\bibfnamefont {C.~W.}\ \bibnamefont {Bauer}}, \bibinfo {author} {\bibfnamefont {D.~M.}\ \bibnamefont {Grabowska}}, \ and\ \bibinfo {author} {\bibfnamefont {M.}~\bibnamefont {Freytsis}},\ }\href {\doibase 10.1103/PhysRevD.109.074501} {\bibfield  {journal} {\bibinfo  {journal} {Phys. Rev. D}\ }\textbf {\bibinfo {volume} {109}},\ \bibinfo {pages} {074501} (\bibinfo {year} {2024})},\ \Eprint {http://arxiv.org/abs/2307.11829} {arXiv:2307.11829 [hep-ph]} \BibitemShut {NoStop}%
\bibitem [{\citenamefont {Turro}\ \emph {et~al.}(2024)\citenamefont {Turro}, \citenamefont {Ciavarella},\ and\ \citenamefont {Yao}}]{Turro:2024pxu}%
  \BibitemOpen
  \bibfield  {author} {\bibinfo {author} {\bibfnamefont {F.}~\bibnamefont {Turro}}, \bibinfo {author} {\bibfnamefont {A.}~\bibnamefont {Ciavarella}}, \ and\ \bibinfo {author} {\bibfnamefont {X.}~\bibnamefont {Yao}},\ }\href {\doibase 10.1103/PhysRevD.109.114511} {\bibfield  {journal} {\bibinfo  {journal} {Phys. Rev. D}\ }\textbf {\bibinfo {volume} {109}},\ \bibinfo {pages} {114511} (\bibinfo {year} {2024})},\ \Eprint {http://arxiv.org/abs/2402.04221} {arXiv:2402.04221 [hep-lat]} \BibitemShut {NoStop}%
\bibitem [{\citenamefont {Grabowska}\ \emph {et~al.}(2024)\citenamefont {Grabowska}, \citenamefont {Kane},\ and\ \citenamefont {Bauer}}]{Grabowska:2024emw}%
  \BibitemOpen
  \bibfield  {author} {\bibinfo {author} {\bibfnamefont {D.~M.}\ \bibnamefont {Grabowska}}, \bibinfo {author} {\bibfnamefont {C.~F.}\ \bibnamefont {Kane}}, \ and\ \bibinfo {author} {\bibfnamefont {C.~W.}\ \bibnamefont {Bauer}},\ }\href@noop {} {\  (\bibinfo {year} {2024})},\ \Eprint {http://arxiv.org/abs/2409.10610} {arXiv:2409.10610 [quant-ph]} \BibitemShut {NoStop}%
\bibitem [{\citenamefont {Burbano}\ and\ \citenamefont {Bauer}(2024)}]{Burbano:2024uvn}%
  \BibitemOpen
  \bibfield  {author} {\bibinfo {author} {\bibfnamefont {I.~M.}\ \bibnamefont {Burbano}}\ and\ \bibinfo {author} {\bibfnamefont {C.~W.}\ \bibnamefont {Bauer}},\ }\href@noop {} {\  (\bibinfo {year} {2024})},\ \Eprint {http://arxiv.org/abs/2409.13812} {arXiv:2409.13812 [hep-lat]} \BibitemShut {NoStop}%
\bibitem [{\citenamefont {Anishetty}\ \emph {et~al.}(2010)\citenamefont {Anishetty}, \citenamefont {Mathur},\ and\ \citenamefont {Raychowdhury}}]{Anishetty:2009nh}%
  \BibitemOpen
  \bibfield  {author} {\bibinfo {author} {\bibfnamefont {R.}~\bibnamefont {Anishetty}}, \bibinfo {author} {\bibfnamefont {M.}~\bibnamefont {Mathur}}, \ and\ \bibinfo {author} {\bibfnamefont {I.}~\bibnamefont {Raychowdhury}},\ }\href {\doibase 10.1088/1751-8113/43/3/035403} {\bibfield  {journal} {\bibinfo  {journal} {J. Phys. A}\ }\textbf {\bibinfo {volume} {43}},\ \bibinfo {pages} {035403} (\bibinfo {year} {2010})},\ \Eprint {http://arxiv.org/abs/0909.2394} {arXiv:0909.2394 [hep-lat]} \BibitemShut {NoStop}%
\bibitem [{\citenamefont {Alexandru}\ \emph {et~al.}(2019)\citenamefont {Alexandru}, \citenamefont {Bedaque}, \citenamefont {Harmalkar}, \citenamefont {Lamm}, \citenamefont {Lawrence},\ and\ \citenamefont {Warrington}}]{Alexandru:2019nsa}%
  \BibitemOpen
  \bibfield  {author} {\bibinfo {author} {\bibfnamefont {A.}~\bibnamefont {Alexandru}}, \bibinfo {author} {\bibfnamefont {P.~F.}\ \bibnamefont {Bedaque}}, \bibinfo {author} {\bibfnamefont {S.}~\bibnamefont {Harmalkar}}, \bibinfo {author} {\bibfnamefont {H.}~\bibnamefont {Lamm}}, \bibinfo {author} {\bibfnamefont {S.}~\bibnamefont {Lawrence}}, \ and\ \bibinfo {author} {\bibfnamefont {N.~C.}\ \bibnamefont {Warrington}} (\bibinfo {collaboration} {NuQS}),\ }\href {\doibase 10.1103/PhysRevD.100.114501} {\bibfield  {journal} {\bibinfo  {journal} {Phys. Rev. D}\ }\textbf {\bibinfo {volume} {100}},\ \bibinfo {pages} {114501} (\bibinfo {year} {2019})},\ \Eprint {http://arxiv.org/abs/1906.11213} {arXiv:1906.11213 [hep-lat]} \BibitemShut {NoStop}%
\bibitem [{\citenamefont {Ciavarella}\ \emph {et~al.}(2021)\citenamefont {Ciavarella}, \citenamefont {Klco},\ and\ \citenamefont {Savage}}]{Ciavarella:2021nmj}%
  \BibitemOpen
  \bibfield  {author} {\bibinfo {author} {\bibfnamefont {A.}~\bibnamefont {Ciavarella}}, \bibinfo {author} {\bibfnamefont {N.}~\bibnamefont {Klco}}, \ and\ \bibinfo {author} {\bibfnamefont {M.~J.}\ \bibnamefont {Savage}},\ }\href {\doibase 10.1103/PhysRevD.103.094501} {\bibfield  {journal} {\bibinfo  {journal} {Phys. Rev. D}\ }\textbf {\bibinfo {volume} {103}},\ \bibinfo {pages} {094501} (\bibinfo {year} {2021})},\ \Eprint {http://arxiv.org/abs/2101.10227} {arXiv:2101.10227 [quant-ph]} \BibitemShut {NoStop}%
\bibitem [{\citenamefont {Farrell}\ \emph {et~al.}(2023{\natexlab{a}})\citenamefont {Farrell}, \citenamefont {Chernyshev}, \citenamefont {Powell}, \citenamefont {Zemlevskiy}, \citenamefont {Illa},\ and\ \citenamefont {Savage}}]{Farrell:2022wyt}%
  \BibitemOpen
  \bibfield  {author} {\bibinfo {author} {\bibfnamefont {R.~C.}\ \bibnamefont {Farrell}}, \bibinfo {author} {\bibfnamefont {I.~A.}\ \bibnamefont {Chernyshev}}, \bibinfo {author} {\bibfnamefont {S.~J.~M.}\ \bibnamefont {Powell}}, \bibinfo {author} {\bibfnamefont {N.~A.}\ \bibnamefont {Zemlevskiy}}, \bibinfo {author} {\bibfnamefont {M.}~\bibnamefont {Illa}}, \ and\ \bibinfo {author} {\bibfnamefont {M.~J.}\ \bibnamefont {Savage}},\ }\href {\doibase 10.1103/PhysRevD.107.054512} {\bibfield  {journal} {\bibinfo  {journal} {Phys. Rev. D}\ }\textbf {\bibinfo {volume} {107}},\ \bibinfo {pages} {054512} (\bibinfo {year} {2023}{\natexlab{a}})},\ \Eprint {http://arxiv.org/abs/2207.01731} {arXiv:2207.01731 [quant-ph]} \BibitemShut {NoStop}%
\bibitem [{\citenamefont {Farrell}\ \emph {et~al.}(2023{\natexlab{b}})\citenamefont {Farrell}, \citenamefont {Chernyshev}, \citenamefont {Powell}, \citenamefont {Zemlevskiy}, \citenamefont {Illa},\ and\ \citenamefont {Savage}}]{Farrell:2022vyh}%
  \BibitemOpen
  \bibfield  {author} {\bibinfo {author} {\bibfnamefont {R.~C.}\ \bibnamefont {Farrell}}, \bibinfo {author} {\bibfnamefont {I.~A.}\ \bibnamefont {Chernyshev}}, \bibinfo {author} {\bibfnamefont {S.~J.~M.}\ \bibnamefont {Powell}}, \bibinfo {author} {\bibfnamefont {N.~A.}\ \bibnamefont {Zemlevskiy}}, \bibinfo {author} {\bibfnamefont {M.}~\bibnamefont {Illa}}, \ and\ \bibinfo {author} {\bibfnamefont {M.~J.}\ \bibnamefont {Savage}},\ }\href {\doibase 10.1103/PhysRevD.107.054513} {\bibfield  {journal} {\bibinfo  {journal} {Phys. Rev. D}\ }\textbf {\bibinfo {volume} {107}},\ \bibinfo {pages} {054513} (\bibinfo {year} {2023}{\natexlab{b}})},\ \Eprint {http://arxiv.org/abs/2209.10781} {arXiv:2209.10781 [quant-ph]} \BibitemShut {NoStop}%
\bibitem [{\citenamefont {Atas}\ \emph {et~al.}(2023)\citenamefont {Atas}, \citenamefont {Haase}, \citenamefont {Zhang}, \citenamefont {Wei}, \citenamefont {Pfaendler}, \citenamefont {Lewis},\ and\ \citenamefont {Muschik}}]{Atas:2022dqm}%
  \BibitemOpen
  \bibfield  {author} {\bibinfo {author} {\bibfnamefont {Y.~Y.}\ \bibnamefont {Atas}}, \bibinfo {author} {\bibfnamefont {J.~F.}\ \bibnamefont {Haase}}, \bibinfo {author} {\bibfnamefont {J.}~\bibnamefont {Zhang}}, \bibinfo {author} {\bibfnamefont {V.}~\bibnamefont {Wei}}, \bibinfo {author} {\bibfnamefont {S.~M.~L.}\ \bibnamefont {Pfaendler}}, \bibinfo {author} {\bibfnamefont {R.}~\bibnamefont {Lewis}}, \ and\ \bibinfo {author} {\bibfnamefont {C.~A.}\ \bibnamefont {Muschik}},\ }\href {\doibase 10.1103/PhysRevResearch.5.033184} {\bibfield  {journal} {\bibinfo  {journal} {Phys. Rev. Res.}\ }\textbf {\bibinfo {volume} {5}},\ \bibinfo {pages} {033184} (\bibinfo {year} {2023})},\ \Eprint {http://arxiv.org/abs/2207.03473} {arXiv:2207.03473 [quant-ph]} \BibitemShut {NoStop}%
\bibitem [{\citenamefont {Ciavarella}\ and\ \citenamefont {Chernyshev}(2022)}]{Ciavarella:2021lel}%
  \BibitemOpen
  \bibfield  {author} {\bibinfo {author} {\bibfnamefont {A.~N.}\ \bibnamefont {Ciavarella}}\ and\ \bibinfo {author} {\bibfnamefont {I.~A.}\ \bibnamefont {Chernyshev}},\ }\href {\doibase 10.1103/PhysRevD.105.074504} {\bibfield  {journal} {\bibinfo  {journal} {Phys. Rev. D}\ }\textbf {\bibinfo {volume} {105}},\ \bibinfo {pages} {074504} (\bibinfo {year} {2022})},\ \Eprint {http://arxiv.org/abs/2112.09083} {arXiv:2112.09083 [quant-ph]} \BibitemShut {NoStop}%
\bibitem [{\citenamefont {Hayata}\ and\ \citenamefont {Hidaka}(2023)}]{hayata2023qdeformedformulationhamiltonian}%
  \BibitemOpen
  \bibfield  {author} {\bibinfo {author} {\bibfnamefont {T.}~\bibnamefont {Hayata}}\ and\ \bibinfo {author} {\bibfnamefont {Y.}~\bibnamefont {Hidaka}},\ }\href {https://arxiv.org/abs/2306.12324} {\enquote {\bibinfo {title} {$q$ deformed formulation of hamiltonian su(3) yang-mills theory},}\ } (\bibinfo {year} {2023}),\ \Eprint {http://arxiv.org/abs/2306.12324} {arXiv:2306.12324 [hep-lat]} \BibitemShut {NoStop}%
\bibitem [{\citenamefont {Farrell}\ \emph {et~al.}(2024{\natexlab{b}})\citenamefont {Farrell}, \citenamefont {Illa}, \citenamefont {Ciavarella},\ and\ \citenamefont {Savage}}]{Farrell:2024fit}%
  \BibitemOpen
  \bibfield  {author} {\bibinfo {author} {\bibfnamefont {R.~C.}\ \bibnamefont {Farrell}}, \bibinfo {author} {\bibfnamefont {M.}~\bibnamefont {Illa}}, \bibinfo {author} {\bibfnamefont {A.~N.}\ \bibnamefont {Ciavarella}}, \ and\ \bibinfo {author} {\bibfnamefont {M.~J.}\ \bibnamefont {Savage}},\ }\href {\doibase 10.1103/PhysRevD.109.114510} {\bibfield  {journal} {\bibinfo  {journal} {Phys. Rev. D}\ }\textbf {\bibinfo {volume} {109}},\ \bibinfo {pages} {114510} (\bibinfo {year} {2024}{\natexlab{b}})},\ \Eprint {http://arxiv.org/abs/2401.08044} {arXiv:2401.08044 [quant-ph]} \BibitemShut {NoStop}%
\bibitem [{\citenamefont {Ciavarella}\ and\ \citenamefont {Bauer}(2024)}]{Ciavarella:2024fzw}%
  \BibitemOpen
  \bibfield  {author} {\bibinfo {author} {\bibfnamefont {A.~N.}\ \bibnamefont {Ciavarella}}\ and\ \bibinfo {author} {\bibfnamefont {C.~W.}\ \bibnamefont {Bauer}},\ }\href {\doibase 10.1103/PhysRevLett.133.111901} {\bibfield  {journal} {\bibinfo  {journal} {Phys. Rev. Lett.}\ }\textbf {\bibinfo {volume} {133}},\ \bibinfo {pages} {111901} (\bibinfo {year} {2024})},\ \Eprint {http://arxiv.org/abs/2402.10265} {arXiv:2402.10265 [hep-ph]} \BibitemShut {NoStop}%
\bibitem [{\citenamefont {Gustafson}\ \emph {et~al.}(2024)\citenamefont {Gustafson}, \citenamefont {Ji}, \citenamefont {Lamm}, \citenamefont {Murairi}, \citenamefont {Perez},\ and\ \citenamefont {Zhu}}]{Gustafson:2024kym}%
  \BibitemOpen
  \bibfield  {author} {\bibinfo {author} {\bibfnamefont {E.~J.}\ \bibnamefont {Gustafson}}, \bibinfo {author} {\bibfnamefont {Y.}~\bibnamefont {Ji}}, \bibinfo {author} {\bibfnamefont {H.}~\bibnamefont {Lamm}}, \bibinfo {author} {\bibfnamefont {E.~M.}\ \bibnamefont {Murairi}}, \bibinfo {author} {\bibfnamefont {S.~O.}\ \bibnamefont {Perez}}, \ and\ \bibinfo {author} {\bibfnamefont {S.}~\bibnamefont {Zhu}},\ }\href {\doibase 10.1103/PhysRevD.110.034515} {\bibfield  {journal} {\bibinfo  {journal} {Phys. Rev. D}\ }\textbf {\bibinfo {volume} {110}},\ \bibinfo {pages} {034515} (\bibinfo {year} {2024})},\ \Eprint {http://arxiv.org/abs/2405.05973} {arXiv:2405.05973 [hep-lat]} \BibitemShut {NoStop}%
\bibitem [{\citenamefont {Balaji}\ \emph {et~al.}(2025)\citenamefont {Balaji}, \citenamefont {Conefrey-Shinozaki}, \citenamefont {Draper}, \citenamefont {Elhaderi}, \citenamefont {Gupta}, \citenamefont {Hidalgo}, \citenamefont {Lytle},\ and\ \citenamefont {Rinaldi}}]{Balaji:2025afl}%
  \BibitemOpen
  \bibfield  {author} {\bibinfo {author} {\bibfnamefont {P.}~\bibnamefont {Balaji}}, \bibinfo {author} {\bibfnamefont {C.}~\bibnamefont {Conefrey-Shinozaki}}, \bibinfo {author} {\bibfnamefont {P.}~\bibnamefont {Draper}}, \bibinfo {author} {\bibfnamefont {J.~K.}\ \bibnamefont {Elhaderi}}, \bibinfo {author} {\bibfnamefont {D.}~\bibnamefont {Gupta}}, \bibinfo {author} {\bibfnamefont {L.}~\bibnamefont {Hidalgo}}, \bibinfo {author} {\bibfnamefont {A.}~\bibnamefont {Lytle}}, \ and\ \bibinfo {author} {\bibfnamefont {E.}~\bibnamefont {Rinaldi}},\ }\href@noop {} {\  (\bibinfo {year} {2025})},\ \Eprint {http://arxiv.org/abs/2503.08866} {arXiv:2503.08866 [hep-lat]} \BibitemShut {NoStop}%
\bibitem [{\citenamefont {Ciavarella}\ \emph {et~al.}(2025{\natexlab{a}})\citenamefont {Ciavarella}, \citenamefont {Burbano},\ and\ \citenamefont {Bauer}}]{Ciavarella:2025bsg}%
  \BibitemOpen
  \bibfield  {author} {\bibinfo {author} {\bibfnamefont {A.~N.}\ \bibnamefont {Ciavarella}}, \bibinfo {author} {\bibfnamefont {I.~M.}\ \bibnamefont {Burbano}}, \ and\ \bibinfo {author} {\bibfnamefont {C.~W.}\ \bibnamefont {Bauer}},\ }\href@noop {} {\  (\bibinfo {year} {2025}{\natexlab{a}})},\ \Eprint {http://arxiv.org/abs/2503.11888} {arXiv:2503.11888 [hep-lat]} \BibitemShut {NoStop}%
\bibitem [{\citenamefont {Kreshchuk}\ \emph {et~al.}(2021)\citenamefont {Kreshchuk}, \citenamefont {Jia}, \citenamefont {Kirby}, \citenamefont {Goldstein}, \citenamefont {Vary},\ and\ \citenamefont {Love}}]{Kreshchuk:2020kcz}%
  \BibitemOpen
  \bibfield  {author} {\bibinfo {author} {\bibfnamefont {M.}~\bibnamefont {Kreshchuk}}, \bibinfo {author} {\bibfnamefont {S.}~\bibnamefont {Jia}}, \bibinfo {author} {\bibfnamefont {W.~M.}\ \bibnamefont {Kirby}}, \bibinfo {author} {\bibfnamefont {G.}~\bibnamefont {Goldstein}}, \bibinfo {author} {\bibfnamefont {J.~P.}\ \bibnamefont {Vary}}, \ and\ \bibinfo {author} {\bibfnamefont {P.~J.}\ \bibnamefont {Love}},\ }\href {\doibase 10.3390/e23050597} {\bibfield  {journal} {\bibinfo  {journal} {Entropy}\ }\textbf {\bibinfo {volume} {23}},\ \bibinfo {pages} {597} (\bibinfo {year} {2021})},\ \Eprint {http://arxiv.org/abs/2009.07885} {arXiv:2009.07885 [quant-ph]} \BibitemShut {NoStop}%
\bibitem [{\citenamefont {Kreshchuk}\ \emph {et~al.}(2022)\citenamefont {Kreshchuk}, \citenamefont {Kirby}, \citenamefont {Goldstein}, \citenamefont {Beauchemin},\ and\ \citenamefont {Love}}]{kreshchuk2022quantum}%
  \BibitemOpen
  \bibfield  {author} {\bibinfo {author} {\bibfnamefont {M.}~\bibnamefont {Kreshchuk}}, \bibinfo {author} {\bibfnamefont {W.~M.}\ \bibnamefont {Kirby}}, \bibinfo {author} {\bibfnamefont {G.}~\bibnamefont {Goldstein}}, \bibinfo {author} {\bibfnamefont {H.}~\bibnamefont {Beauchemin}}, \ and\ \bibinfo {author} {\bibfnamefont {P.~J.}\ \bibnamefont {Love}},\ }\href {\doibase 10.1103/PhysRevA.105.032418} {\bibfield  {journal} {\bibinfo  {journal} {Phys. Rev. A}\ }\textbf {\bibinfo {volume} {105}},\ \bibinfo {pages} {032418} (\bibinfo {year} {2022})}\BibitemShut {NoStop}%
\bibitem [{\citenamefont {Kreshchuk}\ \emph {et~al.}(2023)\citenamefont {Kreshchuk}, \citenamefont {Vary},\ and\ \citenamefont {Love}}]{kreshchuk2023simulatingscatteringcompositeparticles}%
  \BibitemOpen
  \bibfield  {author} {\bibinfo {author} {\bibfnamefont {M.}~\bibnamefont {Kreshchuk}}, \bibinfo {author} {\bibfnamefont {J.~P.}\ \bibnamefont {Vary}}, \ and\ \bibinfo {author} {\bibfnamefont {P.~J.}\ \bibnamefont {Love}},\ }\href {https://arxiv.org/abs/2310.13742} {\enquote {\bibinfo {title} {Simulating scattering of composite particles},}\ } (\bibinfo {year} {2023}),\ \Eprint {http://arxiv.org/abs/2310.13742} {arXiv:2310.13742 [quant-ph]} \BibitemShut {NoStop}%
\bibitem [{\citenamefont {Bauer}\ \emph {et~al.}(2021)\citenamefont {Bauer}, \citenamefont {de~Jong}, \citenamefont {Nachman},\ and\ \citenamefont {Provasoli}}]{Bauer:2019qxa}%
  \BibitemOpen
  \bibfield  {author} {\bibinfo {author} {\bibfnamefont {C.~W.}\ \bibnamefont {Bauer}}, \bibinfo {author} {\bibfnamefont {W.~A.}\ \bibnamefont {de~Jong}}, \bibinfo {author} {\bibfnamefont {B.}~\bibnamefont {Nachman}}, \ and\ \bibinfo {author} {\bibfnamefont {D.}~\bibnamefont {Provasoli}},\ }\href {\doibase 10.1103/PhysRevLett.126.062001} {\bibfield  {journal} {\bibinfo  {journal} {Phys. Rev. Lett.}\ }\textbf {\bibinfo {volume} {126}},\ \bibinfo {pages} {062001} (\bibinfo {year} {2021})},\ \Eprint {http://arxiv.org/abs/1904.03196} {arXiv:1904.03196 [hep-ph]} \BibitemShut {NoStop}%
\bibitem [{\citenamefont {Bauer}\ \emph {et~al.}(2024)\citenamefont {Bauer}, \citenamefont {Chigusa},\ and\ \citenamefont {Yamazaki}}]{Bauer:2023ujy}%
  \BibitemOpen
  \bibfield  {author} {\bibinfo {author} {\bibfnamefont {C.~W.}\ \bibnamefont {Bauer}}, \bibinfo {author} {\bibfnamefont {S.}~\bibnamefont {Chigusa}}, \ and\ \bibinfo {author} {\bibfnamefont {M.}~\bibnamefont {Yamazaki}},\ }\href {\doibase 10.1103/PhysRevA.109.032432} {\bibfield  {journal} {\bibinfo  {journal} {Phys. Rev. A}\ }\textbf {\bibinfo {volume} {109}},\ \bibinfo {pages} {032432} (\bibinfo {year} {2024})},\ \Eprint {http://arxiv.org/abs/2310.19881} {arXiv:2310.19881 [hep-ph]} \BibitemShut {NoStop}%
\bibitem [{\citenamefont {Chigusa}\ and\ \citenamefont {Yamazaki}(2022)}]{Chigusa:2022act}%
  \BibitemOpen
  \bibfield  {author} {\bibinfo {author} {\bibfnamefont {S.}~\bibnamefont {Chigusa}}\ and\ \bibinfo {author} {\bibfnamefont {M.}~\bibnamefont {Yamazaki}},\ }\href {\doibase 10.1016/j.physletb.2022.137466} {\bibfield  {journal} {\bibinfo  {journal} {Phys. Lett. B}\ }\textbf {\bibinfo {volume} {834}},\ \bibinfo {pages} {137466} (\bibinfo {year} {2022})},\ \Eprint {http://arxiv.org/abs/2204.12500} {arXiv:2204.12500 [hep-ph]} \BibitemShut {NoStop}%
\bibitem [{\citenamefont {Carena}\ \emph {et~al.}(2022{\natexlab{a}})\citenamefont {Carena}, \citenamefont {Lamm}, \citenamefont {Li},\ and\ \citenamefont {Liu}}]{Carena:2022kpg}%
  \BibitemOpen
  \bibfield  {author} {\bibinfo {author} {\bibfnamefont {M.}~\bibnamefont {Carena}}, \bibinfo {author} {\bibfnamefont {H.}~\bibnamefont {Lamm}}, \bibinfo {author} {\bibfnamefont {Y.-Y.}\ \bibnamefont {Li}}, \ and\ \bibinfo {author} {\bibfnamefont {W.}~\bibnamefont {Liu}},\ }\href {\doibase 10.1103/PhysRevLett.129.051601} {\bibfield  {journal} {\bibinfo  {journal} {Phys. Rev. Lett.}\ }\textbf {\bibinfo {volume} {129}},\ \bibinfo {pages} {051601} (\bibinfo {year} {2022}{\natexlab{a}})},\ \Eprint {http://arxiv.org/abs/2203.02823} {arXiv:2203.02823 [hep-lat]} \BibitemShut {NoStop}%
\bibitem [{\citenamefont {Gustafson}\ and\ \citenamefont {Van~de Water}(2024)}]{Gustafson:2023aai}%
  \BibitemOpen
  \bibfield  {author} {\bibinfo {author} {\bibfnamefont {E.}~\bibnamefont {Gustafson}}\ and\ \bibinfo {author} {\bibfnamefont {R.}~\bibnamefont {Van~de Water}},\ }\href {\doibase 10.22323/1.453.0215} {\bibfield  {journal} {\bibinfo  {journal} {PoS}\ }\textbf {\bibinfo {volume} {LATTICE2023}},\ \bibinfo {pages} {215} (\bibinfo {year} {2024})},\ \Eprint {http://arxiv.org/abs/2402.04317} {arXiv:2402.04317 [hep-lat]} \BibitemShut {NoStop}%
\bibitem [{\citenamefont {Ciavarella}(2023)}]{Ciavarella:2023mfc}%
  \BibitemOpen
  \bibfield  {author} {\bibinfo {author} {\bibfnamefont {A.~N.}\ \bibnamefont {Ciavarella}},\ }\href {\doibase 10.1103/PhysRevD.108.094513} {\bibfield  {journal} {\bibinfo  {journal} {Phys. Rev. D}\ }\textbf {\bibinfo {volume} {108}},\ \bibinfo {pages} {094513} (\bibinfo {year} {2023})},\ \Eprint {http://arxiv.org/abs/2307.05593} {arXiv:2307.05593 [hep-lat]} \BibitemShut {NoStop}%
\bibitem [{\citenamefont {Illa}\ \emph {et~al.}(2025{\natexlab{a}})\citenamefont {Illa}, \citenamefont {Savage},\ and\ \citenamefont {Yao}}]{Illa:2025dou}%
  \BibitemOpen
  \bibfield  {author} {\bibinfo {author} {\bibfnamefont {M.}~\bibnamefont {Illa}}, \bibinfo {author} {\bibfnamefont {M.~J.}\ \bibnamefont {Savage}}, \ and\ \bibinfo {author} {\bibfnamefont {X.}~\bibnamefont {Yao}},\ }\href@noop {} {\  (\bibinfo {year} {2025}{\natexlab{a}})},\ \Eprint {http://arxiv.org/abs/2503.09688} {arXiv:2503.09688 [hep-lat]} \BibitemShut {NoStop}%
\bibitem [{\citenamefont {Illa}\ \emph {et~al.}(2025{\natexlab{b}})\citenamefont {Illa}, \citenamefont {Savage},\ and\ \citenamefont {Yao}}]{Illa:2025njz}%
  \BibitemOpen
  \bibfield  {author} {\bibinfo {author} {\bibfnamefont {M.}~\bibnamefont {Illa}}, \bibinfo {author} {\bibfnamefont {M.~J.}\ \bibnamefont {Savage}}, \ and\ \bibinfo {author} {\bibfnamefont {X.}~\bibnamefont {Yao}},\ }\href@noop {} {\  (\bibinfo {year} {2025}{\natexlab{b}})},\ \Eprint {http://arxiv.org/abs/2504.21575} {arXiv:2504.21575 [quant-ph]} \BibitemShut {NoStop}%
\bibitem [{\citenamefont {Roggero}\ \emph {et~al.}(2020)\citenamefont {Roggero}, \citenamefont {Li}, \citenamefont {Carlson}, \citenamefont {Gupta},\ and\ \citenamefont {Perdue}}]{Roggero:2019myu}%
  \BibitemOpen
  \bibfield  {author} {\bibinfo {author} {\bibfnamefont {A.}~\bibnamefont {Roggero}}, \bibinfo {author} {\bibfnamefont {A.~C.~Y.}\ \bibnamefont {Li}}, \bibinfo {author} {\bibfnamefont {J.}~\bibnamefont {Carlson}}, \bibinfo {author} {\bibfnamefont {R.}~\bibnamefont {Gupta}}, \ and\ \bibinfo {author} {\bibfnamefont {G.~N.}\ \bibnamefont {Perdue}},\ }\href {\doibase 10.1103/PhysRevD.101.074038} {\bibfield  {journal} {\bibinfo  {journal} {Phys. Rev. D}\ }\textbf {\bibinfo {volume} {101}},\ \bibinfo {pages} {074038} (\bibinfo {year} {2020})},\ \Eprint {http://arxiv.org/abs/1911.06368} {arXiv:1911.06368 [quant-ph]} \BibitemShut {NoStop}%
\bibitem [{\citenamefont {Watson}\ \emph {et~al.}(2023)\citenamefont {Watson}, \citenamefont {Bringewatt}, \citenamefont {Shaw}, \citenamefont {Childs}, \citenamefont {Gorshkov},\ and\ \citenamefont {Davoudi}}]{Watson:2023oov}%
  \BibitemOpen
  \bibfield  {author} {\bibinfo {author} {\bibfnamefont {J.~D.}\ \bibnamefont {Watson}}, \bibinfo {author} {\bibfnamefont {J.}~\bibnamefont {Bringewatt}}, \bibinfo {author} {\bibfnamefont {A.~F.}\ \bibnamefont {Shaw}}, \bibinfo {author} {\bibfnamefont {A.~M.}\ \bibnamefont {Childs}}, \bibinfo {author} {\bibfnamefont {A.~V.}\ \bibnamefont {Gorshkov}}, \ and\ \bibinfo {author} {\bibfnamefont {Z.}~\bibnamefont {Davoudi}},\ }\href@noop {} {\  (\bibinfo {year} {2023})},\ \Eprint {http://arxiv.org/abs/2312.05344} {arXiv:2312.05344 [quant-ph]} \BibitemShut {NoStop}%
\bibitem [{\citenamefont {Burbano}\ \emph {et~al.}(2025)\citenamefont {Burbano}, \citenamefont {Carrillo}, \citenamefont {Urek}, \citenamefont {Ciavarella},\ and\ \citenamefont {Brice\~no}}]{Burbano:2025pef}%
  \BibitemOpen
  \bibfield  {author} {\bibinfo {author} {\bibfnamefont {I.~M.}\ \bibnamefont {Burbano}}, \bibinfo {author} {\bibfnamefont {M.~A.}\ \bibnamefont {Carrillo}}, \bibinfo {author} {\bibfnamefont {R.}~\bibnamefont {Urek}}, \bibinfo {author} {\bibfnamefont {A.~N.}\ \bibnamefont {Ciavarella}}, \ and\ \bibinfo {author} {\bibfnamefont {R.~A.}\ \bibnamefont {Brice\~no}},\ }\href@noop {} {\  (\bibinfo {year} {2025})},\ \Eprint {http://arxiv.org/abs/2506.06511} {arXiv:2506.06511 [hep-lat]} \BibitemShut {NoStop}%
\bibitem [{\citenamefont {Tagliacozzo}\ \emph {et~al.}(2013)\citenamefont {Tagliacozzo}, \citenamefont {Celi}, \citenamefont {Orland},\ and\ \citenamefont {Lewenstein}}]{Tagliacozzo:2012df}%
  \BibitemOpen
  \bibfield  {author} {\bibinfo {author} {\bibfnamefont {L.}~\bibnamefont {Tagliacozzo}}, \bibinfo {author} {\bibfnamefont {A.}~\bibnamefont {Celi}}, \bibinfo {author} {\bibfnamefont {P.}~\bibnamefont {Orland}}, \ and\ \bibinfo {author} {\bibfnamefont {M.}~\bibnamefont {Lewenstein}},\ }\href {\doibase 10.1038/ncomms3615} {\bibfield  {journal} {\bibinfo  {journal} {Nature Commun.}\ }\textbf {\bibinfo {volume} {4}},\ \bibinfo {pages} {2615} (\bibinfo {year} {2013})},\ \Eprint {http://arxiv.org/abs/1211.2704} {arXiv:1211.2704 [cond-mat.quant-gas]} \BibitemShut {NoStop}%
\bibitem [{\citenamefont {Bazavov}\ \emph {et~al.}(2015)\citenamefont {Bazavov}, \citenamefont {Meurice}, \citenamefont {Tsai}, \citenamefont {Unmuth-Yockey},\ and\ \citenamefont {Zhang}}]{Bazavov:2015kka}%
  \BibitemOpen
  \bibfield  {author} {\bibinfo {author} {\bibfnamefont {A.}~\bibnamefont {Bazavov}}, \bibinfo {author} {\bibfnamefont {Y.}~\bibnamefont {Meurice}}, \bibinfo {author} {\bibfnamefont {S.-W.}\ \bibnamefont {Tsai}}, \bibinfo {author} {\bibfnamefont {J.}~\bibnamefont {Unmuth-Yockey}}, \ and\ \bibinfo {author} {\bibfnamefont {J.}~\bibnamefont {Zhang}},\ }\href {\doibase 10.1103/PhysRevD.92.076003} {\bibfield  {journal} {\bibinfo  {journal} {Phys. Rev. D}\ }\textbf {\bibinfo {volume} {92}},\ \bibinfo {pages} {076003} (\bibinfo {year} {2015})},\ \Eprint {http://arxiv.org/abs/1503.08354} {arXiv:1503.08354 [hep-lat]} \BibitemShut {NoStop}%
\bibitem [{\citenamefont {Jordan}\ \emph {et~al.}(2018)\citenamefont {Jordan}, \citenamefont {Krovi}, \citenamefont {Lee},\ and\ \citenamefont {Preskill}}]{Jordan:2017lea}%
  \BibitemOpen
  \bibfield  {author} {\bibinfo {author} {\bibfnamefont {S.~P.}\ \bibnamefont {Jordan}}, \bibinfo {author} {\bibfnamefont {H.}~\bibnamefont {Krovi}}, \bibinfo {author} {\bibfnamefont {K.~S.~M.}\ \bibnamefont {Lee}}, \ and\ \bibinfo {author} {\bibfnamefont {J.}~\bibnamefont {Preskill}},\ }\href {\doibase 10.22331/q-2018-01-08-44} {\bibfield  {journal} {\bibinfo  {journal} {Quantum}\ }\textbf {\bibinfo {volume} {2}},\ \bibinfo {pages} {44} (\bibinfo {year} {2018})},\ \Eprint {http://arxiv.org/abs/1703.00454} {arXiv:1703.00454 [quant-ph]} \BibitemShut {NoStop}%
\bibitem [{\citenamefont {Gonz\'alez-Cuadra}\ \emph {et~al.}(2017)\citenamefont {Gonz\'alez-Cuadra}, \citenamefont {Zohar},\ and\ \citenamefont {Cirac}}]{Gonzalez-Cuadra:2017lvz}%
  \BibitemOpen
  \bibfield  {author} {\bibinfo {author} {\bibfnamefont {D.}~\bibnamefont {Gonz\'alez-Cuadra}}, \bibinfo {author} {\bibfnamefont {E.}~\bibnamefont {Zohar}}, \ and\ \bibinfo {author} {\bibfnamefont {J.~I.}\ \bibnamefont {Cirac}},\ }\href {\doibase 10.1088/1367-2630/aa6f37} {\bibfield  {journal} {\bibinfo  {journal} {New J. Phys.}\ }\textbf {\bibinfo {volume} {19}},\ \bibinfo {pages} {063038} (\bibinfo {year} {2017})},\ \Eprint {http://arxiv.org/abs/1702.05492} {arXiv:1702.05492 [quant-ph]} \BibitemShut {NoStop}%
\bibitem [{\citenamefont {G\"org}\ \emph {et~al.}(2019)\citenamefont {G\"org}, \citenamefont {Sandholzer}, \citenamefont {Minguzzi}, \citenamefont {Desbuquois}, \citenamefont {Messer},\ and\ \citenamefont {Esslinger}}]{Gorg:2018xyc}%
  \BibitemOpen
  \bibfield  {author} {\bibinfo {author} {\bibfnamefont {F.}~\bibnamefont {G\"org}}, \bibinfo {author} {\bibfnamefont {K.}~\bibnamefont {Sandholzer}}, \bibinfo {author} {\bibfnamefont {J.}~\bibnamefont {Minguzzi}}, \bibinfo {author} {\bibfnamefont {R.}~\bibnamefont {Desbuquois}}, \bibinfo {author} {\bibfnamefont {M.}~\bibnamefont {Messer}}, \ and\ \bibinfo {author} {\bibfnamefont {T.}~\bibnamefont {Esslinger}},\ }\href {\doibase 10.1038/s41567-019-0615-4} {\bibfield  {journal} {\bibinfo  {journal} {Nature Phys.}\ }\textbf {\bibinfo {volume} {15}},\ \bibinfo {pages} {1161} (\bibinfo {year} {2019})},\ \Eprint {http://arxiv.org/abs/1812.05895} {arXiv:1812.05895 [cond-mat.quant-gas]} \BibitemShut {NoStop}%
\bibitem [{\citenamefont {Lamm}\ \emph {et~al.}(2019)\citenamefont {Lamm}, \citenamefont {Lawrence},\ and\ \citenamefont {Yamauchi}}]{Lamm:2019bik}%
  \BibitemOpen
  \bibfield  {author} {\bibinfo {author} {\bibfnamefont {H.}~\bibnamefont {Lamm}}, \bibinfo {author} {\bibfnamefont {S.}~\bibnamefont {Lawrence}}, \ and\ \bibinfo {author} {\bibfnamefont {Y.}~\bibnamefont {Yamauchi}} (\bibinfo {collaboration} {NuQS}),\ }\href {\doibase 10.1103/PhysRevD.100.034518} {\bibfield  {journal} {\bibinfo  {journal} {Phys. Rev. D}\ }\textbf {\bibinfo {volume} {100}},\ \bibinfo {pages} {034518} (\bibinfo {year} {2019})},\ \Eprint {http://arxiv.org/abs/1903.08807} {arXiv:1903.08807 [hep-lat]} \BibitemShut {NoStop}%
\bibitem [{\citenamefont {Zohar}\ and\ \citenamefont {Cirac}(2019)}]{Zohar:2019ygc}%
  \BibitemOpen
  \bibfield  {author} {\bibinfo {author} {\bibfnamefont {E.}~\bibnamefont {Zohar}}\ and\ \bibinfo {author} {\bibfnamefont {J.~I.}\ \bibnamefont {Cirac}},\ }\href {\doibase 10.1103/PhysRevD.99.114511} {\bibfield  {journal} {\bibinfo  {journal} {Phys. Rev. D}\ }\textbf {\bibinfo {volume} {99}},\ \bibinfo {pages} {114511} (\bibinfo {year} {2019})},\ \Eprint {http://arxiv.org/abs/1905.00652} {arXiv:1905.00652 [quant-ph]} \BibitemShut {NoStop}%
\bibitem [{\citenamefont {Buser}\ \emph {et~al.}(2021)\citenamefont {Buser}, \citenamefont {Gharibyan}, \citenamefont {Hanada}, \citenamefont {Honda},\ and\ \citenamefont {Liu}}]{Buser:2020cvn}%
  \BibitemOpen
  \bibfield  {author} {\bibinfo {author} {\bibfnamefont {A.~J.}\ \bibnamefont {Buser}}, \bibinfo {author} {\bibfnamefont {H.}~\bibnamefont {Gharibyan}}, \bibinfo {author} {\bibfnamefont {M.}~\bibnamefont {Hanada}}, \bibinfo {author} {\bibfnamefont {M.}~\bibnamefont {Honda}}, \ and\ \bibinfo {author} {\bibfnamefont {J.}~\bibnamefont {Liu}},\ }\href {\doibase 10.1007/JHEP09(2021)034} {\bibfield  {journal} {\bibinfo  {journal} {JHEP}\ }\textbf {\bibinfo {volume} {09}},\ \bibinfo {pages} {034} (\bibinfo {year} {2021})},\ \Eprint {http://arxiv.org/abs/2011.06576} {arXiv:2011.06576 [hep-th]} \BibitemShut {NoStop}%
\bibitem [{\citenamefont {Barata}\ \emph {et~al.}(2021)\citenamefont {Barata}, \citenamefont {Mueller}, \citenamefont {Tarasov},\ and\ \citenamefont {Venugopalan}}]{Barata:2020jtq}%
  \BibitemOpen
  \bibfield  {author} {\bibinfo {author} {\bibfnamefont {J.~a.}\ \bibnamefont {Barata}}, \bibinfo {author} {\bibfnamefont {N.}~\bibnamefont {Mueller}}, \bibinfo {author} {\bibfnamefont {A.}~\bibnamefont {Tarasov}}, \ and\ \bibinfo {author} {\bibfnamefont {R.}~\bibnamefont {Venugopalan}},\ }\href {\doibase 10.1103/PhysRevA.103.042410} {\bibfield  {journal} {\bibinfo  {journal} {Phys. Rev. A}\ }\textbf {\bibinfo {volume} {103}},\ \bibinfo {pages} {042410} (\bibinfo {year} {2021})},\ \Eprint {http://arxiv.org/abs/2012.00020} {arXiv:2012.00020 [hep-th]} \BibitemShut {NoStop}%
\bibitem [{\citenamefont {Stryker}(2021)}]{Stryker:2021asy}%
  \BibitemOpen
  \bibfield  {author} {\bibinfo {author} {\bibfnamefont {J.~R.}\ \bibnamefont {Stryker}},\ }\href@noop {} {\  (\bibinfo {year} {2021})},\ \Eprint {http://arxiv.org/abs/2105.11548} {arXiv:2105.11548 [hep-lat]} \BibitemShut {NoStop}%
\bibitem [{\citenamefont {Davoudi}\ \emph {et~al.}(2024{\natexlab{a}})\citenamefont {Davoudi}, \citenamefont {Jarzynski}, \citenamefont {Mueller}, \citenamefont {Oruganti}, \citenamefont {Powers},\ and\ \citenamefont {Halpern}}]{Davoudi:2024osg}%
  \BibitemOpen
  \bibfield  {author} {\bibinfo {author} {\bibfnamefont {Z.}~\bibnamefont {Davoudi}}, \bibinfo {author} {\bibfnamefont {C.}~\bibnamefont {Jarzynski}}, \bibinfo {author} {\bibfnamefont {N.}~\bibnamefont {Mueller}}, \bibinfo {author} {\bibfnamefont {G.}~\bibnamefont {Oruganti}}, \bibinfo {author} {\bibfnamefont {C.}~\bibnamefont {Powers}}, \ and\ \bibinfo {author} {\bibfnamefont {N.~Y.}\ \bibnamefont {Halpern}},\ }\href {\doibase 10.1103/PhysRevLett.133.250402} {\bibfield  {journal} {\bibinfo  {journal} {Phys. Rev. Lett.}\ }\textbf {\bibinfo {volume} {133}},\ \bibinfo {pages} {250402} (\bibinfo {year} {2024}{\natexlab{a}})},\ \Eprint {http://arxiv.org/abs/2404.02965} {arXiv:2404.02965 [quant-ph]} \BibitemShut {NoStop}%
\bibitem [{\citenamefont {Mueller}\ \emph {et~al.}(2024)\citenamefont {Mueller}, \citenamefont {Wang}, \citenamefont {Katz}, \citenamefont {Davoudi},\ and\ \citenamefont {Cetina}}]{Mueller:2024mmk}%
  \BibitemOpen
  \bibfield  {author} {\bibinfo {author} {\bibfnamefont {N.}~\bibnamefont {Mueller}}, \bibinfo {author} {\bibfnamefont {T.}~\bibnamefont {Wang}}, \bibinfo {author} {\bibfnamefont {O.}~\bibnamefont {Katz}}, \bibinfo {author} {\bibfnamefont {Z.}~\bibnamefont {Davoudi}}, \ and\ \bibinfo {author} {\bibfnamefont {M.}~\bibnamefont {Cetina}},\ }\href@noop {} {\  (\bibinfo {year} {2024})},\ \Eprint {http://arxiv.org/abs/2408.00069} {arXiv:2408.00069 [quant-ph]} \BibitemShut {NoStop}%
\bibitem [{\citenamefont {Davoudi}\ \emph {et~al.}(2025{\natexlab{a}})\citenamefont {Davoudi}, \citenamefont {Jarzynski}, \citenamefont {Mueller}, \citenamefont {Oruganti}, \citenamefont {Powers},\ and\ \citenamefont {Halpern}}]{Davoudi:2025tbi}%
  \BibitemOpen
  \bibfield  {author} {\bibinfo {author} {\bibfnamefont {Z.}~\bibnamefont {Davoudi}}, \bibinfo {author} {\bibfnamefont {C.}~\bibnamefont {Jarzynski}}, \bibinfo {author} {\bibfnamefont {N.}~\bibnamefont {Mueller}}, \bibinfo {author} {\bibfnamefont {G.}~\bibnamefont {Oruganti}}, \bibinfo {author} {\bibfnamefont {C.}~\bibnamefont {Powers}}, \ and\ \bibinfo {author} {\bibfnamefont {N.~Y.}\ \bibnamefont {Halpern}},\ }\href@noop {} {\  (\bibinfo {year} {2025}{\natexlab{a}})},\ \Eprint {http://arxiv.org/abs/2502.19418} {arXiv:2502.19418 [quant-ph]} \BibitemShut {NoStop}%
\bibitem [{\citenamefont {Surace}\ \emph {et~al.}(2024)\citenamefont {Surace} \emph {et~al.}}]{Surace:2024bht}%
  \BibitemOpen
  \bibfield  {author} {\bibinfo {author} {\bibfnamefont {F.~M.}\ \bibnamefont {Surace}} \emph {et~al.},\ }\href@noop {} {\  (\bibinfo {year} {2024})},\ \Eprint {http://arxiv.org/abs/2411.10652} {arXiv:2411.10652 [quant-ph]} \BibitemShut {NoStop}%
\bibitem [{\citenamefont {De}\ \emph {et~al.}(2024)\citenamefont {De} \emph {et~al.}}]{De:2024smi}%
  \BibitemOpen
  \bibfield  {author} {\bibinfo {author} {\bibfnamefont {A.}~\bibnamefont {De}} \emph {et~al.},\ }\href@noop {} {\  (\bibinfo {year} {2024})},\ \Eprint {http://arxiv.org/abs/2410.13815} {arXiv:2410.13815 [quant-ph]} \BibitemShut {NoStop}%
\bibitem [{\citenamefont {Davoudi}\ \emph {et~al.}(2024{\natexlab{b}})\citenamefont {Davoudi}, \citenamefont {Hsieh},\ and\ \citenamefont {Kadam}}]{Davoudi:2024wyv}%
  \BibitemOpen
  \bibfield  {author} {\bibinfo {author} {\bibfnamefont {Z.}~\bibnamefont {Davoudi}}, \bibinfo {author} {\bibfnamefont {C.-C.}\ \bibnamefont {Hsieh}}, \ and\ \bibinfo {author} {\bibfnamefont {S.~V.}\ \bibnamefont {Kadam}},\ }\href {\doibase 10.22331/q-2024-11-11-1520} {\bibfield  {journal} {\bibinfo  {journal} {Quantum}\ }\textbf {\bibinfo {volume} {8}},\ \bibinfo {pages} {1520} (\bibinfo {year} {2024}{\natexlab{b}})},\ \Eprint {http://arxiv.org/abs/2402.00840} {arXiv:2402.00840 [quant-ph]} \BibitemShut {NoStop}%
\bibitem [{\citenamefont {Bennewitz}\ \emph {et~al.}(2024)\citenamefont {Bennewitz} \emph {et~al.}}]{Bennewitz:2024ixi}%
  \BibitemOpen
  \bibfield  {author} {\bibinfo {author} {\bibfnamefont {E.~R.}\ \bibnamefont {Bennewitz}} \emph {et~al.},\ }\href@noop {} {\  (\bibinfo {year} {2024})},\ \Eprint {http://arxiv.org/abs/2403.07061} {arXiv:2403.07061 [quant-ph]} \BibitemShut {NoStop}%
\bibitem [{\citenamefont {Mueller}\ \emph {et~al.}(2023)\citenamefont {Mueller}, \citenamefont {Carolan}, \citenamefont {Connelly}, \citenamefont {Davoudi}, \citenamefont {Dumitrescu},\ and\ \citenamefont {Yeter-Aydeniz}}]{Mueller:2022xbg}%
  \BibitemOpen
  \bibfield  {author} {\bibinfo {author} {\bibfnamefont {N.}~\bibnamefont {Mueller}}, \bibinfo {author} {\bibfnamefont {J.~A.}\ \bibnamefont {Carolan}}, \bibinfo {author} {\bibfnamefont {A.}~\bibnamefont {Connelly}}, \bibinfo {author} {\bibfnamefont {Z.}~\bibnamefont {Davoudi}}, \bibinfo {author} {\bibfnamefont {E.~F.}\ \bibnamefont {Dumitrescu}}, \ and\ \bibinfo {author} {\bibfnamefont {K.}~\bibnamefont {Yeter-Aydeniz}},\ }\href {\doibase 10.1103/PRXQuantum.4.030323} {\bibfield  {journal} {\bibinfo  {journal} {PRX Quantum}\ }\textbf {\bibinfo {volume} {4}},\ \bibinfo {pages} {030323} (\bibinfo {year} {2023})},\ \Eprint {http://arxiv.org/abs/2210.03089} {arXiv:2210.03089 [quant-ph]} \BibitemShut {NoStop}%
\bibitem [{\citenamefont {K\"urk\c{c}\"uoglu}\ \emph {et~al.}(2024)\citenamefont {K\"urk\c{c}\"uoglu}, \citenamefont {Lamm},\ and\ \citenamefont {Maestri}}]{Kurkcuoglu:2024cfv}%
  \BibitemOpen
  \bibfield  {author} {\bibinfo {author} {\bibfnamefont {D.~M.}\ \bibnamefont {K\"urk\c{c}\"uoglu}}, \bibinfo {author} {\bibfnamefont {H.}~\bibnamefont {Lamm}}, \ and\ \bibinfo {author} {\bibfnamefont {A.}~\bibnamefont {Maestri}},\ }\href@noop {} {\  (\bibinfo {year} {2024})},\ \Eprint {http://arxiv.org/abs/2410.16414} {arXiv:2410.16414 [quant-ph]} \BibitemShut {NoStop}%
\bibitem [{\citenamefont {Carena}\ \emph {et~al.}(2024)\citenamefont {Carena}, \citenamefont {Lamm}, \citenamefont {Li},\ and\ \citenamefont {Liu}}]{Carena:2024dzu}%
  \BibitemOpen
  \bibfield  {author} {\bibinfo {author} {\bibfnamefont {M.}~\bibnamefont {Carena}}, \bibinfo {author} {\bibfnamefont {H.}~\bibnamefont {Lamm}}, \bibinfo {author} {\bibfnamefont {Y.-Y.}\ \bibnamefont {Li}}, \ and\ \bibinfo {author} {\bibfnamefont {W.}~\bibnamefont {Liu}},\ }\href {\doibase 10.1103/PhysRevD.110.054516} {\bibfield  {journal} {\bibinfo  {journal} {Phys. Rev. D}\ }\textbf {\bibinfo {volume} {110}},\ \bibinfo {pages} {054516} (\bibinfo {year} {2024})},\ \Eprint {http://arxiv.org/abs/2402.16780} {arXiv:2402.16780 [hep-lat]} \BibitemShut {NoStop}%
\bibitem [{\citenamefont {Ciavarella}(2025)}]{Ciavarella:2024lsp}%
  \BibitemOpen
  \bibfield  {author} {\bibinfo {author} {\bibfnamefont {A.~N.}\ \bibnamefont {Ciavarella}},\ }\href {\doibase 10.1103/PhysRevD.111.054501} {\bibfield  {journal} {\bibinfo  {journal} {Phys. Rev. D}\ }\textbf {\bibinfo {volume} {111}},\ \bibinfo {pages} {054501} (\bibinfo {year} {2025})},\ \Eprint {http://arxiv.org/abs/2411.05915} {arXiv:2411.05915 [quant-ph]} \BibitemShut {NoStop}%
\bibitem [{\citenamefont {Ciavarella}\ \emph {et~al.}(2025{\natexlab{b}})\citenamefont {Ciavarella}, \citenamefont {Bauer},\ and\ \citenamefont {Halimeh}}]{Ciavarella:2025zqf}%
  \BibitemOpen
  \bibfield  {author} {\bibinfo {author} {\bibfnamefont {A.~N.}\ \bibnamefont {Ciavarella}}, \bibinfo {author} {\bibfnamefont {C.~W.}\ \bibnamefont {Bauer}}, \ and\ \bibinfo {author} {\bibfnamefont {J.~C.}\ \bibnamefont {Halimeh}},\ }\href@noop {} {\  (\bibinfo {year} {2025}{\natexlab{b}})},\ \Eprint {http://arxiv.org/abs/2502.03533} {arXiv:2502.03533 [quant-ph]} \BibitemShut {NoStop}%
\bibitem [{\citenamefont {Ingoldby}\ \emph {et~al.}(2025)\citenamefont {Ingoldby}, \citenamefont {Spannowsky}, \citenamefont {Sypchenko}, \citenamefont {Williams},\ and\ \citenamefont {Wingate}}]{Ingoldby:2025bdb}%
  \BibitemOpen
  \bibfield  {author} {\bibinfo {author} {\bibfnamefont {J.}~\bibnamefont {Ingoldby}}, \bibinfo {author} {\bibfnamefont {M.}~\bibnamefont {Spannowsky}}, \bibinfo {author} {\bibfnamefont {T.}~\bibnamefont {Sypchenko}}, \bibinfo {author} {\bibfnamefont {S.}~\bibnamefont {Williams}}, \ and\ \bibinfo {author} {\bibfnamefont {M.}~\bibnamefont {Wingate}},\ }\href@noop {} {\  (\bibinfo {year} {2025})},\ \Eprint {http://arxiv.org/abs/2505.03878} {arXiv:2505.03878 [quant-ph]} \BibitemShut {NoStop}%
\bibitem [{\citenamefont {Gupta}\ \emph {et~al.}(2025)\citenamefont {Gupta}, \citenamefont {White},\ and\ \citenamefont {Davoudi}}]{Gupta:2025xti}%
  \BibitemOpen
  \bibfield  {author} {\bibinfo {author} {\bibfnamefont {N.}~\bibnamefont {Gupta}}, \bibinfo {author} {\bibfnamefont {C.~D.}\ \bibnamefont {White}}, \ and\ \bibinfo {author} {\bibfnamefont {Z.}~\bibnamefont {Davoudi}},\ }\href@noop {} {\  (\bibinfo {year} {2025})},\ \Eprint {http://arxiv.org/abs/2506.02313} {arXiv:2506.02313 [quant-ph]} \BibitemShut {NoStop}%
\bibitem [{\citenamefont {Bauer}(2025)}]{Bauer:2025nzf}%
  \BibitemOpen
  \bibfield  {author} {\bibinfo {author} {\bibfnamefont {C.~W.}\ \bibnamefont {Bauer}},\ }\href@noop {} {\  (\bibinfo {year} {2025})},\ \Eprint {http://arxiv.org/abs/2503.16602} {arXiv:2503.16602 [hep-ph]} \BibitemShut {NoStop}%
\bibitem [{\citenamefont {Alvi}\ \emph {et~al.}(2023)\citenamefont {Alvi}, \citenamefont {Bauer},\ and\ \citenamefont {Nachman}}]{Alvi:2022fkk}%
  \BibitemOpen
  \bibfield  {author} {\bibinfo {author} {\bibfnamefont {S.}~\bibnamefont {Alvi}}, \bibinfo {author} {\bibfnamefont {C.~W.}\ \bibnamefont {Bauer}}, \ and\ \bibinfo {author} {\bibfnamefont {B.}~\bibnamefont {Nachman}},\ }\href {\doibase 10.1007/JHEP02(2023)220} {\bibfield  {journal} {\bibinfo  {journal} {JHEP}\ }\textbf {\bibinfo {volume} {02}},\ \bibinfo {pages} {220} (\bibinfo {year} {2023})},\ \Eprint {http://arxiv.org/abs/2206.08391} {arXiv:2206.08391 [hep-ph]} \BibitemShut {NoStop}%
\bibitem [{\citenamefont {Kan}\ and\ \citenamefont {Nam}(2021)}]{Kan:2021xfc}%
  \BibitemOpen
  \bibfield  {author} {\bibinfo {author} {\bibfnamefont {A.}~\bibnamefont {Kan}}\ and\ \bibinfo {author} {\bibfnamefont {Y.}~\bibnamefont {Nam}},\ }\href@noop {} {\  (\bibinfo {year} {2021})},\ \Eprint {http://arxiv.org/abs/2107.12769} {arXiv:2107.12769 [quant-ph]} \BibitemShut {NoStop}%
\bibitem [{\citenamefont {Alexandru}\ \emph {et~al.}(2023)\citenamefont {Alexandru}, \citenamefont {Bedaque}, \citenamefont {Carosso}, \citenamefont {Cervia},\ and\ \citenamefont {Sheng}}]{Alexandru:2022son}%
  \BibitemOpen
  \bibfield  {author} {\bibinfo {author} {\bibfnamefont {A.}~\bibnamefont {Alexandru}}, \bibinfo {author} {\bibfnamefont {P.~F.}\ \bibnamefont {Bedaque}}, \bibinfo {author} {\bibfnamefont {A.}~\bibnamefont {Carosso}}, \bibinfo {author} {\bibfnamefont {M.~J.}\ \bibnamefont {Cervia}}, \ and\ \bibinfo {author} {\bibfnamefont {A.}~\bibnamefont {Sheng}},\ }\href {\doibase 10.1103/PhysRevD.107.034503} {\bibfield  {journal} {\bibinfo  {journal} {Phys. Rev. D}\ }\textbf {\bibinfo {volume} {107}},\ \bibinfo {pages} {034503} (\bibinfo {year} {2023})},\ \Eprint {http://arxiv.org/abs/2209.00098} {arXiv:2209.00098 [hep-lat]} \BibitemShut {NoStop}%
\bibitem [{\citenamefont {Tong}\ \emph {et~al.}(2022)\citenamefont {Tong}, \citenamefont {Albert}, \citenamefont {McClean}, \citenamefont {Preskill},\ and\ \citenamefont {Su}}]{tong2022provably}%
  \BibitemOpen
  \bibfield  {author} {\bibinfo {author} {\bibfnamefont {Y.}~\bibnamefont {Tong}}, \bibinfo {author} {\bibfnamefont {V.~V.}\ \bibnamefont {Albert}}, \bibinfo {author} {\bibfnamefont {J.~R.}\ \bibnamefont {McClean}}, \bibinfo {author} {\bibfnamefont {J.}~\bibnamefont {Preskill}}, \ and\ \bibinfo {author} {\bibfnamefont {Y.}~\bibnamefont {Su}},\ }\href {\doibase 10.22331/q-2022-09-22-816} {\bibfield  {journal} {\bibinfo  {journal} {Quantum}\ }\textbf {\bibinfo {volume} {6}},\ \bibinfo {pages} {816} (\bibinfo {year} {2022})}\BibitemShut {NoStop}%
\bibitem [{\citenamefont {Davoudi}\ \emph {et~al.}(2023)\citenamefont {Davoudi}, \citenamefont {Shaw},\ and\ \citenamefont {Stryker}}]{Davoudi:2022xmb}%
  \BibitemOpen
  \bibfield  {author} {\bibinfo {author} {\bibfnamefont {Z.}~\bibnamefont {Davoudi}}, \bibinfo {author} {\bibfnamefont {A.~F.}\ \bibnamefont {Shaw}}, \ and\ \bibinfo {author} {\bibfnamefont {J.~R.}\ \bibnamefont {Stryker}},\ }\href {\doibase 10.22331/q-2023-12-20-1213} {\bibfield  {journal} {\bibinfo  {journal} {Quantum}\ }\textbf {\bibinfo {volume} {7}},\ \bibinfo {pages} {1213} (\bibinfo {year} {2023})},\ \Eprint {http://arxiv.org/abs/2212.14030} {arXiv:2212.14030 [hep-lat]} \BibitemShut {NoStop}%
\bibitem [{\citenamefont {Kane}\ \emph {et~al.}(2024)\citenamefont {Kane}, \citenamefont {Gomes},\ and\ \citenamefont {Kreshchuk}}]{Kane:2023jdo}%
  \BibitemOpen
  \bibfield  {author} {\bibinfo {author} {\bibfnamefont {C.~F.}\ \bibnamefont {Kane}}, \bibinfo {author} {\bibfnamefont {N.}~\bibnamefont {Gomes}}, \ and\ \bibinfo {author} {\bibfnamefont {M.}~\bibnamefont {Kreshchuk}},\ }\href {\doibase 10.1103/PhysRevA.110.012455} {\bibfield  {journal} {\bibinfo  {journal} {Phys. Rev. A}\ }\textbf {\bibinfo {volume} {110}},\ \bibinfo {pages} {012455} (\bibinfo {year} {2024})},\ \Eprint {http://arxiv.org/abs/2310.13757} {arXiv:2310.13757 [quant-ph]} \BibitemShut {NoStop}%
\bibitem [{\citenamefont {Hariprakash}\ \emph {et~al.}(2025)\citenamefont {Hariprakash}, \citenamefont {Modi}, \citenamefont {Kreshchuk}, \citenamefont {Kane},\ and\ \citenamefont {Bauer}}]{Hariprakash:2023tla}%
  \BibitemOpen
  \bibfield  {author} {\bibinfo {author} {\bibfnamefont {S.}~\bibnamefont {Hariprakash}}, \bibinfo {author} {\bibfnamefont {N.~S.}\ \bibnamefont {Modi}}, \bibinfo {author} {\bibfnamefont {M.}~\bibnamefont {Kreshchuk}}, \bibinfo {author} {\bibfnamefont {C.~F.}\ \bibnamefont {Kane}}, \ and\ \bibinfo {author} {\bibfnamefont {C.~W.}\ \bibnamefont {Bauer}},\ }\href {\doibase 10.1103/PhysRevA.111.022419} {\bibfield  {journal} {\bibinfo  {journal} {Phys. Rev. A}\ }\textbf {\bibinfo {volume} {111}},\ \bibinfo {pages} {022419} (\bibinfo {year} {2025})},\ \Eprint {http://arxiv.org/abs/2312.11637} {arXiv:2312.11637 [quant-ph]} \BibitemShut {NoStop}%
\bibitem [{\citenamefont {Rhodes}\ \emph {et~al.}(2024)\citenamefont {Rhodes}, \citenamefont {Kreshchuk},\ and\ \citenamefont {Pathak}}]{Rhodes:2024zbr}%
  \BibitemOpen
  \bibfield  {author} {\bibinfo {author} {\bibfnamefont {M.~L.}\ \bibnamefont {Rhodes}}, \bibinfo {author} {\bibfnamefont {M.}~\bibnamefont {Kreshchuk}}, \ and\ \bibinfo {author} {\bibfnamefont {S.}~\bibnamefont {Pathak}},\ }\href {\doibase 10.1103/PRXQuantum.5.040347} {\bibfield  {journal} {\bibinfo  {journal} {PRX Quantum}\ }\textbf {\bibinfo {volume} {5}},\ \bibinfo {pages} {040347} (\bibinfo {year} {2024})},\ \Eprint {http://arxiv.org/abs/2405.10416} {arXiv:2405.10416 [quant-ph]} \BibitemShut {NoStop}%
\bibitem [{\citenamefont {Du}\ and\ \citenamefont {Vary}(2025)}]{Du:2024ixj}%
  \BibitemOpen
  \bibfield  {author} {\bibinfo {author} {\bibfnamefont {W.}~\bibnamefont {Du}}\ and\ \bibinfo {author} {\bibfnamefont {J.~P.}\ \bibnamefont {Vary}},\ }\href {\doibase 10.1103/PhysRevD.111.016013} {\bibfield  {journal} {\bibinfo  {journal} {Phys. Rev. D}\ }\textbf {\bibinfo {volume} {111}},\ \bibinfo {pages} {016013} (\bibinfo {year} {2025})},\ \Eprint {http://arxiv.org/abs/2407.13672} {arXiv:2407.13672 [quant-ph]} \BibitemShut {NoStop}%
\bibitem [{\citenamefont {Li}\ \emph {et~al.}(2024)\citenamefont {Li}, \citenamefont {Grabowska},\ and\ \citenamefont {Savage}}]{Li:2024lrl}%
  \BibitemOpen
  \bibfield  {author} {\bibinfo {author} {\bibfnamefont {Z.}~\bibnamefont {Li}}, \bibinfo {author} {\bibfnamefont {D.~M.}\ \bibnamefont {Grabowska}}, \ and\ \bibinfo {author} {\bibfnamefont {M.~J.}\ \bibnamefont {Savage}},\ }\href@noop {} {\  (\bibinfo {year} {2024})},\ \Eprint {http://arxiv.org/abs/2407.13835} {arXiv:2407.13835 [quant-ph]} \BibitemShut {NoStop}%
\bibitem [{\citenamefont {Kane}\ \emph {et~al.}(2025)\citenamefont {Kane}, \citenamefont {Hariprakash}, \citenamefont {Modi}, \citenamefont {Kreshchuk},\ and\ \citenamefont {Bauer}}]{Kane:2024odt}%
  \BibitemOpen
  \bibfield  {author} {\bibinfo {author} {\bibfnamefont {C.~F.}\ \bibnamefont {Kane}}, \bibinfo {author} {\bibfnamefont {S.}~\bibnamefont {Hariprakash}}, \bibinfo {author} {\bibfnamefont {N.~S.}\ \bibnamefont {Modi}}, \bibinfo {author} {\bibfnamefont {M.}~\bibnamefont {Kreshchuk}}, \ and\ \bibinfo {author} {\bibfnamefont {C.~W.}\ \bibnamefont {Bauer}},\ }\href {\doibase 10.22331/q-2025-05-15-1747} {\bibfield  {journal} {\bibinfo  {journal} {Quantum}\ }\textbf {\bibinfo {volume} {9}},\ \bibinfo {pages} {1747} (\bibinfo {year} {2025})},\ \Eprint {http://arxiv.org/abs/2408.16824} {arXiv:2408.16824 [quant-ph]} \BibitemShut {NoStop}%
\bibitem [{\citenamefont {Murairi}\ \emph {et~al.}(2024)\citenamefont {Murairi}, \citenamefont {Sohaib~Alam}, \citenamefont {Lamm}, \citenamefont {Hadfield},\ and\ \citenamefont {Gustafson}}]{Murairi:2024xpc}%
  \BibitemOpen
  \bibfield  {author} {\bibinfo {author} {\bibfnamefont {E.~M.}\ \bibnamefont {Murairi}}, \bibinfo {author} {\bibfnamefont {M.}~\bibnamefont {Sohaib~Alam}}, \bibinfo {author} {\bibfnamefont {H.}~\bibnamefont {Lamm}}, \bibinfo {author} {\bibfnamefont {S.}~\bibnamefont {Hadfield}}, \ and\ \bibinfo {author} {\bibfnamefont {E.}~\bibnamefont {Gustafson}},\ }\href {\doibase 10.1103/PhysRevD.110.074501} {\bibfield  {journal} {\bibinfo  {journal} {Phys. Rev. D}\ }\textbf {\bibinfo {volume} {110}},\ \bibinfo {pages} {074501} (\bibinfo {year} {2024})},\ \Eprint {http://arxiv.org/abs/2408.00075} {arXiv:2408.00075 [quant-ph]} \BibitemShut {NoStop}%
\bibitem [{\citenamefont {Assi}\ and\ \citenamefont {Lamm}(2024)}]{Assi:2024pdn}%
  \BibitemOpen
  \bibfield  {author} {\bibinfo {author} {\bibfnamefont {B.}~\bibnamefont {Assi}}\ and\ \bibinfo {author} {\bibfnamefont {H.}~\bibnamefont {Lamm}},\ }\href {\doibase 10.1103/PhysRevD.110.074511} {\bibfield  {journal} {\bibinfo  {journal} {Phys. Rev. D}\ }\textbf {\bibinfo {volume} {110}},\ \bibinfo {pages} {074511} (\bibinfo {year} {2024})},\ \Eprint {http://arxiv.org/abs/2405.12204} {arXiv:2405.12204 [hep-lat]} \BibitemShut {NoStop}%
\bibitem [{\citenamefont {Lamm}\ \emph {et~al.}(2024)\citenamefont {Lamm}, \citenamefont {Li}, \citenamefont {Shu}, \citenamefont {Wang},\ and\ \citenamefont {Xu}}]{Lamm:2024jnl}%
  \BibitemOpen
  \bibfield  {author} {\bibinfo {author} {\bibfnamefont {H.}~\bibnamefont {Lamm}}, \bibinfo {author} {\bibfnamefont {Y.-Y.}\ \bibnamefont {Li}}, \bibinfo {author} {\bibfnamefont {J.}~\bibnamefont {Shu}}, \bibinfo {author} {\bibfnamefont {Y.-L.}\ \bibnamefont {Wang}}, \ and\ \bibinfo {author} {\bibfnamefont {B.}~\bibnamefont {Xu}},\ }\href {\doibase 10.1103/PhysRevD.110.054505} {\bibfield  {journal} {\bibinfo  {journal} {Phys. Rev. D}\ }\textbf {\bibinfo {volume} {110}},\ \bibinfo {pages} {054505} (\bibinfo {year} {2024})},\ \Eprint {http://arxiv.org/abs/2405.12890} {arXiv:2405.12890 [hep-lat]} \BibitemShut {NoStop}%
\bibitem [{\citenamefont {Gomes}\ \emph {et~al.}(2024)\citenamefont {Gomes}, \citenamefont {Lim},\ and\ \citenamefont {Wiebe}}]{Gomes:2024tup}%
  \BibitemOpen
  \bibfield  {author} {\bibinfo {author} {\bibfnamefont {N.}~\bibnamefont {Gomes}}, \bibinfo {author} {\bibfnamefont {H.}~\bibnamefont {Lim}}, \ and\ \bibinfo {author} {\bibfnamefont {N.}~\bibnamefont {Wiebe}},\ }\href@noop {} {\  (\bibinfo {year} {2024})},\ \Eprint {http://arxiv.org/abs/2408.03254} {arXiv:2408.03254 [quant-ph]} \BibitemShut {NoStop}%
\bibitem [{\citenamefont {Hardy}\ \emph {et~al.}(2024)\citenamefont {Hardy} \emph {et~al.}}]{Hardy:2024ric}%
  \BibitemOpen
  \bibfield  {author} {\bibinfo {author} {\bibfnamefont {A.}~\bibnamefont {Hardy}} \emph {et~al.},\ }\href@noop {} {\  (\bibinfo {year} {2024})},\ \Eprint {http://arxiv.org/abs/2407.13819} {arXiv:2407.13819 [quant-ph]} \BibitemShut {NoStop}%
\bibitem [{\citenamefont {Simon}\ \emph {et~al.}(2025)\citenamefont {Simon}, \citenamefont {Gustin}, \citenamefont {Serafin}, \citenamefont {Ralli}, \citenamefont {Goldstein},\ and\ \citenamefont {Love}}]{Simon:2025pbo}%
  \BibitemOpen
  \bibfield  {author} {\bibinfo {author} {\bibfnamefont {W.~A.}\ \bibnamefont {Simon}}, \bibinfo {author} {\bibfnamefont {C.~M.}\ \bibnamefont {Gustin}}, \bibinfo {author} {\bibfnamefont {K.}~\bibnamefont {Serafin}}, \bibinfo {author} {\bibfnamefont {A.}~\bibnamefont {Ralli}}, \bibinfo {author} {\bibfnamefont {G.~R.}\ \bibnamefont {Goldstein}}, \ and\ \bibinfo {author} {\bibfnamefont {P.~J.}\ \bibnamefont {Love}},\ }\href@noop {} {\  (\bibinfo {year} {2025})},\ \Eprint {http://arxiv.org/abs/2503.11641} {arXiv:2503.11641 [quant-ph]} \BibitemShut {NoStop}%
\bibitem [{\citenamefont {Decker}\ \emph {et~al.}(2025)\citenamefont {Decker} \emph {et~al.}}]{Decker:2025jts}%
  \BibitemOpen
  \bibfield  {author} {\bibinfo {author} {\bibfnamefont {E.}~\bibnamefont {Decker}} \emph {et~al.},\ }\href@noop {} {\  (\bibinfo {year} {2025})},\ \Eprint {http://arxiv.org/abs/2504.07214} {arXiv:2504.07214 [quant-ph]} \BibitemShut {NoStop}%
\bibitem [{\citenamefont {Davoudi}\ \emph {et~al.}(2025{\natexlab{b}})\citenamefont {Davoudi}, \citenamefont {Hsieh},\ and\ \citenamefont {Kadam}}]{Davoudi:2025rdv}%
  \BibitemOpen
  \bibfield  {author} {\bibinfo {author} {\bibfnamefont {Z.}~\bibnamefont {Davoudi}}, \bibinfo {author} {\bibfnamefont {C.-C.}\ \bibnamefont {Hsieh}}, \ and\ \bibinfo {author} {\bibfnamefont {S.~V.}\ \bibnamefont {Kadam}},\ }\href@noop {} {\  (\bibinfo {year} {2025}{\natexlab{b}})},\ \Eprint {http://arxiv.org/abs/2505.20408} {arXiv:2505.20408 [quant-ph]} \BibitemShut {NoStop}%
\bibitem [{\citenamefont {Pichler}\ \emph {et~al.}(2016)\citenamefont {Pichler}, \citenamefont {Dalmonte}, \citenamefont {Rico}, \citenamefont {Zoller},\ and\ \citenamefont {Montangero}}]{Pichler:2015yqa}%
  \BibitemOpen
  \bibfield  {author} {\bibinfo {author} {\bibfnamefont {T.}~\bibnamefont {Pichler}}, \bibinfo {author} {\bibfnamefont {M.}~\bibnamefont {Dalmonte}}, \bibinfo {author} {\bibfnamefont {E.}~\bibnamefont {Rico}}, \bibinfo {author} {\bibfnamefont {P.}~\bibnamefont {Zoller}}, \ and\ \bibinfo {author} {\bibfnamefont {S.}~\bibnamefont {Montangero}},\ }\href {\doibase 10.1103/PhysRevX.6.011023} {\bibfield  {journal} {\bibinfo  {journal} {Phys. Rev. X}\ }\textbf {\bibinfo {volume} {6}},\ \bibinfo {pages} {011023} (\bibinfo {year} {2016})},\ \Eprint {http://arxiv.org/abs/1505.04440} {arXiv:1505.04440 [cond-mat.quant-gas]} \BibitemShut {NoStop}%
\bibitem [{\citenamefont {Ba\~nuls}\ \emph {et~al.}(2017)\citenamefont {Ba\~nuls}, \citenamefont {Cichy}, \citenamefont {Cirac}, \citenamefont {Jansen},\ and\ \citenamefont {K\"uhn}}]{Banuls:2017ena}%
  \BibitemOpen
  \bibfield  {author} {\bibinfo {author} {\bibfnamefont {M.~C.}\ \bibnamefont {Ba\~nuls}}, \bibinfo {author} {\bibfnamefont {K.}~\bibnamefont {Cichy}}, \bibinfo {author} {\bibfnamefont {J.~I.}\ \bibnamefont {Cirac}}, \bibinfo {author} {\bibfnamefont {K.}~\bibnamefont {Jansen}}, \ and\ \bibinfo {author} {\bibfnamefont {S.}~\bibnamefont {K\"uhn}},\ }\href {\doibase 10.1103/PhysRevX.7.041046} {\bibfield  {journal} {\bibinfo  {journal} {Phys. Rev. X}\ }\textbf {\bibinfo {volume} {7}},\ \bibinfo {pages} {041046} (\bibinfo {year} {2017})},\ \Eprint {http://arxiv.org/abs/1707.06434} {arXiv:1707.06434 [hep-lat]} \BibitemShut {NoStop}%
\bibitem [{\citenamefont {Ba\~nuls}\ \emph {et~al.}(2018)\citenamefont {Ba\~nuls}, \citenamefont {Cichy}, \citenamefont {Cirac}, \citenamefont {Jansen},\ and\ \citenamefont {K\"uhn}}]{Banuls:2018jag}%
  \BibitemOpen
  \bibfield  {author} {\bibinfo {author} {\bibfnamefont {M.~C.}\ \bibnamefont {Ba\~nuls}}, \bibinfo {author} {\bibfnamefont {K.}~\bibnamefont {Cichy}}, \bibinfo {author} {\bibfnamefont {J.~I.}\ \bibnamefont {Cirac}}, \bibinfo {author} {\bibfnamefont {K.}~\bibnamefont {Jansen}}, \ and\ \bibinfo {author} {\bibfnamefont {S.}~\bibnamefont {K\"uhn}},\ }\href {\doibase 10.22323/1.334.0022} {\bibfield  {journal} {\bibinfo  {journal} {PoS}\ }\textbf {\bibinfo {volume} {LATTICE2018}},\ \bibinfo {pages} {022} (\bibinfo {year} {2018})},\ \Eprint {http://arxiv.org/abs/1810.12838} {arXiv:1810.12838 [hep-lat]} \BibitemShut {NoStop}%
\bibitem [{\citenamefont {Ba\~nuls}(2023)}]{Banuls:2022vxp}%
  \BibitemOpen
  \bibfield  {author} {\bibinfo {author} {\bibfnamefont {M.~C.}\ \bibnamefont {Ba\~nuls}},\ }\href {\doibase 10.1146/annurev-conmatphys-040721-022705} {\bibfield  {journal} {\bibinfo  {journal} {Ann. Rev. Condensed Matter Phys.}\ }\textbf {\bibinfo {volume} {14}},\ \bibinfo {pages} {173} (\bibinfo {year} {2023})},\ \Eprint {http://arxiv.org/abs/2205.10345} {arXiv:2205.10345 [quant-ph]} \BibitemShut {NoStop}%
\bibitem [{\citenamefont {Mathew}\ \emph {et~al.}(2025)\citenamefont {Mathew}, \citenamefont {Gupta}, \citenamefont {Kadam}, \citenamefont {Bapat}, \citenamefont {Stryker}, \citenamefont {Davoudi},\ and\ \citenamefont {Raychowdhury}}]{Mathew:2025fim}%
  \BibitemOpen
  \bibfield  {author} {\bibinfo {author} {\bibfnamefont {E.}~\bibnamefont {Mathew}}, \bibinfo {author} {\bibfnamefont {N.}~\bibnamefont {Gupta}}, \bibinfo {author} {\bibfnamefont {S.~V.}\ \bibnamefont {Kadam}}, \bibinfo {author} {\bibfnamefont {A.}~\bibnamefont {Bapat}}, \bibinfo {author} {\bibfnamefont {J.}~\bibnamefont {Stryker}}, \bibinfo {author} {\bibfnamefont {Z.}~\bibnamefont {Davoudi}}, \ and\ \bibinfo {author} {\bibfnamefont {I.}~\bibnamefont {Raychowdhury}},\ }\href {\doibase 10.22323/1.466.0472} {\bibfield  {journal} {\bibinfo  {journal} {PoS}\ }\textbf {\bibinfo {volume} {LATTICE2024}},\ \bibinfo {pages} {472} (\bibinfo {year} {2025})},\ \Eprint {http://arxiv.org/abs/2501.18301} {arXiv:2501.18301 [hep-lat]} \BibitemShut {NoStop}%
\bibitem [{\citenamefont {Belyansky}\ \emph {et~al.}(2024)\citenamefont {Belyansky}, \citenamefont {Whitsitt}, \citenamefont {Mueller}, \citenamefont {Fahimniya}, \citenamefont {Bennewitz}, \citenamefont {Davoudi},\ and\ \citenamefont {Gorshkov}}]{Belyansky:2023rgh}%
  \BibitemOpen
  \bibfield  {author} {\bibinfo {author} {\bibfnamefont {R.}~\bibnamefont {Belyansky}}, \bibinfo {author} {\bibfnamefont {S.}~\bibnamefont {Whitsitt}}, \bibinfo {author} {\bibfnamefont {N.}~\bibnamefont {Mueller}}, \bibinfo {author} {\bibfnamefont {A.}~\bibnamefont {Fahimniya}}, \bibinfo {author} {\bibfnamefont {E.~R.}\ \bibnamefont {Bennewitz}}, \bibinfo {author} {\bibfnamefont {Z.}~\bibnamefont {Davoudi}}, \ and\ \bibinfo {author} {\bibfnamefont {A.~V.}\ \bibnamefont {Gorshkov}},\ }\href {\doibase 10.1103/PhysRevLett.132.091903} {\bibfield  {journal} {\bibinfo  {journal} {Phys. Rev. Lett.}\ }\textbf {\bibinfo {volume} {132}},\ \bibinfo {pages} {091903} (\bibinfo {year} {2024})},\ \Eprint {http://arxiv.org/abs/2307.02522} {arXiv:2307.02522 [quant-ph]} \BibitemShut {NoStop}%
\bibitem [{\citenamefont {Clemente}\ \emph {et~al.}(2022)\citenamefont {Clemente}, \citenamefont {Crippa},\ and\ \citenamefont {Jansen}}]{Clemente:2022cka}%
  \BibitemOpen
  \bibfield  {author} {\bibinfo {author} {\bibfnamefont {G.}~\bibnamefont {Clemente}}, \bibinfo {author} {\bibfnamefont {A.}~\bibnamefont {Crippa}}, \ and\ \bibinfo {author} {\bibfnamefont {K.}~\bibnamefont {Jansen}},\ }\href {\doibase 10.1103/PhysRevD.106.114511} {\bibfield  {journal} {\bibinfo  {journal} {Phys. Rev. D}\ }\textbf {\bibinfo {volume} {106}},\ \bibinfo {pages} {114511} (\bibinfo {year} {2022})},\ \Eprint {http://arxiv.org/abs/2206.12454} {arXiv:2206.12454 [hep-lat]} \BibitemShut {NoStop}%
\bibitem [{\citenamefont {Crippa}\ \emph {et~al.}(2024)\citenamefont {Crippa}, \citenamefont {Romiti}, \citenamefont {Funcke}, \citenamefont {Jansen}, \citenamefont {K\"uhn}, \citenamefont {Stornati},\ and\ \citenamefont {Urbach}}]{Crippa:2024cqr}%
  \BibitemOpen
  \bibfield  {author} {\bibinfo {author} {\bibfnamefont {A.}~\bibnamefont {Crippa}}, \bibinfo {author} {\bibfnamefont {S.}~\bibnamefont {Romiti}}, \bibinfo {author} {\bibfnamefont {L.}~\bibnamefont {Funcke}}, \bibinfo {author} {\bibfnamefont {K.}~\bibnamefont {Jansen}}, \bibinfo {author} {\bibfnamefont {S.}~\bibnamefont {K\"uhn}}, \bibinfo {author} {\bibfnamefont {P.}~\bibnamefont {Stornati}}, \ and\ \bibinfo {author} {\bibfnamefont {C.}~\bibnamefont {Urbach}},\ }\href@noop {} {\  (\bibinfo {year} {2024})},\ \Eprint {http://arxiv.org/abs/2404.17545} {arXiv:2404.17545 [hep-lat]} \BibitemShut {NoStop}%
\bibitem [{\citenamefont {Carena}\ \emph {et~al.}(2021)\citenamefont {Carena}, \citenamefont {Lamm}, \citenamefont {Li},\ and\ \citenamefont {Liu}}]{Carena:2021ltu}%
  \BibitemOpen
  \bibfield  {author} {\bibinfo {author} {\bibfnamefont {M.}~\bibnamefont {Carena}}, \bibinfo {author} {\bibfnamefont {H.}~\bibnamefont {Lamm}}, \bibinfo {author} {\bibfnamefont {Y.-Y.}\ \bibnamefont {Li}}, \ and\ \bibinfo {author} {\bibfnamefont {W.}~\bibnamefont {Liu}},\ }\href {\doibase 10.1103/PhysRevD.104.094519} {\bibfield  {journal} {\bibinfo  {journal} {Phys. Rev. D}\ }\textbf {\bibinfo {volume} {104}},\ \bibinfo {pages} {094519} (\bibinfo {year} {2021})},\ \Eprint {http://arxiv.org/abs/2107.01166} {arXiv:2107.01166 [hep-lat]} \BibitemShut {NoStop}%
\bibitem [{\citenamefont {Funcke}\ \emph {et~al.}(2023)\citenamefont {Funcke}, \citenamefont {Gro\ss{}}, \citenamefont {Jansen}, \citenamefont {K\"uhn}, \citenamefont {Romiti},\ and\ \citenamefont {Urbach}}]{Funcke:2022opx}%
  \BibitemOpen
  \bibfield  {author} {\bibinfo {author} {\bibfnamefont {L.}~\bibnamefont {Funcke}}, \bibinfo {author} {\bibfnamefont {C.~F.}\ \bibnamefont {Gro\ss{}}}, \bibinfo {author} {\bibfnamefont {K.}~\bibnamefont {Jansen}}, \bibinfo {author} {\bibfnamefont {S.}~\bibnamefont {K\"uhn}}, \bibinfo {author} {\bibfnamefont {S.}~\bibnamefont {Romiti}}, \ and\ \bibinfo {author} {\bibfnamefont {C.}~\bibnamefont {Urbach}},\ }\href {\doibase 10.22323/1.430.0292} {\bibfield  {journal} {\bibinfo  {journal} {PoS}\ }\textbf {\bibinfo {volume} {LATTICE2022}},\ \bibinfo {pages} {292} (\bibinfo {year} {2023})},\ \Eprint {http://arxiv.org/abs/2212.09627} {arXiv:2212.09627 [hep-lat]} \BibitemShut {NoStop}%
\bibitem [{\citenamefont {Gro\ss{}}\ \emph {et~al.}(2025)\citenamefont {Gro\ss{}}, \citenamefont {Romiti}, \citenamefont {Funcke}, \citenamefont {Jansen}, \citenamefont {Kan}, \citenamefont {K\"uhn},\ and\ \citenamefont {Urbach}}]{Gross:2025qae}%
  \BibitemOpen
  \bibfield  {author} {\bibinfo {author} {\bibfnamefont {C.~F.}\ \bibnamefont {Gro\ss{}}}, \bibinfo {author} {\bibfnamefont {S.}~\bibnamefont {Romiti}}, \bibinfo {author} {\bibfnamefont {L.}~\bibnamefont {Funcke}}, \bibinfo {author} {\bibfnamefont {K.}~\bibnamefont {Jansen}}, \bibinfo {author} {\bibfnamefont {A.}~\bibnamefont {Kan}}, \bibinfo {author} {\bibfnamefont {S.}~\bibnamefont {K\"uhn}}, \ and\ \bibinfo {author} {\bibfnamefont {C.}~\bibnamefont {Urbach}},\ }\href@noop {} {\  (\bibinfo {year} {2025})},\ \Eprint {http://arxiv.org/abs/2503.11480} {arXiv:2503.11480 [hep-lat]} \BibitemShut {NoStop}%
\bibitem [{\citenamefont {Carena}\ \emph {et~al.}(2022{\natexlab{b}})\citenamefont {Carena}, \citenamefont {Gustafson}, \citenamefont {Lamm}, \citenamefont {Li},\ and\ \citenamefont {Liu}}]{Carena:2022hpz}%
  \BibitemOpen
  \bibfield  {author} {\bibinfo {author} {\bibfnamefont {M.}~\bibnamefont {Carena}}, \bibinfo {author} {\bibfnamefont {E.~J.}\ \bibnamefont {Gustafson}}, \bibinfo {author} {\bibfnamefont {H.}~\bibnamefont {Lamm}}, \bibinfo {author} {\bibfnamefont {Y.-Y.}\ \bibnamefont {Li}}, \ and\ \bibinfo {author} {\bibfnamefont {W.}~\bibnamefont {Liu}},\ }\href {\doibase 10.1103/PhysRevD.106.114504} {\bibfield  {journal} {\bibinfo  {journal} {Phys. Rev. D}\ }\textbf {\bibinfo {volume} {106}},\ \bibinfo {pages} {114504} (\bibinfo {year} {2022}{\natexlab{b}})},\ \Eprint {http://arxiv.org/abs/2208.10417} {arXiv:2208.10417 [hep-lat]} \BibitemShut {NoStop}%
\bibitem [{\citenamefont {Childs}\ and\ \citenamefont {Wiebe}(2012)}]{Childs:2012gwh}%
  \BibitemOpen
  \bibfield  {author} {\bibinfo {author} {\bibfnamefont {A.~M.}\ \bibnamefont {Childs}}\ and\ \bibinfo {author} {\bibfnamefont {N.}~\bibnamefont {Wiebe}},\ }\href {\doibase 10.26421/QIC12.11-12-1} {\bibfield  {journal} {\bibinfo  {journal} {Quant. Inf. Comput.}\ }\textbf {\bibinfo {volume} {12}},\ \bibinfo {pages} {0901} (\bibinfo {year} {2012})},\ \Eprint {http://arxiv.org/abs/1202.5822} {arXiv:1202.5822 [quant-ph]} \BibitemShut {NoStop}%
\bibitem [{\citenamefont {Low}\ and\ \citenamefont {Chuang}(2017)}]{Low:2016sck}%
  \BibitemOpen
  \bibfield  {author} {\bibinfo {author} {\bibfnamefont {G.~H.}\ \bibnamefont {Low}}\ and\ \bibinfo {author} {\bibfnamefont {I.~L.}\ \bibnamefont {Chuang}},\ }\href {\doibase 10.1103/PhysRevLett.118.010501} {\bibfield  {journal} {\bibinfo  {journal} {Phys. Rev. Lett.}\ }\textbf {\bibinfo {volume} {118}},\ \bibinfo {pages} {010501} (\bibinfo {year} {2017})},\ \Eprint {http://arxiv.org/abs/1606.02685} {arXiv:1606.02685 [quant-ph]} \BibitemShut {NoStop}%
\bibitem [{\citenamefont {Low}\ and\ \citenamefont {Chuang}(2019)}]{Low:2016znh}%
  \BibitemOpen
  \bibfield  {author} {\bibinfo {author} {\bibfnamefont {G.~H.}\ \bibnamefont {Low}}\ and\ \bibinfo {author} {\bibfnamefont {I.~L.}\ \bibnamefont {Chuang}},\ }\href {\doibase 10.22331/q-2019-07-12-163} {\bibfield  {journal} {\bibinfo  {journal} {Quantum}\ }\textbf {\bibinfo {volume} {3}},\ \bibinfo {pages} {163} (\bibinfo {year} {2019})},\ \Eprint {http://arxiv.org/abs/1610.06546} {arXiv:1610.06546 [quant-ph]} \BibitemShut {NoStop}%
\bibitem [{\citenamefont {Campbell}(2019)}]{PhysRevLett.123.070503}%
  \BibitemOpen
  \bibfield  {author} {\bibinfo {author} {\bibfnamefont {E.}~\bibnamefont {Campbell}},\ }\href {\doibase 10.1103/PhysRevLett.123.070503} {\bibfield  {journal} {\bibinfo  {journal} {Phys. Rev. Lett.}\ }\textbf {\bibinfo {volume} {123}},\ \bibinfo {pages} {070503} (\bibinfo {year} {2019})}\BibitemShut {NoStop}%
\bibitem [{\citenamefont {Li}\ and\ \citenamefont {Benjamin}(2017)}]{Li:2016vmf}%
  \BibitemOpen
  \bibfield  {author} {\bibinfo {author} {\bibfnamefont {Y.}~\bibnamefont {Li}}\ and\ \bibinfo {author} {\bibfnamefont {S.~C.}\ \bibnamefont {Benjamin}},\ }\href {\doibase 10.1103/physrevx.7.021050} {\bibfield  {journal} {\bibinfo  {journal} {Phys. Rev. X}\ }\textbf {\bibinfo {volume} {7}},\ \bibinfo {pages} {021050} (\bibinfo {year} {2017})},\ \Eprint {http://arxiv.org/abs/1611.09301} {arXiv:1611.09301 [quant-ph]} \BibitemShut {NoStop}%
\bibitem [{\citenamefont {Haah}\ \emph {et~al.}(2021)\citenamefont {Haah}, \citenamefont {Hastings}, \citenamefont {Kothari},\ and\ \citenamefont {Low}}]{Haah:2018ekc}%
  \BibitemOpen
  \bibfield  {author} {\bibinfo {author} {\bibfnamefont {J.}~\bibnamefont {Haah}}, \bibinfo {author} {\bibfnamefont {M.~B.}\ \bibnamefont {Hastings}}, \bibinfo {author} {\bibfnamefont {R.}~\bibnamefont {Kothari}}, \ and\ \bibinfo {author} {\bibfnamefont {G.~H.}\ \bibnamefont {Low}},\ }\href {\doibase 10.1137/18M1231511} {\bibfield  {journal} {\bibinfo  {journal} {SIAM J. Comput.}\ }\textbf {\bibinfo {volume} {52}},\ \bibinfo {pages} {FOCS18} (\bibinfo {year} {2021})},\ \Eprint {http://arxiv.org/abs/1801.03922} {arXiv:1801.03922 [quant-ph]} \BibitemShut {NoStop}%
\bibitem [{\citenamefont {Yuan}\ \emph {et~al.}(2019)\citenamefont {Yuan}, \citenamefont {Endo}, \citenamefont {Zhao}, \citenamefont {Li},\ and\ \citenamefont {Benjamin}}]{Yuan:2018jdl}%
  \BibitemOpen
  \bibfield  {author} {\bibinfo {author} {\bibfnamefont {X.}~\bibnamefont {Yuan}}, \bibinfo {author} {\bibfnamefont {S.}~\bibnamefont {Endo}}, \bibinfo {author} {\bibfnamefont {Q.}~\bibnamefont {Zhao}}, \bibinfo {author} {\bibfnamefont {Y.}~\bibnamefont {Li}}, \ and\ \bibinfo {author} {\bibfnamefont {S.}~\bibnamefont {Benjamin}},\ }\href {\doibase 10.22331/q-2019-10-07-191} {\bibfield  {journal} {\bibinfo  {journal} {Quantum}\ }\textbf {\bibinfo {volume} {3}},\ \bibinfo {pages} {191} (\bibinfo {year} {2019})},\ \Eprint {http://arxiv.org/abs/1812.08767} {arXiv:1812.08767 [quant-ph]} \BibitemShut {NoStop}%
\bibitem [{\citenamefont {Motlagh}\ and\ \citenamefont {Wiebe}(2024)}]{Motlagh:2023oqc}%
  \BibitemOpen
  \bibfield  {author} {\bibinfo {author} {\bibfnamefont {D.}~\bibnamefont {Motlagh}}\ and\ \bibinfo {author} {\bibfnamefont {N.}~\bibnamefont {Wiebe}},\ }\href {\doibase 10.1103/PRXQuantum.5.020368} {\bibfield  {journal} {\bibinfo  {journal} {PRX Quantum}\ }\textbf {\bibinfo {volume} {5}},\ \bibinfo {pages} {020368} (\bibinfo {year} {2024})},\ \Eprint {http://arxiv.org/abs/2308.01501} {arXiv:2308.01501 [quant-ph]} \BibitemShut {NoStop}%
\bibitem [{\citenamefont {Zeng}\ \emph {et~al.}(2025)\citenamefont {Zeng}, \citenamefont {Sun}, \citenamefont {Jiang},\ and\ \citenamefont {Zhao}}]{Zeng:2022pim}%
  \BibitemOpen
  \bibfield  {author} {\bibinfo {author} {\bibfnamefont {P.}~\bibnamefont {Zeng}}, \bibinfo {author} {\bibfnamefont {J.}~\bibnamefont {Sun}}, \bibinfo {author} {\bibfnamefont {L.}~\bibnamefont {Jiang}}, \ and\ \bibinfo {author} {\bibfnamefont {Q.}~\bibnamefont {Zhao}},\ }\href {\doibase 10.1103/PRXQuantum.6.010359} {\bibfield  {journal} {\bibinfo  {journal} {PRX Quantum}\ }\textbf {\bibinfo {volume} {6}},\ \bibinfo {pages} {010359} (\bibinfo {year} {2025})},\ \Eprint {http://arxiv.org/abs/2212.04566} {arXiv:2212.04566 [quant-ph]} \BibitemShut {NoStop}%
\bibitem [{\citenamefont {Watson}\ and\ \citenamefont {Watkins}(2024)}]{Watson:2024yqs}%
  \BibitemOpen
  \bibfield  {author} {\bibinfo {author} {\bibfnamefont {J.~D.}\ \bibnamefont {Watson}}\ and\ \bibinfo {author} {\bibfnamefont {J.}~\bibnamefont {Watkins}},\ }\href@noop {} {\  (\bibinfo {year} {2024})},\ \Eprint {http://arxiv.org/abs/2408.14385} {arXiv:2408.14385 [quant-ph]} \BibitemShut {NoStop}%
\bibitem [{\citenamefont {Watson}(2024)}]{Watson:2024dvw}%
  \BibitemOpen
  \bibfield  {author} {\bibinfo {author} {\bibfnamefont {J.~D.}\ \bibnamefont {Watson}},\ }\href@noop {} {\  (\bibinfo {year} {2024})},\ \Eprint {http://arxiv.org/abs/2411.04240} {arXiv:2411.04240 [quant-ph]} \BibitemShut {NoStop}%
\bibitem [{\citenamefont {Chakraborty}\ \emph {et~al.}(2025)\citenamefont {Chakraborty}, \citenamefont {Hazra}, \citenamefont {Li}, \citenamefont {Shao}, \citenamefont {Wang},\ and\ \citenamefont {Zhang}}]{Chakraborty:2025sry}%
  \BibitemOpen
  \bibfield  {author} {\bibinfo {author} {\bibfnamefont {S.}~\bibnamefont {Chakraborty}}, \bibinfo {author} {\bibfnamefont {S.}~\bibnamefont {Hazra}}, \bibinfo {author} {\bibfnamefont {T.}~\bibnamefont {Li}}, \bibinfo {author} {\bibfnamefont {C.}~\bibnamefont {Shao}}, \bibinfo {author} {\bibfnamefont {X.}~\bibnamefont {Wang}}, \ and\ \bibinfo {author} {\bibfnamefont {Y.}~\bibnamefont {Zhang}},\ }\href@noop {} {\  (\bibinfo {year} {2025})},\ \Eprint {http://arxiv.org/abs/2504.02385} {arXiv:2504.02385 [quant-ph]} \BibitemShut {NoStop}%
\bibitem [{\citenamefont {Gily\'en}\ \emph {et~al.}(2018)\citenamefont {Gily\'en}, \citenamefont {Su}, \citenamefont {Low},\ and\ \citenamefont {Wiebe}}]{Gilyen:2018khw}%
  \BibitemOpen
  \bibfield  {author} {\bibinfo {author} {\bibfnamefont {A.}~\bibnamefont {Gily\'en}}, \bibinfo {author} {\bibfnamefont {Y.}~\bibnamefont {Su}}, \bibinfo {author} {\bibfnamefont {G.~H.}\ \bibnamefont {Low}}, \ and\ \bibinfo {author} {\bibfnamefont {N.}~\bibnamefont {Wiebe}},\ }in\ \href {\doibase 10.1145/3313276.3316366} {\emph {\bibinfo {booktitle} {{51st Annual ACM SIGACT Symposium on Theory of Computing}}}}\ (\bibinfo {year} {2018})\ \Eprint {http://arxiv.org/abs/1806.01838} {arXiv:1806.01838 [quant-ph]} \BibitemShut {NoStop}%
\bibitem [{\citenamefont {Creutz}(1983)}]{Creutz:1983njd}%
  \BibitemOpen
  \bibfield  {author} {\bibinfo {author} {\bibfnamefont {M.}~\bibnamefont {Creutz}},\ }\href {\doibase 10.1017/9781009290395} {\emph {\bibinfo {title} {{Quarks, Gluons and Lattices}}}}\ (\bibinfo  {publisher} {Oxford University Press},\ \bibinfo {year} {1983})\BibitemShut {NoStop}%
\bibitem [{\citenamefont {Hoshina}\ \emph {et~al.}(2020)\citenamefont {Hoshina}, \citenamefont {Fujii},\ and\ \citenamefont {Kikukawa}}]{Hoshina:2020gdy}%
  \BibitemOpen
  \bibfield  {author} {\bibinfo {author} {\bibfnamefont {H.}~\bibnamefont {Hoshina}}, \bibinfo {author} {\bibfnamefont {H.}~\bibnamefont {Fujii}}, \ and\ \bibinfo {author} {\bibfnamefont {Y.}~\bibnamefont {Kikukawa}},\ }\href {\doibase 10.22323/1.363.0190} {\bibfield  {journal} {\bibinfo  {journal} {PoS}\ }\textbf {\bibinfo {volume} {LATTICE2019}},\ \bibinfo {pages} {190} (\bibinfo {year} {2020})}\BibitemShut {NoStop}%
\bibitem [{\citenamefont {Kanwar}\ and\ \citenamefont {Wagman}(2021)}]{Kanwar:2021tkd}%
  \BibitemOpen
  \bibfield  {author} {\bibinfo {author} {\bibfnamefont {G.}~\bibnamefont {Kanwar}}\ and\ \bibinfo {author} {\bibfnamefont {M.~L.}\ \bibnamefont {Wagman}},\ }\href {\doibase 10.1103/PhysRevD.104.014513} {\bibfield  {journal} {\bibinfo  {journal} {Phys. Rev. D}\ }\textbf {\bibinfo {volume} {104}},\ \bibinfo {pages} {014513} (\bibinfo {year} {2021})},\ \Eprint {http://arxiv.org/abs/2103.02602} {arXiv:2103.02602 [hep-lat]} \BibitemShut {NoStop}%
\bibitem [{\citenamefont {Menotti}\ and\ \citenamefont {Onofri}(1981)}]{Menotti:1981ry}%
  \BibitemOpen
  \bibfield  {author} {\bibinfo {author} {\bibfnamefont {P.}~\bibnamefont {Menotti}}\ and\ \bibinfo {author} {\bibfnamefont {E.}~\bibnamefont {Onofri}},\ }\href {\doibase 10.1016/0550-3213(81)90560-5} {\bibfield  {journal} {\bibinfo  {journal} {Nucl. Phys. B}\ }\textbf {\bibinfo {volume} {190}},\ \bibinfo {pages} {288} (\bibinfo {year} {1981})}\BibitemShut {NoStop}%
\bibitem [{\citenamefont {Childs}\ and\ \citenamefont {Su}(2019)}]{PhysRevLett.123.050503}%
  \BibitemOpen
  \bibfield  {author} {\bibinfo {author} {\bibfnamefont {A.~M.}\ \bibnamefont {Childs}}\ and\ \bibinfo {author} {\bibfnamefont {Y.}~\bibnamefont {Su}},\ }\href {\doibase 10.1103/PhysRevLett.123.050503} {\bibfield  {journal} {\bibinfo  {journal} {Phys. Rev. Lett.}\ }\textbf {\bibinfo {volume} {123}},\ \bibinfo {pages} {050503} (\bibinfo {year} {2019})}\BibitemShut {NoStop}%
\bibitem [{\citenamefont {Heyl}\ \emph {et~al.}(2019)\citenamefont {Heyl}, \citenamefont {Hauke},\ and\ \citenamefont {Zoller}}]{doi:10.1126/sciadv.aau8342}%
  \BibitemOpen
  \bibfield  {author} {\bibinfo {author} {\bibfnamefont {M.}~\bibnamefont {Heyl}}, \bibinfo {author} {\bibfnamefont {P.}~\bibnamefont {Hauke}}, \ and\ \bibinfo {author} {\bibfnamefont {P.}~\bibnamefont {Zoller}},\ }\href {\doibase 10.1126/sciadv.aau8342} {\bibfield  {journal} {\bibinfo  {journal} {Science Advances}\ }\textbf {\bibinfo {volume} {5}},\ \bibinfo {pages} {eaau8342} (\bibinfo {year} {2019})},\ \Eprint {http://arxiv.org/abs/https://www.science.org/doi/pdf/10.1126/sciadv.aau8342} {https://www.science.org/doi/pdf/10.1126/sciadv.aau8342} \BibitemShut {NoStop}%
\bibitem [{\citenamefont {Childs}\ \emph {et~al.}(2021)\citenamefont {Childs}, \citenamefont {Su}, \citenamefont {Tran}, \citenamefont {Wiebe},\ and\ \citenamefont {Zhu}}]{Childs:2019hts}%
  \BibitemOpen
  \bibfield  {author} {\bibinfo {author} {\bibfnamefont {A.~M.}\ \bibnamefont {Childs}}, \bibinfo {author} {\bibfnamefont {Y.}~\bibnamefont {Su}}, \bibinfo {author} {\bibfnamefont {M.~C.}\ \bibnamefont {Tran}}, \bibinfo {author} {\bibfnamefont {N.}~\bibnamefont {Wiebe}}, \ and\ \bibinfo {author} {\bibfnamefont {S.}~\bibnamefont {Zhu}},\ }\href {\doibase 10.1103/PhysRevX.11.011020} {\bibfield  {journal} {\bibinfo  {journal} {Phys. Rev. X}\ }\textbf {\bibinfo {volume} {11}},\ \bibinfo {pages} {011020} (\bibinfo {year} {2021})},\ \Eprint {http://arxiv.org/abs/1912.08854} {arXiv:1912.08854 [quant-ph]} \BibitemShut {NoStop}%
\bibitem [{\citenamefont {Suchsland}\ \emph {et~al.}(2025)\citenamefont {Suchsland}, \citenamefont {Moessner},\ and\ \citenamefont {Claeys}}]{Suchsland:2023cmb}%
  \BibitemOpen
  \bibfield  {author} {\bibinfo {author} {\bibfnamefont {P.}~\bibnamefont {Suchsland}}, \bibinfo {author} {\bibfnamefont {R.}~\bibnamefont {Moessner}}, \ and\ \bibinfo {author} {\bibfnamefont {P.~W.}\ \bibnamefont {Claeys}},\ }\href {\doibase 10.1103/PhysRevB.111.014309} {\bibfield  {journal} {\bibinfo  {journal} {Phys. Rev. B}\ }\textbf {\bibinfo {volume} {111}},\ \bibinfo {pages} {014309} (\bibinfo {year} {2025})},\ \Eprint {http://arxiv.org/abs/2308.03851} {arXiv:2308.03851 [quant-ph]} \BibitemShut {NoStop}%
\bibitem [{\citenamefont {Berry}\ \emph {et~al.}(2015)\citenamefont {Berry}, \citenamefont {Childs},\ and\ \citenamefont {Kothari}}]{Berry:2015hst}%
  \BibitemOpen
  \bibfield  {author} {\bibinfo {author} {\bibfnamefont {D.~W.}\ \bibnamefont {Berry}}, \bibinfo {author} {\bibfnamefont {A.~M.}\ \bibnamefont {Childs}}, \ and\ \bibinfo {author} {\bibfnamefont {R.}~\bibnamefont {Kothari}}\ }(\bibinfo {year} {2015})\ \Eprint {http://arxiv.org/abs/1501.01715} {arXiv:1501.01715 [quant-ph]} \BibitemShut {NoStop}%
\bibitem [{\citenamefont {Umeda}\ \emph {et~al.}(2005)\citenamefont {Umeda}, \citenamefont {Nomura},\ and\ \citenamefont {Matsufuru}}]{Umeda:2002vr}%
  \BibitemOpen
  \bibfield  {author} {\bibinfo {author} {\bibfnamefont {T.}~\bibnamefont {Umeda}}, \bibinfo {author} {\bibfnamefont {K.}~\bibnamefont {Nomura}}, \ and\ \bibinfo {author} {\bibfnamefont {H.}~\bibnamefont {Matsufuru}},\ }\href {\doibase 10.1140/epjcd/s2004-01-002-1} {\bibfield  {journal} {\bibinfo  {journal} {Eur. Phys. J. C}\ }\textbf {\bibinfo {volume} {39S1}},\ \bibinfo {pages} {9} (\bibinfo {year} {2005})},\ \Eprint {http://arxiv.org/abs/hep-lat/0211003} {arXiv:hep-lat/0211003} \BibitemShut {NoStop}%
\bibitem [{\citenamefont {Nomura}\ \emph {et~al.}(2004)\citenamefont {Nomura}, \citenamefont {Umeda},\ and\ \citenamefont {Matsufuru}}]{Nomura:2003in}%
  \BibitemOpen
  \bibfield  {author} {\bibinfo {author} {\bibfnamefont {K.}~\bibnamefont {Nomura}}, \bibinfo {author} {\bibfnamefont {T.}~\bibnamefont {Umeda}}, \ and\ \bibinfo {author} {\bibfnamefont {H.}~\bibnamefont {Matsufuru}},\ }\href {\doibase 10.1016/S0920-5632(03)02591-X} {\bibfield  {journal} {\bibinfo  {journal} {Nucl. Phys. B Proc. Suppl.}\ }\textbf {\bibinfo {volume} {129}},\ \bibinfo {pages} {390} (\bibinfo {year} {2004})},\ \Eprint {http://arxiv.org/abs/hep-lat/0312010} {arXiv:hep-lat/0312010} \BibitemShut {NoStop}%
\bibitem [{\citenamefont {Aarts}\ \emph {et~al.}(2015)\citenamefont {Aarts}, \citenamefont {Allton}, \citenamefont {Amato}, \citenamefont {Giudice}, \citenamefont {Hands},\ and\ \citenamefont {Skullerud}}]{Aarts:2014nba}%
  \BibitemOpen
  \bibfield  {author} {\bibinfo {author} {\bibfnamefont {G.}~\bibnamefont {Aarts}}, \bibinfo {author} {\bibfnamefont {C.}~\bibnamefont {Allton}}, \bibinfo {author} {\bibfnamefont {A.}~\bibnamefont {Amato}}, \bibinfo {author} {\bibfnamefont {P.}~\bibnamefont {Giudice}}, \bibinfo {author} {\bibfnamefont {S.}~\bibnamefont {Hands}}, \ and\ \bibinfo {author} {\bibfnamefont {J.-I.}\ \bibnamefont {Skullerud}},\ }\href {\doibase 10.1007/JHEP02(2015)186} {\bibfield  {journal} {\bibinfo  {journal} {JHEP}\ }\textbf {\bibinfo {volume} {02}},\ \bibinfo {pages} {186} (\bibinfo {year} {2015})},\ \Eprint {http://arxiv.org/abs/1412.6411} {arXiv:1412.6411 [hep-lat]} \BibitemShut {NoStop}%
\bibitem [{\citenamefont {Klco}\ and\ \citenamefont {Savage}(2019)}]{Klco:2018zqz}%
  \BibitemOpen
  \bibfield  {author} {\bibinfo {author} {\bibfnamefont {N.}~\bibnamefont {Klco}}\ and\ \bibinfo {author} {\bibfnamefont {M.~J.}\ \bibnamefont {Savage}},\ }\href {\doibase 10.1103/PhysRevA.99.052335} {\bibfield  {journal} {\bibinfo  {journal} {Phys. Rev. A}\ }\textbf {\bibinfo {volume} {99}},\ \bibinfo {pages} {052335} (\bibinfo {year} {2019})},\ \Eprint {http://arxiv.org/abs/1808.10378} {arXiv:1808.10378 [quant-ph]} \BibitemShut {NoStop}%
\end{thebibliography}%

\appendix

\clearpage
\newpage
\onecolumngrid

\section{Scale setting and the Kogut-Susskind Hamiltonian}
\label{app:scale_set_and_HKS}

In this Appendix, we provide a pedagogical discussion regarding the need for scale setting in lattice gauge theory simulations. 
After expressing the Kogut-Susskind Hamiltonian in terms of dimensionless variables, we comment on the bare parameters of the theory, and show that one needs to include an additional parameter in the fermionic kinetic term.

To show that one must determine the lattice spacing by a scale setting procedure, we will use the following two observations:
\begin{enumerate}
    \item Computers (classical and quantum) can only work with dimensionless variables, \emph{i.e.} all inputs into any numerical simulation are dimensionless.
    \item The value of the lattice spacing $a$ is not independent of the bare parameters in the Hamiltonian, but is related through renormalization group equations.
\end{enumerate}
To isolate the logic of these two steps, we first consider the simple case of the harmonic oscillator and show that no such scale setting procedure is necessary.
While this example may appear trivial, it highlights the key difference between standard quantum mechanical Hamiltonians and lattice field theory Hamiltonians where the lattice is used as a UV regulator.
We then consider a U(1) LGT and show that the lattice spacing must be determined through a scale setting procedure.

The dimensionful harmonic oscillator Hamiltonian (with $\hbar=1$) is given by
\begin{equation}
    H = \omega (b^\dagger b + \frac{1}{2})\,,
\end{equation}
where $b$ and $b^\dagger$ are lowering and raising operators that satisfy $[b, b^\dagger] = 1$.
Aside from the Hamiltonian, the only dimensionful parameter is the frequency $\omega$.
If we define $\omega = \hat \omega/a$ where $a$ is a dimensionful parameter with constant magnitude (with the symbol $a$ being suggestively chosen for the LGT case), and $\hat \omega$ is a dimensionless parameter with magnitude $a \omega$, then the dimensionless Hamiltonian that we actually simulate is 
\begin{equation}
    \hat H \equiv a H = \hat \omega (a^\dagger a + \frac{1}{2})\,.
\end{equation}
Note that the parameter $a$ acts as a physical scale of the system.
The above equations shows that one actually calculates the energies $\hat E_n \equiv a E_n$ of the rescaled Hamiltonian $\hat{H} \ket{n} = \hat E_n \ket{n}$.
Converting to the physical energies of $H$ in this case can be done trivially by dividing by the known parameter $a$
\begin{equation}
    E_n = \hat E_n/a\,.
\end{equation}
Indeed, this step is so obvious and intuitive that it is usually done implicitly by simply setting the magnitude of $a=1$, which implies one can directly interpret the dimensionless energies $\hat E_n$ as the physical energies $E_n$. 

Consider now the more complicated scenario where the parameter $a = a(\hat \omega)$ is an \emph{unknown} function of $\hat \omega$ (or, equivalently, a function of $\omega$).
In order to convert $\widetilde{E}_n$ to the physical energies $E_n$, one must somehow determine the scale $a(\hat \omega)$.
One possibility is to perform a \emph{scale setting} procedure, where one determines $a(\hat \omega)$ by demanding the calculated dimensionless energy $\hat{E}_{n^*} = a(\hat \omega) E_{n^*}$ is equal to the experimentally known value $E_{n^*}^{{\rm phys}}$, 
\begin{equation}
    a(\hat \omega) \equiv \frac{[a(\hat \omega) E_{n^*}]}{E_{n^*}^{\rm phys}}\,.
\end{equation}
As we will now discuss, this is analogous the case for LGTs, with the dimesionless bare parameters playing the role of $\hat \omega$ and the lattice spacing playing the role of the dimensionful scale $a(\hat \omega)$.

To demonstrate that one does not actually choose the lattice spacing in a LGT simulation, it is sufficient to consider a U(1) LGT with a single quark flavor; this analysis generalizes to SU(N) LGTs with additional quark flavors.
Here we only give the level of detail required to demonstrate the need to scale set, further details can be found in, \emph{e.g.}, Ref.~\cite{DiMeglio:2023nsa}.
In $d\geq 2$ dimensions, the Hamiltonian is a sum of 4 terms
\begin{equation}
    H = H_E + H_B + H_M + H_K\,,
\end{equation}
where $H_E$ and $H_B$ are the electric and magnetic Hamiltonians governing the dynamics of the gauge fields, $H_M$ is the fermionic mass Hamiltonian, and $H_K$ is the fermionic kinetic (hopping) Hamiltonian.
Because no symmetry demands it, the coefficients multiplying each of these 4 terms are in general different, which implies we will have 4 bare parameters in the theory; we discuss the need for a 4th bare parameter multiplying the fermionic kinetic term in more detail later in this appendix.
The individual Hamiltonian terms are
\begin{align}
    H_E &= \frac{\widetilde{g}_t(a)^2}{2 a^{d-2}} \sum_{\vec{x}}\sum_{j=1}^d E(\vec{x},j)^2\,,
    \\
    H_B &= -\frac{1}{2 a^{4-d} \widetilde{g}_s(a)^2} \sum_{p} (P_p+ P_p^\dagger)\,,
    \\
    H_M &= m(a) \sum_{\vec{x}} (-1)^{\vec{x}} \psi^\dagger(\vec{x}) \psi(\vec{x})\,,
    \\
    H_K &= \frac{\kappa(a)}{a}\sum_{\vec{x}}\sum_{j=1}^d \eta(\vec{x})\left(\psi^\dagger(\vec{x}) U(\vec{x},j) \psi(\vec{x}+\hat{e}_j) + {\rm h.c.} \right)\,,
\end{align}
where $a$ is the lattice spacing, $\hat E(\vec{x},j)$ and is an electric field operator at site $\vec{x}$ pointing in the direction of the $\hat e_j$ axis, $\hat U(\vec{x},j)$ is a gauge link operator at site $\vec{x}$ pointing in the direction of the $\hat e_j$ axis, $\hat \psi(\vec{x}) (\hat \psi^\dagger (\vec{x}))$ is the fermonic annihilation (creation) operator, and $\eta(\vec{x})$ is the staggered sign factor.
The magnetic Hamiltonian is the sum over all independent plaquettes $p$ on the lattice; a plaquette originating at site $\vec{x}$ oriented in the $j,k$ plane is given by
\begin{equation}
    P_p = U(\vec{x}, j) U(\vec{x}+\hat{e}_j, k) U^\dagger(\vec{x}+\hat{e}_k, j) U^\dagger(\vec{x}, k)\,.
\end{equation}

Analogous to the simple harmonic oscillator case, we want to write a dimensionless Hamiltonian $\hat H$ in terms of dimensionless quantities.
We write $[A] = M^\gamma$ to denote that the operator/parameter $A$ has mass dimension $\gamma$.
By definition, we have $[H] = M$, $[a] = M^{-1}$, and $[U(\vec{x},j)] = M^0$, which implies the fermionic operators and electric field operators are dimensionless, \emph{i.e.}, $[\psi]=M^0, [E(\vec{x},j)]=M^0$, and the bare parameters have dimensions
\begin{equation}
    [\widetilde{g}_t(a)]=M^\frac{3-d}{2}, \quad [\widetilde{g}_s(a)]=M^\frac{3-d}{2}, \quad [m(a)] = M^1, \quad [\kappa(a)]=M^0\,.
\end{equation}
The parameters $\widetilde{g}_t(a), \widetilde{g}_s(a)$ and $m(a)$ can be expressed in terms of the dimensionless bare parameters 
\begin{equation}
    g_t(a)^2 = a^{3-d} \widetilde g_t(a)^2\,, \qquad  g_s(a)^2 = a^{3-d} \widetilde g_s(a)^2\,, \qquad \overline{m}(a) = a m(a)\,. 
\end{equation}
Rather than $g_t$ and $g_s$, we reparameterize (as done in the main text) using $g^2 = g_s g_t$ and $c = g_t/g_s$.
In this way, the Hamiltonian can be written as
\begin{equation}
    H_{\rm KS} = \frac{c}{a} \hat H_{\rm KS} = \frac{c}{a} \left(\hat H_E + \hat H_B + \hat H_M + \hat H_K \right)
\end{equation}
where the individual rescaled terms are given by
\begin{align}
    \hat H_E &= \frac{g(a)^2}{2} \sum_{\vec{x}}\sum_{j=1}^d \hat E(\vec{x},j)^2
    \\
    \hat H_B &= -\frac{1}{2 g(a)^2} \sum_{p} (\hat P_p+\hat P_p^\dagger)
    \\
    \hat H_M &= \hat m(a) \sum_{\vec{x}} (-1)^{\vec{x}} \hat \psi^\dagger(\vec{x}) \hat \psi(\vec{x})
    \\
    \hat H_K &= \hat \kappa(a)\sum_{\vec{x}}\sum_{j=1}^d \eta(\vec{x})\left(\hat \psi^\dagger(\vec{x}) \hat U(\vec{x},j) \hat \psi(\vec{x}+\hat{e}_j) + {\rm h.c.} \right)\,.
\end{align}
and the rescaled bare parameters are $\hat m(a) = \frac{a}{c}m(a)$ and $\hat{\kappa}(a)=\kappa(a)/c$.
Note that the bare speed of light appears in the denominator of the expressions in  Ref.~\cite{Carena:2021ltu} because the incorrect definition of $g_s/g_t$ was used.
Written in this way, it is clear that $a_t = a/c$ is merely an overall scale factor of the Hamiltonian, and the dimensionless Hamiltonian we actually simulate is $\hat H_{\rm KS}$.
From this we see that one can only calculate dimensionless energies $\hat E_n = a_t E_n$.
Because $a_t = a_t(g, c, \hat m, \hat \kappa)$ is an unknown function of the coupling and quark mass, one must determine it through a scale setting procedure, analogous to the previous simple harmonic oscillator example.
It is important to note that this logic applies not only to energies, but to any quantity calculated on the lattice.
For example, if the operator $O$ has units $[O] = M^b$, then the operator one actually implements in a simulation is $\hat O = a^{b} O$, and the physical result is calculated via $O = [\hat O]/a^b$, where the lattice spacing $a$ is assumed to have been already determined via some scale setting procedure.

We now discuss the appearance of the bare parameter $\kappa(a)$ in the fermionic kinetic term, which is analogous to the need for different gauge couplings $g_t$ and $g_s$ multiplying $H_E$ and $H_B$.
The intuition for including $\kappa(a)$ can be drawn from studying lattice gauge theory actions on anisotropic Euclidean lattices, \emph{i.e.}, the spatial lattice spacing $a$ is different from the temporal lattice spacing $a_0$.
For pure gauge theory, this anisotropy requires using different gauge couplings multiplying purely spatial plaquettes and plaqeuttes oriented in the temporal direction; one can equivalently work with a common gauge coupling $g$ and a gauge anisotropy factor $\gamma_g$. 
For fixed $g$, one can tune $\gamma_g$ to determine a renormalization trajectory with fixed physical anisotropy $\xi = a/a_0$.
By carefully keeping track of these couplings, one can use the transfer matrix formalism to show the need for two independent gauge couplings in the Euclidean theory implies one also needs two different couplings, $g_t$ and $g_s$, in the pure gauge Kogut-Susskind Hamiltonian. 

In a similar way, as discussed in, \emph{e.g.}, Refs.~\cite{Umeda:2002vr, Nomura:2003in, Aarts:2014nba}, anisotropic simulations including fermions introduces a bare fermionic anisotropy factor $\gamma_f$, controlling the relative size of the fermionic hopping terms in the spatial and Euclidean-time directions. 
The fermionic anisotropy $\gamma_f$ is generally different from the gauge anisotropy $\gamma_g$, and both parameters need to be tuned such that the renormalized anisotropy $\xi = a/a_0$ ``seen'' by fermionic and gauge degrees of freedom are equal; this can be done by, \emph{e.g.}, studying the dispersion relation of gluonic states and fermionic states.
Analogous to the pure gauge case, by keeping track of the anisotropy factor in the transfer matrix derivation, one finds that the prefactor in front of the fermionic kinetic term $\kappa(a)$ in the Kogut-Susskind Hamiltonian is different from unity and needs to be included in the renormalization procedure.

\section{Comparing numerical tightness of PF and QSP error bounds}
\label{app:comparing_bounds}

In this section, we compare the numerical tightness of the PF and QSP error bounds used to prove \cref{thm:pf_E,thm:qsp_E} respectively for the specific choice of a single site anharmonic oscillator with the following Hamiltonian:
\begin{align}
    H = \frac{\phi^2}{2} + \frac{32\phi^4}{34} + \frac{\pi^2}{2} := H_\phi + H_\pi\,,
\end{align}
where $\phi$ denotes the field operator and $\pi$ denotes the conjugate momentum. Given a choice for number of qubits $n_q$ and hence the size of the Hilbert space ($2^{n_q}\times2^{n_q}$), we discretize $H$ using the scheme presented for instance in Refs.~\cite{Klco:2018zqz,Bauer:2021gek,Hariprakash:2023tla}. Let $N_{\mathrm{sim}}$ be used to denote either of the algorithmic parameters $\npf$ and $N_{\mathrm{QSP}}$. Thus, for a desired exponentiation error $\|\Delta_{\mathrm{sim}}\|$ (see \cref{eq:delta_sim}), and simulation time $t$ we can use \cref{eq:trotter_error_bound,eq:qsp_error_bound} to analytically compute sufficient choices $N_{\mathrm{sim}}$ for both PF and QSP such that the desired exponentiation error is achieved. We then empirically determine the lowest possible value of $N_{\mathrm{sim}}$, denoted by $N_{\mathrm{sim}}^{(\mathrm{empirical})}$ that results in the target exponentiation error $\|\Delta_{\mathrm{sim}}\|$. \cref{fig:ratio_vs_nq} compares the ratio $N_{\mathrm{sim}}^{(\mathrm{empirical})}/N_{\mathrm{sim}}$ for PF and QSP as a function of the number of qubits $n_q$ used to represent $H$ for the fixed values $\|\Delta_{\mathrm{sim}}\| = 10^{-4}$ and $t=1$. The key observation we make is that while both ratios deviate from the ideal value of $1$ as we increase $n_q$, the QSP error bound is closer to the numerically determined result than the PF bound. This numerical experiment is thus a simple example illustrating that the QSP error bounds are generally tighter than the PF error bounds used in this work.

\begin{figure}
    \centering
    \includegraphics[width=0.5\linewidth]{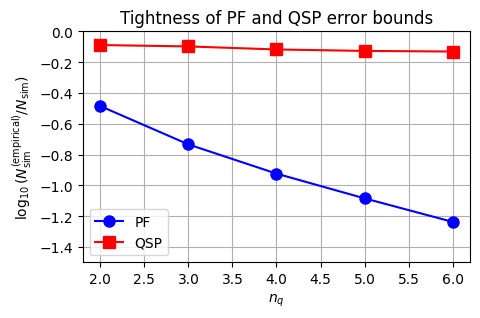}
    \caption{Comparison of the ratios of the analytically determined algorithmic parameters $N_{\mathrm{sim}}$ with its numerically optimized counterpart $N_{\mathrm{sim}}^{(\mathrm{empirical})}$ for PF and QSP at a fixed target exponentiation error $\|\Delta_{\mathrm{sim}}\| = 10^{-4}$ and simulation time $t=1$.}
    \label{fig:ratio_vs_nq}
\end{figure}

\end{document}